\newif\iffull
\spnewtheorem{claim}{Claim}{\bfseries}{\rmfamily}
\renewcommand{\leadsto}{\rightarrow}  
\newcommand{\gridminor}{grid-major\xspace}
\newcommand{\Gridminor}{Grid-major\xspace}
\newcommand{\SLh}{\ensuremath{\textit{SLh}}}
\newcommand{\PLh}{\ensuremath{\textit{PLh}}}
\newcommand{\VRh}{\ensuremath{\textit{VRh}}}
\newcommand{\HH}{\ensuremath{\textit{Hh}}}
\newcommand{\sHH}{\ensuremath{\textit{sHh}}}
\newcommand{\gmh}{\ensuremath{\textit{GMh}}}
\newcommand{\sgmh}{\ensuremath{\textit{sGMh}}}
\newcommand{\pw}{\ensuremath{\textit{pw}}}
\newcommand{\calR}{\cal R}
\newcommand{\half}{\tfrac{1}{2}}
\title{Homotopy height, \gridminor height and graph-drawing height\thanks{Erin Chambers was supported in part by NSF grants CCF-1614562 and DBI-1759807. David Eppstein was supported in part by NSF grants  CCF-1618301 and CCF-1616248. Arnaud de Mesmay was supported in part by grants ANR-18-CE40-0004-01 (FOCAL) and  ANR-16-CE40-0009-01 (GATO).
This work began at the Fifth Annual Workshop on Geometry and
Graphs, at the Bellairs Research Institute of McGill University.}}
\author{Therese Biedl\inst{1} \and Erin Wolf Chambers\inst{2} \and David Eppstein\inst{3} \and Arnaud De Mesmay\inst{4} \and Tim Ophelders\inst{5}}
\authorrunning{Biedl, Chambers, Eppstein, de Mesmay, Ophelders}
\institute{David R. Cheriton School of Computer Science, University of Waterloo\and
Department of Computer Science, Saint Louis University\and
Computer Science Department, University of California, Irvine\and
Univ. Grenoble Alpes, CNRS, Grenoble INP, GIPSA-lab, 38000 Grenoble, France\and
Department of Computational Mathematics, Science and Engineering, Michigan State University}
\begin{document}
\maketitle

\begin{abstract}
It is well-known that both the pathwidth and the outer-planarity
of a graph can be used to obtain lower bounds on the height of a 
planar straight-line drawing of a graph.  But both bounds fall short
for some graphs.  In this paper, we consider two other parameters,
the (simple) homotopy height and the (simple) \gridminor height.
We discuss the relationship between them and to the other parameters,
and argue that they give lower bounds on the straight-line drawing
height that are never worse than the ones obtained from pathwidth
and outer-planarity.
\end{abstract}

\section{Introduction}

Straight-line drawings of planar graphs are one of the oldest and
most intensely studied problems in graph drawing
\cite{Wag36,Fary48,Stein51,Tut63,FPP90,Sch90}.    
It has been known since the 1990s that every planar graph has a 
straight-line drawing of height $n{-}1$ \cite{FPP90,Sch90} and that
some planar graphs require height $\frac{2}{3}n$ if the outer-face
must be respected \cite{DLT84,FPP88}.
Nevertheless many problems surrounding the height of
planar straight-line drawings remain open; it is not even known
whether minimizing height is NP-hard (although the problem is
NP-hard when edges may only connect adjacent rows \cite{HR92}
and it is fixed-parameter tractable in the output height \cite{DFK+08}).

One of the chief obstacles is that very few tools are known for
arguing that a planar graph requires a certain height in all planar
straight-line drawings.  Two graph parameters are  commonly used for this: the pathwidth (as
the height is at least $\pw(G)$ \cite{FLW03,DFK+08}) 
and the outer-planarity (as the height is at least
twice the outer-planarity minus 1 \cite{DLT84,FPP88}); for detailed definitions see
Section~\ref{sec:definitions}.  However, both parameters may be constant in graphs that require linear height~\cite{Bie11}%
\iffull (see also Fig.~\ref{fig:contractHeight}(b))\fi.

In this paper, we study two other graph parameters, the \emph{homotopy
  height} $\HH(G)$ and the \emph{\gridminor height} $\gmh(G)$ and their simple variants $\sHH(G)$ and $\sgmh(G)$.  Roughly speaking, the homotopy height is defined as the minimum $k$ such that a sequence of paths of length at most $k$ sweep the graph\footnote{We note that there are \emph{many} possible variants of homotopy height, all quantifying in slightly different ways the optimal way to sweep a planar graph with a curve. 
  We have chosen here one particular variant that seems to be most suitable for graph drawing purposes, and we only study it for triangulated graphs. We refer the reader to other recent works on this parameter~\cite{cl-othoah-09,hnss-htwyd-16,cmo-coh-18} for further discussion.\iffull of other variants and their complexity.\fi}, while the \gridminor height is the minimum height of a grid of which the
graph is a minor. 
Fig.~\ref{fig:examples} illustrates this and graph parameters used in the paper.
Our simple variants add simplicity constraints to the paths involved in the sweeping or the columns of the \gridminor representation. We show that despite their apparent differences, homotopy height and \gridminor height are equal, and that both the normal and the simple variants are lower bounds on the
graph drawing height.  More precisely, any planar
triangulated graph $G$ has
\begin{equation}
\label{equ:result}
pw (G)
\stackrel{(*)}{\leq}
\HH(G)
= 
\gmh (G)
\stackrel{(*)}{\leq}
\sHH(G)
=
\sgmh (G)
\stackrel{(*)}{\leq} 
\VRh(G)
= \SLh (G),
\end{equation}
where $\VRh(G)$ and $\SLh(G)$ are the minimum height of a 
visibility representation and straight-line drawing
of $G$.  As we will show, the 
inequalities marked with $(*)$ are strict for some planar graphs.
More strongly, the parameters separated by these inequalities
can differ by non-constant factors from each other.

In particular, the (simple) \gridminor height and homotopy-height can both
serve as lower bounds on the height of a straight-line drawing.
For some graphs (e.g.~the one in Fig.~\ref{fig:contractHeight}(b))
this gives a better lower bound than can be achieved via
pathwidth, though not a better lower bound than what was known  \cite{Bie11}.
We should mention that the outer-planarity $op$ is also related to these parameters via 
\begin{equation}
\label{equ:result2}
2op(G)-1
\stackrel{(*)}{\leq} 
\gmh(G),
\end{equation}
so the homotopy-height and \gridminor height can
also can replace outer-planarity as 
lower-bound tool for graph-drawing height, and in fact, provide a convenient
vehicle for unifying both tools.
While these results have not yet led us to new lower bound
results for straight-line drawings, they provide new tools which had not been 
considered previously, and suggest a promising new line of inquiry.

\begin{figure}[ht]
\begin{subfigure}[b]{0.195\linewidth}
\includegraphics[width=\linewidth,page=1]{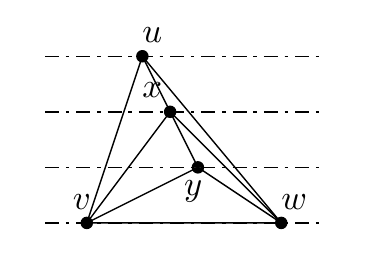}
\caption{}
\end{subfigure}
\begin{subfigure}[b]{0.195\linewidth}
\includegraphics[width=\linewidth,page=2]{examples.pdf}
\caption{}
\end{subfigure}
\begin{subfigure}[b]{0.195\linewidth}
\includegraphics[width=\linewidth,page=3]{examples.pdf}
\caption{}
\end{subfigure}
\begin{subfigure}[b]{0.195\linewidth}
\includegraphics[width=\linewidth,page=4]{examples.pdf}
\caption{}
\end{subfigure}
\begin{subfigure}[b]{0.195\linewidth}
\includegraphics[width=\linewidth,page=5]{examples.pdf}
\caption{}
\end{subfigure}
\caption{The same graph with (a) a straight-line drawing, (b) a flat visibility representation, (c) a simple \gridminor representation, (d) the
corresponding contact-representation, (e) a simple homotopy.  The height is always four.
The moves of the homotopy are: a face-flip at $\{u,v,x\}$, a boundary-move,
an edge-slide at $(v,y)$, a face-flip at $\{x,y,w\}$ and face-flip at $\{u,x,w\}$.
}
\label{fig:examples}
\end{figure}

Our results naturally raise the question of the complexity of computing these parameters. Computing the optimal height of homotopies is conjectured but not known to be NP-hard~\cite{cl-othoah-09}; 
even arguing that it is in NP is non-trivial~\cite{cmo-coh-18}, although it has a logarithmic approximation~\cite{hnss-htwyd-16,cmo-coh-18}.  Our equalities imply that computing the homotopy-height $k=\HH(G)$
is (non-uniform) fixed-parameter tractable in $k$.  Indeed, it equals \gridminor height, which is closed under taking minors. Minor testing can be expressed
in second-order logic, and the graphs of bounded \gridminor height have bounded pathwidth,
so it follows from graph minor theory and Courcelle's theorem~\cite{Courcelle} that for any $k$, the graphs with \gridminor height $k$
\iffull and homotopy-height $k$\fi{} can be recognized in linear time. However, this method
uses the (unknown!) forbidden minors for \gridminor height; finding them remains
an open problem of independent interest. We can also show more directly using Courcelle's theorem that simple \gridminor height is fixed-parameter tractable.

All our results are only for \emph{triangulated planar graphs},
planar graphs where all faces (including the outer-face) are triangles.
This is not a big restriction for graph drawing height, 
as any planar graph $G$ is a subgraph of a triangulated planar graph $G'$ that
has a straight-line drawing of the same height, up to a small additive term.  
(Obtain $G'$ by triangulating the convex hull of a drawing of $G$ and adding
three vertices that surround the drawing.)  Most of our parameters
naturally carry over to non-triangulated planar graphs, but some
parameters would be much more cumbersome to define and work with for non-triangle faces. 

\iffull\else
For space reasons we defer our algorithmic results and many proof details to the full version of this paper.
\fi

\section{Definitions}
\label{sec:definitions}

All graphs in this paper are \emph{planar}: they can be drawn in the plane
without crossings.  Their \emph{faces} are maximal connected regions
that remain when removing the drawing); we call the unbounded face the
\emph{outer-face}.  
%
Unless otherwise stated, we study only \emph{simple} graphs that have no loops and at most one edge between any two vertices, and we almost
always study \emph{triangulated graphs}, where all faces
(including the outer-face) are bounded by a simple cycle of length 3.
Such a graph is \emph{maximal planar}: no edge can be added without violating
simplicity or planarity. Its planar embedding is unique up to the choice of outer-face.

Let $G$ be a triangulated graph with fixed outer-face $f$.
We define \emph{outer-planarity} $op(G)$ via a removal process
as follows:  In a first step, remove all vertices on the outer-face.
In each subsequent step, remove all vertices on the outer-face
of the remaining graph.  Then $op(G,f)$ is the number of steps until
no vertices remain, and $op(G)$ is the minimum of $op(G,f)$ over all
choices of face $f$.

\paragraph*{Graph-drawing parameters:} 

The \emph{$W\times H$-grid} has
vertices at the \emph{grid-points} $\{1,\dots,W\} \times \{1,\dots,H\}$  and
an edge between any two grid-points of distance one.
A \emph{straight-line drawing} of $G$ consists
of a mapping of $G$ to grid-points such that if all edges are drawn as
straight-line segments between their endpoints, no two edges cross and no
edge overlaps a non-incident vertex.  Every planar
graph has such a drawing \cite{Wag36,Fary48,Stein51} whose supporting grid has height at most $n-1$ \cite{FPP90,Sch90}.
We use $\SLh(G)$ to denote the smallest height $h$ of 
a straight-line drawing of $G$.

A \emph{flat visibility-representation} of $G$ consists of an assignment of
a horizontal segment (\emph{bar}) to every vertex  of $G$ such that for any 
edge $(v,w)$ there exists a \emph{line of visibility}, i.e., a line segment
connecting bars of $v,w$ that intersects no other bar.
In the original definition lines of visibility had to be horizontal;
for us it will be more convenient to allow both horizontal and vertical
lines of visibility, as long as they do not cross.
Every planar graph has a flat visibility representation
\cite{Wis85,TT86,RT86}.
We use $\VRh(G)$ to denote the smallest height $h$ of such a representation,
presuming all bars reside at positive integral $y$-coordinates.

\paragraph*{Width parameters:}
%
%
A \emph{path decomposition} of a graph $G$ is a collection
$X_1,\dots,X_L$ of vertex-sets (\emph{bags}) that satisfies the following:
each vertex $v$ appears in at least one bag, the bags containing $v$ are
consecutive, and for each edge $(v,w)$ at least one bag contains both $v$
and $w$.  The \emph{width} of a path decomposition of $G$ is the largest bag-size
minus 1, and the \emph{pathwidth} $\pw(G)$ of a graph is the smallest possible
width of a path decomposition.

We introduce another width parameter which is quite natural, but 
to our knowledge has not been studied before.  
A \emph{grid-representation}
of a graph $G$ consists of a $W\times H$-grid where each gridpoint is labelled
with one vertex of $G$ in such a way that (1) every vertex appears at least
once as a label, (2) for any vertex $v$ the grid-points that are labelled $v$
induce a connected subgraph of the grid, and (3) for any edge $(v,w)$ of $G$
there exists a grid-edge where the ends are labelled $v$ and $w$.  In
particular, if $G$ has a grid-representation then it is a minor of the
$W\times H$-grid.  
Let $\gmh(G)$ be the 
\emph{\gridminor height}, i.e., the smallest $h$ such that $G$ has
a grid-representation where the grid has height $h$.

We say that a \gridminor representation of $G$ is \emph{simple} if in 
every column $c$ of the grid and for any vertex $v$ of $G$, the nodes 
labeled $v$ in $c$ form a \iffull connected subgraph (which is thus a path)\else path\fi.  
The \emph{simple 
\gridminor height} of $G$, denoted $\sgmh(G)$, is the smallest $h$
such that $G$ has a simple \gridminor representation of height $h$.

A \gridminor representation of height $h$ can be viewed, equivalently, as a \emph{contact-rep\-re\-sen\-ta\-tion}
with integral orthogonal polygons as follows:  Assign to every vertex $v$ 
the polygon $P(v)$ that we obtain if we replace every grid-point
labelled $v$ with a unit square centered at that grid-point and take their
union.  
Since the
grid-representation uses integral points, the coordinates of sides of $P(v)$ are halfway between integers.
See Fig.~\ref{fig:examples}(d).  
We get a set of interior-disjoint
orthogonal polygons with integer edge-lengths whose union is a rectangle 
of height $h$, where $(v,w)$ is an edge of $G$ if and only if $P(v)$ and $P(w)$
share at least one unit-length segment on their boundaries.  Conversely
any contact-representation with integral orthogonal polygons 
that uses all points inside a bounding rectangle
can be viewed as a \gridminor representation.  A simple \gridminor representation
becomes a contact representation with \emph{$x$-monotone polygons}
(every vertical line intersects the polygon in an interval) and vice versa.
Contact-representations
of graphs have been studied extensively (see e.g.~\cite{ABF+13} and the
references therein), but to our knowledge the question of the required
height of such representations has not previously been considered.

\paragraph*{Homotopy parameters:}
\iffull A homotopy of a triangulated graph $G$ is a sequence of paths, connected by elementary moves, that together sweep the entirety of the graph $G$. We will allow these paths to have moving endpoints on the outer-face of $G$. Precisely, a\else A \fi \emph{(discrete) simple homotopy} is defined for a planar triangulated graph $G$ with a fixed outer-face $\{u,v,w\}$, and it consists of a sequence $h_0,\dots,h_W$ of walks in $G$ (we call these \emph{curves}) such that:
\begin{enumerate}
\item $h_0$ and $h_W$ are trivial curves at two distinct vertices of the outer-face, say $u$ and $v$.
      \item The vertices $u$ and $v$ partition the outer-face into two subpaths $s(uv)$ and $t(uv)$. For $0\leq i\leq W$, the curve $h_i$ starts on $s(uv)$ and ends on $t(uv)$. 
\item For all $0\leq i < W$
  we can obtain $h_{i+1}$ from $h_i$ with a face-flip, edge-slide, a boundary-move or a boundary-edge-slide%
\iffull
; see Fig.~\ref{fig:homotopymoves}.
\else
.
\item\label{item:simple} Each curve $h_i$ is a simple path
and for any $0\leq i < j \leq W$, if vertex $v$ belongs to $h_i$ and $h_j$
then it also belongs to all curves in between. 
\fi

\end{enumerate}

\iffull
\begin{figure}\hspace*{\fill} %
\includegraphics[page=2]{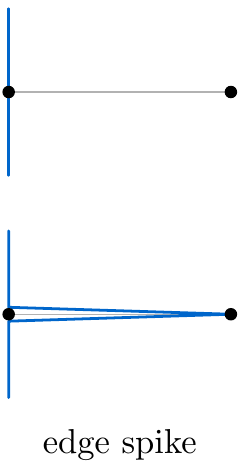}\hspace*{\fill} 
\includegraphics[page=4]{movesHH}\hspace*{\fill} 
\caption{An illustration of the types of moves in a discrete homotopy.}
\label{fig:homotopymoves}
\end{figure}

\todo[inline]{TB: The second kind of face-flip is weird!  I guess this matches
the definition, but if you disallow the first time of face-flip (resulting
in a spike) in this
situation then the proof of Lemma 5 needs a bit more work.  Specifically,
Fig. 5(c) is a face-flip for simple homotopies, but might be ``half''
of this second kind of face-flip if the vertex below the junction also
shows up higher up again.}

\todo[inline]{A:I think that we should disallow the second kind of face-flip because it is the concatenation of a spike and the first kind of face flip. Actually I think that the definition already disallows it (``that contains edges of $h_i$''), so I removed it.}

\fi

Here a
\emph{face-flip} consists of picking an inner face $\{x,y,z\}$ such that the subsequence $x$-$y$ is in $h_i$, and replacing the sub-path $x$-$y$ by $x$-$z$-$y$ to obtain $h_{i+1}$. The reverse move, going from $x$-$z$-$y$ to $x$-$y$, is also allowed.
An \emph{edge-slide}\footnote{Edge-slides are typically not allowed in discrete homotopies, but the result of one edge-slide is the same as flipping two inner faces consecutively. 
Thus, allowing edge-slides only results in an additive difference of at most one for the homotopy height.} consists
of picking an edge $e=(x,y)$ adjacent to two inner faces $\{x,y,z\}$ and $\{x,y,t\}$, such that the subsequence $z$-$x$-$t$ is in $h_i$. Then 
replace the subpath $z$-$x$-$t$ in $h_i$ by $z$-$y$-$t$ to obtain $h_{i+1}$.  
A \emph{boundary-move} consists of picking an edge $e=(x,y)$ on the outer face, and, if $e \in s(uv)$, and $x$ is the start of $h_i$, it appends $y$ so that it becomes the new starting point (thus replacing $x$ by the subsequence $y$-$x$). If $e \in t(uv)$ and $x$ is the end of $h_i$, it appends $y$ at the end. The reverse operations are also allowed.
A \emph{boundary-edge-slide} consists of picking an edge $e=(x,y)$ on the outer face adjacent to an inner face $\{x,y,z\}$, and, if $e \in s(uv)$ and $h_i$ starts with $x$-$z$, we flip $\{x,y,z\}$ and remove $e$, i.e., we replace the starting subsequence $x$-$z$ by $y$-$x$. The symmetric operation for edges on $t(uv)$ is also allowed. (Observe that this boundary-edge-slide is the same as flipping a face and removing the boundary edge with a boundary move.). 
\iffull
See Fig.~\ref{fig:homotopymoves}.
\else
See Fig.~\ref{fig:examples}(e).
\fi



The \emph{height} of a simple homotopy is the length of the longest path $h_i$,
counting as path-length the number of vertices.  Let $\sHH(G,f)$ be the
minimum height of a simple homotopy of $G$ that uses $f$ as outer-face, and
set $\sHH(G)$ (the \emph{simple homotopy height}) to be the minimum of $\sHH(G,f)$ 
over all choices of outer-faces $f$.  (Since we only study triangulated
graphs the rotation scheme is unique and so this covers all possible
planar embeddings.)  

The definition of a (non-simple) homotopy is obtained by removing the simplicity assumption on the curves $h_i$, and allowing two other kinds of moves (spikes and unspikes) leveraging the non-simplicity of the curves. For technical reasons and to obtain a maximal generality, we will also relax the conditions on the endpoints and the starting and ending curves. Since the precise definition is somewhat technical, we postpone it to Section~\ref{sec:hhgmh}\iffull\else~and the full version\fi.

\paragraph*{Some simple results:}
We briefly review some relationships that are well-known, or easily derived.

\begin{itemize}
\item $\pw(G)\leq \gmh(G)$ since a $W \times H$-grid has pathwidth at most $H$ 
	and pathwidth is closed under taking minors.
\item Obviously $\gmh(G)\leq  \sgmh(G)$.
\item $\sgmh(G)\leq \VRh(G)$ since a flat visibility
	representation can easily be converted into a simple \gridminor
	representation by assigning label $v$ to all grid-points 
	of the bar of $v$ as well as all grid-points that this bar can
	see downward or rightward without intersecting other bars or
	non-incident edges.   See Fig.~\ref{fig:examples}(b-c).
\item $\VRh(G)=\SLh(G)$  since flat visibility
	representations can be transformed into straight-line drawings
	of the same height, and vice versa (\cite{Bie14}.
	using \cite{PT04,EFLN06}).
\item Finally we have $2op(G)-1\leq \gmh(G)$.  To this end, assume that
	we have a \gridminor representation $\Gamma$ of $G$ of height $h$.
	Observe that the grid-graph $\Gamma$ has outer-planarity 
	$\lceil  h/2 \rceil$.  Since outer-planarity does not increase when
	taking minors it follows that $op(G)\leq \lceil h/2 \rceil \leq \frac{h+1}{2}$.
\end{itemize}

\section{Homotopy-height and \gridminor height}

The above inequalities fill in most of the chain in 
Equation~\ref{equ:result}, but one key new part is missing: how does
the (simple) homotopy height relate to the (simple) \gridminor height?

\subsection{Simple \gridminor height and simple homotopy height}

\begin{lemma}
\label{lem:HH_gm}
\label{lem:HHsm_gms}
For any triangulated planar graph $G$ we have
$\sgmh(G)\leq \sHH(G)$.
\end{lemma}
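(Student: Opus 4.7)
The plan is to transform the simple homotopy $h_0,\dots,h_W$ of height $k$ into a simple grid-minor representation of $G$ of height $k$ by dedicating one or more columns of the grid to each path $h_t$. In the column associated with $h_t = v_1 {-} \cdots {-} v_{n_t}$ I would assign the $i$-th vertex a contiguous row-interval $[L_i, R_i]$ that partitions $\{1,\dots,k\}$ in path order, which is feasible since $n_t \leq k$. Vertical adjacency of path-consecutive vertices in such a column then represents every edge of $G$ that appears on some $h_t$, and since the homotopy sweeps the whole triangulation every edge of $G$ appears on at least one $h_t$ or is produced as an incidental horizontal contact by the moves described below.

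For each move from $h_t$ to $h_{t+1}$ I would choose the new row intervals so that the interval of every surviving vertex overlaps between consecutive columns, guaranteeing $4$-connectedness of its labeled region. An edge-slide $z{-}x{-}t \mapsto z{-}y{-}t$ is realized by copying $x$'s interval to $y$ and leaving the other intervals unchanged; the edge $(x,y)$ is then represented by horizontal contact at the column boundary. Face-flips and boundary-moves that \emph{remove} a vertex need no preparation, since the cells freed by the removed vertex can simply be absorbed by one of its path-neighbors.

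Moves that \emph{add} a vertex---face-flips $x{-}y \mapsto x{-}z{-}y$ and boundary-moves prepending an outer-face vertex---require a path-neighbor of the new vertex to occupy at least two cells in the source column, so that one cell can be allocated to the new vertex without breaking continuity. Because the path length increases strictly, the height bound forces $n_t < k$, so some vertex of $h_t$ has $\geq 2$ cells, though perhaps not one of the relevant neighbors. To ferry a spare cell to where it is needed, I insert \emph{transition columns} in which $h_t$ is unchanged but a single internal boundary moves by one row; such a boundary slide preserves continuity because the shrinking vertex's new interval is contained in its old one and dually for the growing vertex. Chaining slides from the donor vertex toward the insertion point keeps every intermediate interval non-empty and delivers one cell exactly where the add-move can then be executed in one further column.

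The hardest case, and the main obstacle of the proof, is the boundary-edge-slide $x{-}z{-}v_3{-}\cdots \mapsto y{-}x{-}v_3{-}\cdots$, which simultaneously adds $y$, removes $z$, and shifts $x$ from position $1$ to position $2$; the non-path edge $(y,z)$ of the face $\{x,y,z\}$ must still be represented, as well as the new path edge $(x,v_3)$. Following the decomposition indicated in the paper, I would split this move into a boundary-move prepending $y$ (with $x$ prepared to have at least two cells, using the slide mechanism above) followed by a face-flip removing $z$ via the triangle $\{x,z,v_3\}$, which must be a face of $G$ since the edge $(x,v_3)$ is used on the resulting path. The intermediate path $y{-}x{-}z{-}v_3{-}\cdots$ has length $n_t + 1 \leq k$, so the decomposition is possible precisely when $n_t < k$; this slack condition, together with representing $(y,z)$ by horizontal contact between consecutive columns via careful placement in the intermediate column, is the central technical step. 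Once all five move types are handled, the resulting grid has height exactly $k$, every vertex of $G$ labels a connected $x$-monotone region, and every edge of $G$ is represented, yielding the claimed inequality $\sgmh(G) \leq \sHH(G)$.
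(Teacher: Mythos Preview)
Your approach mirrors the paper's: build the grid column by column following the homotopy, with each column's labels given by the current curve in order, and insert extra ``staircase'' columns (your transition columns) to migrate a spare row toward a neighbor of a vertex about to be inserted. The paper phrases this via $x$-monotone polygons in a contact representation, but the content is identical.

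The one genuine gap is your boundary-edge-slide treatment. You claim $\{x,z,v_3\}$ must be a face because the edge $(x,v_3)$ appears on $h_{i+1}$; but the existence of the edge $(x,v_3)$ says nothing about which triangle lies on either side of $(z,v_3)$, so $\{x,z,v_3\}$ need not be a face and the proposed face-flip is illegal. Independently, your intermediate curve $y$-$x$-$z$-$v_3$-$\cdots$ has length $|h_i|+1$; since under your reading a boundary-edge-slide preserves length, $|h_i|$ can equal $k$, and your ``slack condition'' $n_t<k$ is never justified. Both problems disappear once you notice that the paper's textual definition contains a typo: the move replaces the prefix $x$-$z$ by $y$-$z$, not by $y$-$x$, as is clear from the parenthetical ``this boundary-edge-slide is the same as flipping a face and removing the boundary edge with a boundary move'' and from case~(f) in the proof of the converse lemma. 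With the correct move only the endpoint changes, and you can handle it exactly like an edge-slide---copy $x$'s interval to $y$ in the next column, realizing $(x,y)$ as horizontal contact---with no decomposition and no extra slack needed.
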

\begin{proof}
\iffull\else(Sketch) \fi
Let $h_0,\dots,h_W$ be a simple homotopy of height $k=\sHH(G)$.
The rough idea is to label a $W\times k$-grid by giving the gridpoints in
the $i$th column the labels of the vertices in $h_i$, in top-to-bottom-order,
and adding some duplicate copies of vertices in $h_i$ to fill the column.
However, we must insert more columns in between to ensure the properties of a
simple \gridminor representation.

It will be easier to describe this by giving a contact-representation
of $G$ where all polygons are $x$-monotone and the height is $k$.
Curve $h_0$ is a vertex $u$; we initialize $P(u)$ as a $1 \times k$ rectangle.
Now assume
that for some $i\geq 0$, we have built a contact representation $\Gamma_i$
of the graph that was swept by $h_0,\dots,h_i$.  Furthermore, the
right boundary of the bounding box of $\Gamma_i$ contains sides of
the vertices in $h_i$, in order.  
\iffull
We explain how to expand $\Gamma_i$ 
rightwards to represent $h_{i+1}$, depending on the move that was used
to obtain $h_{i+1}$, see also Fig.~\ref{fig:moves}.

\begin{itemize}
\item Assume first that $h_{i+1}$ was obtained from an edge-slide along $(x,y)$,
	thus $h_i$ contained $t$-$x$-$z$ for some vertices $t,z$ and this
	was replaced by $t$-$y$-$z$ in $h_{i+1}$.  Expand $\Gamma_i$ 
	rightward by one unit, and expand all polygons of vertices in $h_i$
	rightward by one unit, except for polygon $P(x)$.  The pixels 
	to the right of $P(x)$ are used to start a new polygon $P(y)$.
\item A boundary-edge slide is handled in exactly the same way; the only
	difference is that $x$ and $y$ are vertices on the outerface and vertex $z$ does not exist.
\item Assume now that $h_{i+1}$ was obtained from a face-flip at
face $\{a,b,c\}$.  Assume further that it was \emph{increasing}, i.e.,
$|h_{i+1}|>|h_i|$.  Up to renaming,
path $a$-$b$ in $h_i$ was replaced
by path $a$-$c$-$b$ in $h_{i+1}$.
So $|h_{i}|=|h_{i+1}|{-}1\leq k{-}1$, which means that at least one
polygon $P(d)$ for some $d\in h_i$ has height 2 or more on the right
boundary.  Insert an (upward or downward) staircase that shifts this
extra height to $a$ or $b$ while keeping all polygons of $h_i$ on the
right boundary, see Fig.~\ref{fig:moves}.  This may take
up to $k$ units of width, but does not affect the height.

Now one of $P(a),P(b)$ has height 2 or more on the right
boundary.  Expand the drawing
	rightward by one more unit, and expand  all polygons
	rightward by one unit, except that at the horizontal
	line between $P(a)$ and $P(b)$ we remove one pixel
	from the polygon that has height at least 2 and start
	polygon $P(c)$ there.

\item Assume now that $h_{i+1}$ was obtained from a face-flip at
face $\{a',b',c'\}$  that was \emph{decreasing}, i.e.,
$|h_{i+1}|<|h_i|$.  Up to renaming, $h_i$ contains $a'$-$c'$-$b'$
which was replaced by $a'$-$b'$ in $h_{i+1}$.  Expand $\Gamma_i$
	rightward by one unit, and expand all polygons of vertices in $h_i$
	rightward by one unit, except for polygon $P(c')$.  The pixels 
	to the right of $P(c')$ added to $P(a)$ or $P(b)$.
\item An (increasing or decreasing) boundary-move is handled similarly; the only
	difference is that $c=v$ or $c=w$ and $b$ does not exist.
\end{itemize}
\else
Fig.~\ref{fig:moves} sketches how to expand $\Gamma_i$ rightwards, depending
on the next move used for the homotopy; full details are in the full version.
\fi
\todo{TB: time-permitting, I want to add an illustration what to do for a spike and maybe also show the corresponding graph with the curves in it.}

\begin{figure}[ht]
\hspace*{\fill}
\includegraphics[width=0.45\linewidth,page=2]{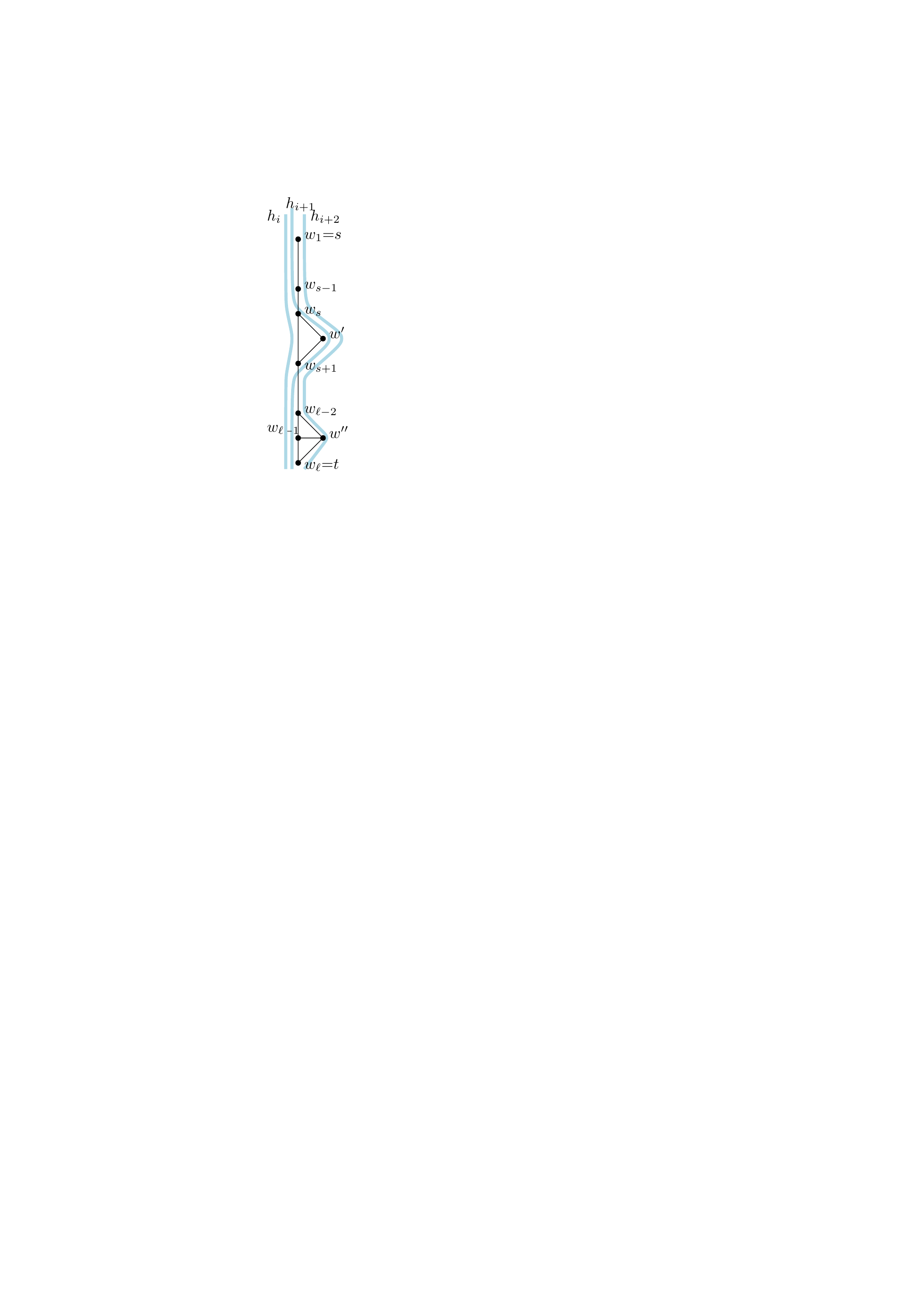}
\hspace*{\fill}
\caption{Converting a discrete simple homotopy into a contact-representation.}
\label{fig:HH_to_gm}
\label{fig:step_down}
\label{fig:moves}
\end{figure}

In all cases the polygons remain connected and are $x$-monotone.
Furthermore we realized exactly those incidences that were added to the
graph when sweeping to $h_{i+1}$, and the right boundary contains exactly
the polygons of vertices of $h_{i+1}$, in order.
Therefore, repeating gives a contact-representation of height $k$ that uses $x$-monotone polygons, and thus the desired simple \gridminor representation.
\end{proof}

\begin{lemma}
\label{lem:gms_HHsm}
For any triangulated planar graph $G$ we have
$\sHH(G)\leq \sgmh(G)$.
\end{lemma}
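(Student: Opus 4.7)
My plan is to invert the construction of Lemma~\ref{lem:HHsm_gms}. Given a simple \gridminor representation of $G$ of height $k=\sgmh(G)$, I would view it as a contact representation by connected $x$-monotone orthogonal polygons filling an axis-aligned rectangle of height $k$. After possibly relabeling the outer-face vertices and padding the rectangle with extra columns of $P(u)$ on the left and $P(v)$ on the right, I assume the leftmost column is entirely $P(u)$ and the rightmost is entirely $P(v)$; then $P(w)$ fills the rest of the boundary, and the two outer-face subpaths $s(uv), t(uv)$ are identified with the bottom and top boundaries.

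For each column $c$ of the grid, I define $h_c$ as the sequence of distinct vertex labels read down column $c$ from the $s(uv)$-side to the $t(uv)$-side. Simplicity of the representation forces each vertex's labels within a column to be contiguous, so $h_c$ is a sequence of distinct vertices; each consecutive pair comes from vertically adjacent grid cells and thus realizes an edge of $G$, making $h_c$ a simple walk of length at most $k$, with endpoints on $s(uv)$ and $t(uv)$. In particular $h_0 = u$ and $h_W = v$ are trivial.

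To transform $h_c$ into $h_{c+1}$, I would examine the subdivision corners lying on the half-integer line $x = c + \tfrac12$. A key observation is that since $G$ is triangulated, and the dual of the polygon subdivision is $G$, each internal corner has exactly three incident polygons---an $\ell$-polygon corner would correspond to an $\ell$-face of $G$. Hence every internal event at $c+\tfrac12$ has the form ``polygon $P(y)$ starts or ends between $P(x)$ and $P(z)$'', corresponding to a single face-flip on face $\{x,y,z\}$ of $G$; events on the top or bottom boundary correspond analogously to boundary-moves or boundary-edge-slides. No X-junctions arise, so edge-slides are not needed in this direction of the argument.

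The main obstacle is to serialize the events at each $c+\tfrac12$ into a sequence of moves whose intermediate curves all have length at most $k$. Since each face-flip changes the length by $\pm 1$, I would process all polygon-ending moves first, then all polygon-starting moves, keeping intermediate lengths bounded by $\max(|h_c|, |h_{c+1}|) \le k$. The hard part will be verifying that such an ordering is always feasible---each removal must locate the requisite triangular face of $G$ among the cascade of adjacencies opened up by earlier removals---but a careful case analysis, processing events in geometric $y$-order, should confirm that the required face is available at each step. Simplicity of the resulting homotopy follows immediately: since each connected $x$-monotone polygon $P(v)$ occupies a contiguous range of columns, the vertex $v$ appears in a contiguous block of curves, satisfying the consecutiveness requirement of simple homotopy.
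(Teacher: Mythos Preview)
Your overall idea---read curves off the columns of the contact representation---is the same as the paper's, but two key steps are missing and one claim is incorrect.

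\textbf{Edge-slides do occur.} Your assertion that ``no $X$-junctions arise, so edge-slides are not needed'' conflates two different things. Triangulation does rule out degree-$4$ junctions, but it does \emph{not} rule out a vertical side whose two ends are both (horizontal) junctions: polygon $P(x)$ on the left, $P(y)$ on the right, with $P(z)$ above and $P(t)$ below. Here $h_c$ contains $z\text{--}x\text{--}t$ and $h_{c+1}$ contains $z\text{--}y\text{--}t$; this is precisely an edge-slide along $(x,y)$. Your ``endings first, startings second'' ordering cannot simulate it: to remove $x$ first you would need a face $\{z,x,t\}$, which need not exist; the only available faces are $\{x,y,z\}$ and $\{x,y,t\}$, and flipping them in either order produces an intermediate curve $z\text{--}x\text{--}y\text{--}t$ (or $z\text{--}y\text{--}x\text{--}t$) of length $|h_c|+1$. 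If $|h_c|=k$ this exceeds the target height. So the serialization argument breaks exactly when it matters.

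\textbf{The paper's fix.} Rather than serialize multiple events per column, the paper first modifies the contact representation (without changing its height) so that (i) every interior junction is horizontal and (ii) no two interior vertical sides share an $x$-coordinate. After this, there is at most one vertical side between any two consecutive integer lines, and a short case analysis shows that the single resulting change is always one of the allowed moves---including, when both ends of that side are horizontal junctions, an edge-slide. This preprocessing is the missing idea.

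\textbf{The boundary setup is not free.} Your ``padding with extra columns of $P(u)$ and $P(v)$'' assumes you already know which three vertices sit on the boundary of the rectangle and that two distinct ones can be isolated on the left and right. The paper spends several claims establishing exactly this (that at least two vertices touch each of the left/right boundaries, that exactly three touch the boundary in total, and that one can arrange distinct singletons on the left and right while preserving $x$-monotonicity). Simply padding can break $x$-monotonicity of a polygon that touches the left boundary both above and below another polygon, so this step also needs the care the paper gives it.
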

\begin{proof}
\iffull\else(Sketch) \fi
Fix a simple grid representation $\Gamma$ of $G$ of height $\sgmh(G)$.
For this proof it will be easier to interpret $\Gamma$ as a contact representation.
So for any vertex $z$, let $P(z)$ be the orthogonal polygon obtained by
taking the unions of all unit squares whose centerpoint is a grid point
labelled $z$.  
Since the
grid-representation uses integral points, the coordinates of sides of $P(v)$ are halfway between integers.
\iffull
$P(z)$ is $x$-monotone
since $\Gamma$ is simple, and it therefore has a unique
\emph{leftmost side}, i.e., vertical side of $P(z)$ that minimizes the
$x$-coordinate.

The idea of obtaining a simple homotopy from this is easy:  For each integer
$i$, let $h_i$ be the set of vertices whose polygons intersect the vertical 
line $\{x=i\}$, listed from top to bottom.  
This is clearly a walk in the contact-graph of $\Gamma$, and
hence a walk in $G$ since $G$ is maximal.  It is also simple since 
polygons are $x$-monotone.  However, to argue that these curves satisfy
the assumptions on a homotopy, we first need to modify the contact 
representation (without changing its heights) to satisfy additional
properties.
\fi

A \emph{junction} is a point that belongs to
at least three sides of polygons; 
we call it \emph{interior/exterior} depending on
whether it lies on the boundary of 
the rectangle $\calR$ that encloses the contact representation.
No junction can belong to four
sides since $G$ is maximal,
so it can be classified as \emph{horizontal} or \emph{vertical}
depending on the majority among its incident sides. 
A \emph{corner} is a point that belongs to exactly two sides of polygons%
\iffull 
; thus this is either a corner of $\calR$ or a place where a
reflex corner of one polygon $P(v)$ equals a convex corner of an adjacent
polygon $P(v')$.     
A \emph{side} of the contact representation is 
a horizontal or vertical line segments that connects two junctions or corners.
(This may be equal to or a subset of a side of polygon, but the meaning of
``side'' will be clear from context.)
\else
.
\fi
It is not possible for both ends of a side to be exterior junctions, 
or else the corresponding
edge of $G$ would be a bridge of the graph, contradicting the 3-connectivity of
the triangulated graph $G$.

\iffull

\begin{claim}
\label{cl:vertical_junction}
We may assume that $\Gamma$ has no interior vertical junction.
\end{claim}
\begin{proof}
An interior vertical junction can be replaced by a corner and an interior
horizontal junction  (after lengthening some other horizontal edges)
without changing
the height or the  adjacencies.  
See Fig.~\ref{fig:verticalJunction}.
This change can be done in two ways. Let $P(z)$ be the polygon
whose side $s$ includes both vertical sides incident to the 
junction.  If $P(z)$ is $x$-monotone then $s$ either on the top or
the bottom chain (or it is leftmost or rightmost, but then both
methods of doing the change work).    If $s$ was part of the top
chain of $P(z)$, then the change where $P(z)$ is below the new
horizontal edge maintains $x$-monotonicity, else the other one does.
All other polygons have no new horizontal sides and so clearly stay $x$-monotone.
\end{proof}

\begin{claim}
\label{cl:aligned_sides}
We may assume that no two vertical sides have the same $x$-coordinate unless
they are both on the left boundary or both on the right boundary.
\end{claim}
\begin{proof}
Let $e$ and $e'$ be two vertical sides with the same
$x$-coordinate, and let $\ell$ be the line through them, where $\ell$ is
not the left or right boundary.  Walking from
$e$ towards $e'$ along $\ell$, we must encounter an end of $e$,
which is a junction or a corner.  If it is a junction then it must be
horizontal since $e$ is not on the left or right boundary.
Either way, there is a region between $e$ and $e'$ on $\ell$
that does not belong to a side
of a polygon.  Therefore we can move $e$ slightly rightward so that
the sides have different $x$-coordinates.  See Fig.~\ref{fig:moveVertical}.
\end{proof}

\begin{figure}[ht]
\hspace*{\fill}
\begin{subfigure}[b]{0.35\linewidth}
\includegraphics[scale=0.5,page=2]{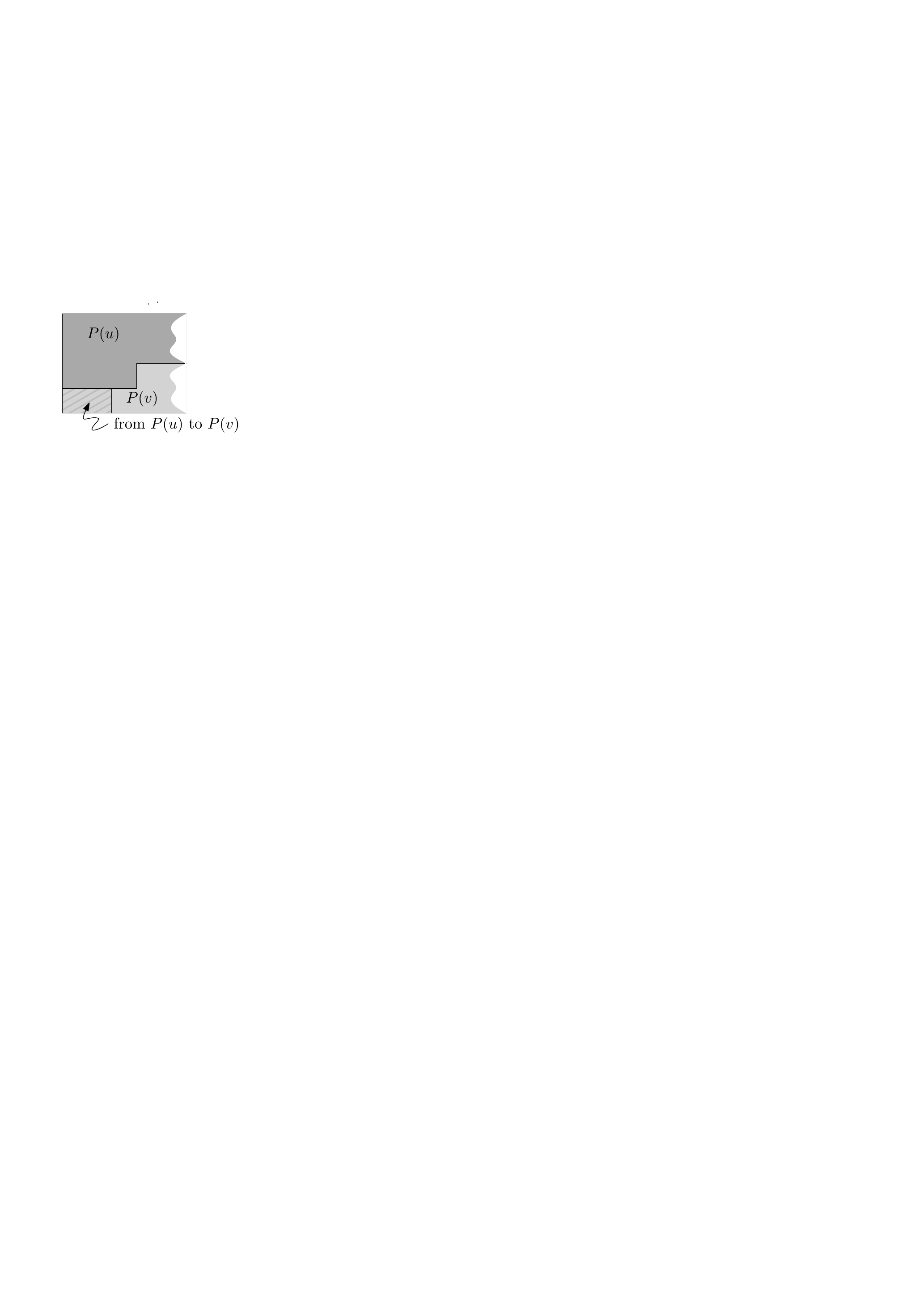}
\caption{}
\label{fig:verticalJunction}
\end{subfigure}
\hspace*{\fill}
\begin{subfigure}[b]{0.3\linewidth}
\includegraphics[scale=0.5,page=3]{convertContactRep.pdf}
\caption{}
\label{fig:moveVertical}
\end{subfigure}
\hspace*{\fill}
\begin{subfigure}[b]{0.25\linewidth}
\includegraphics[scale=0.5,page=4,trim=0 0 70 0,clip]{convertContactRep.pdf}
\caption{}
\label{fig:thirdLeft}
\end{subfigure}
\hspace*{\fill}
\caption{(a) Bringing $P(v)$ to the left boundary. (b) Removing
vertical interior junctions. (c) Making $x$-coordinates distinct.
(d) Bringing a third vertex to the left boundary.
(e) Making the left and right boundary support two distinct edges.
}
\end{figure}

We say that a polygon $P(z)$ 
\emph{touches the top/left/bottom/right boundary} if it contains
part of this boundary of $\calR$.

\begin{claim} 
\label{cl:two_left}
We may assume that at least two polygons touch the left boundary.
\todo{TB: This claim is really helpful below and so got re-inserted.  I 
rewrote the proof to be more similar to the one of the next claim.}
\end{claim}
\begin{proof}
Assume that only one polygon $P(u)$ touches the left boundary.  Let $v\neq u$
be such that the left side $e_v$ of $P(v)$ has minimal $x$-coordinate.
(Fig.~\ref{fig:thirdLeft} illustrates the scenario if one ignores $P(w)$.)
Since $v\neq u$, side $e_v$ is not on the left boundary, hence interior
(except perhaps at its ends).  No junction lies in the open segment $e_v$, 
because this would be an interior vertical
junction and we removed those.
Not both ends of $e_v$ can be corners, else both those corners would 
be convex for $P(v)$ (by choice of $e_v$ as leftmost side), hence reflex for
the polygon $P$ to the left of $e_v$, which contradicts $x$-monotonicity. 
Neither end of $e_v$ can be an interior junction, else the
polygon $P(w)$ with $w\neq u,v$ at it would have points farther left,
contradicting the choice of $v$.  So at least one end of $e_v$ is an exterior
junction and $v$ touches the boundary.  The other end of $e_v$ cannot be an
exterior junction (otherwise $v$ would be a cutvertex) and so it must be a corner,
creating a horizontal side common to $P(v)$ and $P(u)$.  So 
we can remove everything to the 
left of the line $\ell$ through $e_v$ (which contains only points of $P(u)$)
and retain a contact representation of the same graph since $P(u)$ and $P(v)$
have a horizontal side in common.
\end{proof}

\begin{claim}
\label{cl:three_outside}
We may assume that the union of the four boundaries consists of exactly three vertices, while preserving the previous assumption.
\end{claim}
\begin{proof}
 First observe that there cannot be more than three different vertices occupying the left, top right, and bottom boundaries of $\calR$, since $G$ is triangulated.   

\todo{TB: here we're using $x$-monotonicity}
By Claim~\ref{cl:two_left} we can assume that at least two different vertices touch the left boundary, and at least two different vertices touch the right boundary by a symmetric claim.  This proves the claim unless the same two vertices $u,v$ touch both left and right boundary,
which in particular means that they meet at two exterior junctions.
Let $v\neq u,w$ be a vertex that minimizes the $x$-coordinate of its leftmost
side $e_v$.  Exactly as in Claim~\ref{cl:two_left} one argues that not both ends of $e_v$ can be corners,  so one end of
$e_v$ is a junction.  This junction is internal (else $v$ would touch the
top or bottom boundary), so it is horizontal.  The other two polygons at
this junction must be $P(u)$ and $P(w)$ because they occupy points farther left than $v$,
see Fig.~\ref{fig:thirdLeft}. This
implies that the other end of $e_v$ \emph{must} be a corner, else three
vertices would have points to the left of $e_v$, 
contradicting the choice of $v$.  Therefore $P(v)$ is adjacent to both
$P(u)$ and  $P(w)$ at horizontal sides.  We can now delete everything to
the left of the line $\ell$ through $e_v$; this changes no adjacency since 
$v$ is adjacent to $u,w$ via horizontal sides and $(u,w)$ is realized
at the other external junction.
\end{proof}

\begin{claim}
\label{cl:singletons}
We may assume that exactly one vertex touches the left boundary and exactly one different vertex touches the right boundary, while preserving the previous assumptions (except those of Claim~\ref{cl:two_left}).
\todo{TB (new!) We need that those two vertices are \emph{different}.  This was previously said to be obvious, but I really don't see why it is obvious.  Proving it turned into a nasty case-analysis, which is why I re-inserted the removed claim to shorten the argument. A: Added that it does not preserve Claim 3 (which contradicts it)}
  \end{claim}
\begin{proof}
We already know that three vertices touch some boundary, and that
at least two of them touch the left boundary and (by a symmetric argument)
at least two of them touch the right boundary.  Not all three of them can
touch both the left and the right boundary, else at least one of them would
be a cutvertex.  So assume (up to a horizontal flip) that exactly two vertices
touch the right boundary.

If three vertices touch the left boundary, then let $v$ be the ``middle'' one,
i.e., neither of the left corners of $\calR$ belong to $P(v)$.
Otherwise choose $v$ arbitrarily
among the two vertices touching the left boundary.  Let $w\neq v$ be a
vertex that touches the right boundary.
Insert a new column on the left and assign it entirely to $P(v)$, and a new column on the right and assign it entirely to $P(w)$.    Any vertex that previously touched the boundary continues to do so, because it either occupied a corner of $\calR$ (then it remains on the boundary) or it was vertex $v$ (which touches
the new boundary on the left).


Finally we apply Claim~\ref{cl:vertical_junction} and \ref{cl:aligned_sides}
to make interior junctions horizontal and make vertical sides not on the same 
boundary have different $x$-coordinates.
Neither operation affects incidences with the boundary, and the claim holds.
  \end{proof}

We are now ready to extract the homotopy as described before.  We may
assume that exactly three vertices touch the boundary;  they
form a triangular face $f$ in $G$ which we declare to be the outer-face.
Set $u\neq v$ to be the vertices on the left and right boundary.
\else
We show in the full version that with suitable local changes to the contact
representation, we can ensure the following while maintaining the same height:  
(1) Every interior junction is horizontal, (2) no two interior vertical sides have the same $x$-coordinate,
(3) exactly one vertex $u$ touches the left boundary, exactly one 
vertex $v$ touches the right boundary, and $u\neq v$, and (4) exactly
three vertices $u,v,w$ touch the boundary.    Therefore $\{u,v,w\}$
forms a face $f$; declare $f$ to be the outer-face. 
\fi
As in the definition of homotopy, let $s(uv)$ and $t(uv)$ be the subpaths of $G$ between $u$ and $v$ on $f$. By definition, they consist of the vertices occupying the top and and bottom boundaries of $\calR$.

Let the contact representation now have $x$-range $[-\half,W+\half]$.
For $i=0,\dots,W$, define $h_i$ to be the vertices whose polygons intersect
the vertical line $\{x=i\}$, enumerated from top to bottom.  Clearly 
$h_0=\langle u\rangle$, $h_W=\langle v\rangle$,  and any $h_i$ begins on $s(uv)$ and ends on $t(uv)$.  It remains to show that for
$0\leq i<W$ going
from $h_i$ to $h_{i+1}$ is one of the permitted moves.  Consider some
vertical side $e$ that has $x$-coordinate $i+\half$ (if there is none
then $h_i=h_{i+1}$ and we are done).  
Note that the change from $h_i$ to $h_{i+1}$ affects \emph{only}
vertices that are incident to $e$ or participate in junctions at the
ends of $e$, because no other vertical side has $x$-coordinate $i+\half$
(by $0\leq i<W$ and assumption) and so there is no difference between
the curves elsewhere.
\iffull
Let $P(x)$ and $P(y)$ be the
polygons whose sides contain $e$, with $P(x)$ left of $e$.

We distinguish cases by what the ends of $e$ are:
\begin{enumerate}[(a)]
\item Both ends of $e$ are corners, and the adjacent horizontal sides
	go in opposite directions.  Then $h_i=h_{i+1}$ and we are done.
	See Fig.~\ref{fig:Case1}.
\item Both ends of $e$ are corners, and the adjacent horizontal sides
	go in the same directions.  Say they both go right and note
	that they are interior, else this would be a junction not a corner.
	Therefore the vertical line to the right of $e$ intersects $P(x)$ 
	in two intervals, contradicting $x$-monotonicity.  (Note that in fact we can get
	from $h_i$ to $h_{i+1}$ with a spike or unspike.)
	See Fig.~\ref{fig:Case2}.
\item One end of $e$ is a corner, the other is an interior junction
	(hence horizontal by assumption).  Let $z$ be vertex other than $x,y$
	at the junction, so we have a face $x,y,z$.  The change from
	$h_i$ to $h_{i+1}$ is to replace $x$-$z$ by $x$-$y$-$z$
	or $y$-$x$-$z$ by $y$-$z$; this is a face-flip. 
	See Fig.~\ref{fig:Case3}.
      \item One end of $e$ is a corner, the other is an exterior junction. By the assumptions, this exterior junction is horizontal, and its two vertices, say $v$ and $w$, belong to $t(uv)$ if the junction is on the top boundary, and to $s(uv)$ if the junction is on the bottom boundary. The corresponding change is to replace $v$ by $v$-$w$ at
	the end of the curve, or vice versa; this is a boundary-move. See Fig.~\ref{fig:Case4}.
\item Both ends of $e$ are interior junctions, which we know to be
	horizontal.  Let $z$ and $t$ be the third vertices at these
	junctions.  The change is to replace $t$-$x$-$z$ by $t$-$y$-$z$;
	this is an edge-slide along $(x,y)$.
	See Fig.~\ref{fig:Case5}.
\item Both ends of $e$ are junctions, one is interior (hence horizontal)
	while the other is exterior.  As before, by assumption,  this exterior junction is horizontal, and its two vertices, say $v$ and $w$, belong to $t(uv)$ if the junction is on the top boundary, and to $s(uv)$ if the junction is on the bottom boundary. The change is to
	replace $t$-$v$ by $t$-$w$; this is a boundary edge-slide.
	See Fig.~\ref{fig:Case6}.
\item Both ends of $e$ are exterior junctions. This corresponds to a bridge in $G$, contradicting the fact that it is triangulated.
\end{enumerate}
\else
Figure~\ref{fig:extractHomotopy} shows (up to symmetry) all possibilities for what the ends of $e$ are.  One observe that this results in the following situations:  (a) $h_i=h_{i+1}$ and no move is needed, (b) this is impossible if polygons are $x$-monotone, (c) a face-flip, (d) a boundary-move, (e) an edge-slide and (f) a boundary-edge-slide.
\fi
Therefore we only use allowed moves and have found a homotopy.  It is simple since polygons are $x$-monotone. 
The
height equals the maximum number of intersected polygons, which is no
more than the height of the contact representation, hence the height
of the grid representation.
\end{proof}

\begin{figure}[ht]
\hspace*{\fill}
\begin{subfigure}[b]{0.15\linewidth}
\includegraphics[width=\linewidth,page=1]{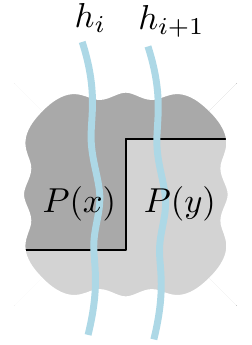}
\caption{}
\label{fig:Case1}
\end{subfigure}
\hspace*{\fill}
\begin{subfigure}[b]{0.15\linewidth}
\includegraphics[width=\linewidth,page=6]{extractHomotopy.pdf}
\caption{}
\label{fig:Case2}
\end{subfigure}
\begin{subfigure}[b]{0.15\linewidth}
\includegraphics[width=\linewidth,page=2]{extractHomotopy.pdf}
\caption{}
\label{fig:Case3}
\end{subfigure}
\begin{subfigure}[b]{0.15\linewidth}
\includegraphics[width=\linewidth,page=3]{extractHomotopy.pdf}
\caption{}
\label{fig:Case4}
\end{subfigure}
\begin{subfigure}[b]{0.15\linewidth}
\includegraphics[width=\linewidth,page=4]{extractHomotopy.pdf}
\caption{}
\label{fig:Case5}
\end{subfigure}
\begin{subfigure}[b]{0.15\linewidth}
\includegraphics[width=\linewidth,page=5]{extractHomotopy.pdf}
\caption{}
\label{fig:Case6}
\end{subfigure}
\caption{The cases of how curves change.}
\label{fig:extractHomotopy}
\end{figure}

Putting the two results together, the homotopy-height and the
simple \gridminor-height are exactly the same value. 

\subsection{Homotopy height and \gridminor height}\label{sec:hhgmh}

One can reasonably argue that the notion of \gridminor height is more natural than simple \gridminor height. Furthermore, it is trivially a minor-closed quantity, which is advantageous from a structural and algorithmic point of view. In this subsection we show that \gridminor height can also be interpreted as the height of a more general notion of homotopy than the one defined in the preliminaries. Compared to the case of simple homotopies, in a (non-simple) homotopy, we remove the hypothesis that the curves are simple and we allow two new moves, spikes and unspikes, leveraging this non-simplicity.   Fig.~\ref{fig:extractHomotopy}(b) illustrates a spike. Furthermore, the conditions are slightly relaxed: the endpoints are allowed to move along an edge instead of a face, and the starting and ending vertices are allowed to be the same. This could allow ``trivial'' homotopies (for example an empty one, idling on a single vertex), and thus we add a new condition on topological non-triviality to disallow those. \iffull \else The precise definition of a discrete homotopy can be found in the full version.\fi

\iffull
The equality with \gridminor height gives a non-trivial algorithmic corollary: since \gridminor height is obviously a minor-closed quantity, it is computable in time FPT in the output, and thus this is also the case for homotopy height. 
The proofs are very similar to those in the previous subsection and we only focus on the salient differences.
\fi

\iffull
Precisely, a \emph{(discrete) homotopy} is defined for a planar triangulated graph $G$ with either a fixed outer-face $\{u,v,w\}$, or a \emph{fixed outer-edge}, which consists of choosing an edge of the graph, doubling it, and fixing the new degree $2$ face between these two edges as the outer face of $G$. It consists of a sequence $h_0,\dots,h_W$
of walks in $G$ (we call these \emph{curves}) such that:
\begin{enumerate}
\item $h_0$ and $h_W$ are trivial curves at two vertices of the outer-face, say $u$ and $v$ ($u=v$ is allowed).
      \item The vertices $u$ and $v$ partition the outer-face into two subpaths $s(uv)$ and $t(uv)$. For $0\leq i\leq W$, the curve $h_i$ starts on $s(uv)$ and ends on $t(uv)$. 
\item For all $0\leq i < W$
  we can obtain $h_{i+1}$ from $h_i$ with a face-flip, spike, unspike, edge-slide, a boundary-move or a boundary-edge-slide%
; see Fig.~\ref{fig:spikes}.
  \item $h$ is \emph{topologically non-trivial} 
(see below).
\end{enumerate}

\fi

\iffull
\begin{figure}\hspace*{\fill} %
\includegraphics[page=1]{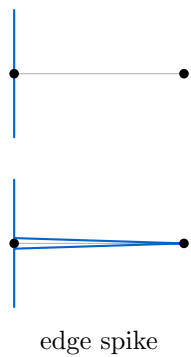}\hspace*{\fill} 
\caption{The last type of homotopy move.}
\label{fig:spikes}
\end{figure}
\fi



\iffull

A \emph{spike} consists of picking some edge $e=(x,y)$ where $x\in h_i$ and replacing $x$ by the subsequence $x$-$y$-$x$
(traversing $e$ twice). An \emph{unspike} is the reverse operation of a spike, i.e., replace $x$-$y$-$x$ by $x$. The other moves are defined in Section~\ref{sec:definitions}.

The topological non-triviality hypothesis is there to exclude some trivial instances of homotopies in the case $u=v$, for example the empty homotopy starting at $u$ and finishing at $v=u$. In order to do so, we define an algebraic flipping number for each face. Assume without loss of generality (by reversing the embedding if necessary) that the vertices and edges on the outer-face are ordered $u, t(uv), v, s(vu)$ in clockwise order. In particular, if they contain at least two edges, the paths $t(uv)$ and $s(vu)$ go clockwise. Then whenever a face is flipped with a move (either using a face-flip or an edge-slide) between $h_i$ and $h_{i+1}$, this flip is called \emph{positive} (respectively \emph{negative}) if the face lies to the right (respectively to the left) of the edge used to flip it (using the direction of the edge induced by the path $h_i$). The \emph{algebraic flipping number} of each face is defined to be the sum of the positive flips minus the sum of the negative flips over all the moves of a homotopy. A homotopy is \emph{topologically non-trivial} if each face except the outer face has algebraic flipping number one. It is easy to prove that a simple homotopy is always topologically non-trivial. Furthermore, it can be proved that when $u \neq v$, this hypothesis is always satisfied (but since we embedded it in the definition, we will not need this fact).


As in the simple case, the \emph{height} of a homotopy is the length of the longest path $h_i$,
counting as path-length the number of vertices, and $\HH(G)$ is the minimum height of a homotopy over all the possible choices of outer-faces and outer-edges.
\fi

\begin{lemma} For any triangulated planar graph we have $gmh(G) \leq \HH(G)$.
\end{lemma}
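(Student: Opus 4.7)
The plan is to mirror the proof of Lemma~\ref{lem:HHsm_gms}, building a contact representation of $G$ by orthogonal polygons (equivalently a grid-minor representation) of height $k=\HH(G)$, column by column, processing one homotopy move at a time. Because we no longer require $x$-monotonicity, the polygons need only be connected, and a vertex $x$ appearing several times in some curve $h_i$ is allowed to occupy several disjoint rows of a single column, provided those parts are reconnected through other columns. The face-flip, edge-slide, boundary-move, and boundary-edge-slide cases will be treated exactly as in the proof of Lemma~\ref{lem:HHsm_gms}; only the two new moves, spikes and unspikes, require genuine additional care.

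For a spike at an edge $(x,y)$ that replaces the $j$-th vertex $x$ of $h_i$ by the subsequence $x$-$y$-$x$ in $h_{i+1}$, the curve grows by two, so $|h_i|\leq k-2$ and there are at least two units of spare height among the right-boundary polygons of $h_i$. I would first insert a staircase (analogous to the increasing-face-flip staircase of Lemma~\ref{lem:HHsm_gms}, but transporting two units of spare height instead of one) that concentrates the excess onto $P(x)$, making its right boundary occupy three consecutive rows $j,j{+}1,j{+}2$ in one vertical strip. I would then append one more column in which rows $j$ and $j{+}2$ remain labelled $x$ while row $j{+}1$ begins a new polygon $P(y)$. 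Both new $x$-cells are horizontally adjacent to the tall $x$-strip of the preceding column, so $P(x)$ stays connected, and the two horizontal sides between $P(x)$ and $P(y)$ realise the edge $(x,y)$. An unspike is handled symmetrically: one terminates $P(y)$ using a column in which $P(x)$ is again three rows tall on its right boundary, merging its upper and lower horizontal strips.

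The main obstacle I anticipate is ensuring connectedness of $P(x)$ across a spike that is \emph{not} immediately undone. In that case the two strips of $P(x)$ at rows $j$ and $j{+}2$ persist across many columns on either side of $P(y)$ and never meet directly; they are reunited only at the later column where the eventual unspike occurs. The construction above glues both strips to a common tall $x$-column on the left (the pre-spike staircase) and to another tall $x$-column on the right (the unspike column), so $P(x)$ is indeed connected. Intermediate moves that affect $y$ (further spikes off $y$, face-flips incident to $y$, etc.) are handled with the same bookkeeping as in Lemma~\ref{lem:HHsm_gms} and do not disturb the two $x$-strips.

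Finally, I would verify correctness of the grid-minor representation: every cell is labelled; every vertex induces a connected subgraph by the construction just outlined; and every edge of $G$ is realised as a shared polygon side. For the last point I would invoke topological non-triviality: each inner face has algebraic flipping number exactly one, so every inner face is flipped at least once and all three of its edges appear as shared sides, while the edges of the outer face are realised along the bounding rectangle. The resulting rectangle has height $k$, yielding $\gmh(G)\leq\HH(G)$.
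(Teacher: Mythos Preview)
Your column-by-column construction and your treatment of spikes via a two-unit staircase match the paper's, but the connectedness argument has a real gap that the paper spends most of its effort closing. You argue only that the two $x$-strips created by a spike are reunited ``at the later column where the eventual unspike occurs''; however, a general homotopy need not pair spikes with unspikes in any clean way, and you never address $P(y)$. The paper explicitly notes that spiking $(x,y)$ and immediately unspiking produces an isolated $y$-pixel disconnected from every other appearance of $y$. More fundamentally, once the simplicity hypothesis is dropped, nothing prevents a vertex from entering and leaving the curves at several disjoint time intervals (topological non-triviality constrains only the \emph{algebraic} flipping count of faces, not monotonicity of the sweep), and your local bookkeeping does not reconnect those islands.

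The paper's remedy is global rather than local: it first replaces the given homotopy by a \emph{monotone} one of the same height (each move uses an edge or face to the right of the current curve, so every face is swept exactly once and no vertex can reappear after leaving), invoking the monotonicity theorem of~\cite{cmo-coh-18} via a detour through the radial graph (Lemma~\ref{lem:monotonicity}). With monotonicity in hand, connectedness is then argued indirectly by interpreting the homotopy as a degree-one continuous map from $[0,1]^2$ to the dual cross-metric disk and concluding it is a homeomorphism, so that preimages of dual faces are connected. This monotonicity reduction is the missing ingredient in your proposal.
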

\iffull
\begin{proof}
  The proof proceeds similarly to the proof of Lemma~\ref{lem:HHsm_gms}. There are two main differences: we need to explain how to handle the edge-spikes, and the proof that we obtain a \gridminor representation is more involved, since we do not have simplicity to leverage. We first explain how to build the grid and defer the proof that it is a \gridminor representation to the end of the proof.

  The proof starts exactly the same way as that of Lemma~\ref{lem:HHsm_gms}. We build inductively a contact representation following the homotopy, and we use exactly the same notations. Having built a contact representation $\Gamma_i$, we explain how to expand $\Gamma_i$ rightwards to represent $h_{i+1}$. Note that since the \gridminor representation may not be simple, the polygons may not be $x$-monotone, and thus may have several right and left boundaries.

  \begin{itemize}
  \item The cases of edge-slides, boundary edge-slides, boundary moves and face flips are identical to the other proof.
  \item Let us assume that $h_{i+1}$ was obtained from an edge spike (the unspike case is symmetric and is handled identically) along an edge $(x,y)$, i.e., $h_i$ contained $x$ which got replaced by $x$-$y$-$x$ in $h_{i+1}$. So $|h_i| \leq k-2$ which implies that at least one polygon $P(d)$ for some $d \in h_i$ has height at least $3$ on one of its right boundaries, or at least two polygons $P(d_1)$ and $P(d_2)$ (possibly equal) have height at least $2$ on one of their right boundaries. We insert an (upward or downward) staircase that shifts this extra height to $x$ while keeping all polygons of $h_i$ on the right boundary. This may take up to $2k$ units of width but does not affect the height. Now $P(x)$ has height $3$ on the right boundary, and we can start a $1$-pixel high polygon $P(y)$ for the vertex $y$ right inbetween. 
    \end{itemize}

  It remains to prove that this is a valid \gridminor representation. This is not true in general: it is easy to build examples (for example by spiking an edge and immediately unspiking it) where the sets of vertices with a common label in the grid representation are not connected. What we will show is that we can assume that the homotopy we work with has a special property called \emph{monotonicity}, which requires that:

  \begin{enumerate}
  \item each curve $h_i$ does not cross itself, i.e., it can be locally perturbed in the plane to be simple (in the topological sense),
  \item for any $0\leq i<W$, the move connecting $h_i$ to $h_{i+1}$ uses an edge or a face lying to the right of $h_i$, for the orientation induced by the ordering of the vertices in a curve from $s(uv)$ to $t(uv)$.
  \end{enumerate}

The second hypothesis implies that the faces are only flipped positively, and thus each face is flipped exactly once, but it is stronger since it also constrains edges.  For example, while a spiked path is non-simple in the graph-theoretical sense, it can be locally (topologically) perturbed to a simple curve, and is thus allowed in a monotone homotopy. On the other hand, spiking an edge and immediately unspiking it is forbidden, since the unspike goes in the opposite direction of the spike. The following claim will be proved in Lemma~\ref{lem:monotonicity}.

  \begin{claim}
If there exists a discrete homotopy of $G$ of height at most $k$, there exists a monotone discrete homotopy of $G$ of height at most $k$.
  \end{claim}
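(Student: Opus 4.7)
The plan is to transform any discrete homotopy of height at most $k$ into a monotone one of height at most $k$ by iteratively eliminating \emph{cancelling pairs} of moves, following in spirit the classical strategy used by Chambers and Letscher~\cite{cl-othoah-09} for continuous homotopies. A cancelling pair consists of two moves at times $i<j$ that undo each other: either a positive flip of some interior face paired with a later negative flip of the same face, a spike of some edge paired with a later unspike of the same edge, or the analogous pair of opposite boundary-moves on the same outer edge. The topological non-triviality hypothesis forces every interior face to have algebraic flipping number exactly one, so every negative face-flip must be part of such a pair. Similarly, because $h_0$ and $h_W$ are trivial curves at vertices of the outer face, every spike is eventually undone by an unspike of the same edge, and every outward boundary-move is eventually reversed.

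First I would formalize the surgery at a single cancelling pair. Fix a pair that is \emph{innermost}, i.e., one that minimises $j-i$. For a face-flip pair on $f=\{x,y,z\}$ that introduces and later removes the detour through $z$, minimality implies that each intermediate curve $h_{i+1},\ldots,h_{j}$ still contains the subsequence $x$-$z$-$y$ verbatim: any move between $i$ and $j$ that touched $z$ would itself form a shorter cancelling pair with one of the two endpoints of the present pair, contradicting minimality. I would then delete the two pair-moves from the sequence and simultaneously drop $z$ from each intermediate curve. The result is a valid discrete homotopy of length $W-2$ in which every surviving curve is a subcurve of (or equal in length to) the corresponding original curve, so the maximum curve length does not increase. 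Spike/unspike and boundary-move cancelling pairs are handled by an identical template, removing the redundant subsequence that was introduced and maintained by the pair.

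Iterating this surgery strictly decreases the number of moves, so the process terminates. The terminal homotopy has no cancelling pair, so each interior face is flipped exactly once and, by topological non-triviality, this flip is positive; moreover no edge is spiked/unspiked and no outward boundary-move is undone. Together these facts give property~(2) of monotonicity (every move acts on a face or edge immediately to the right of the current curve), and property~(1) is automatic because a curve with no uncancelled spike can be locally perturbed into a topologically simple arc.

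The main obstacle is verifying that the surgery is safe, that is, that removing the pair and shortening every intermediate curve yields a sequence in which each intervening move is still a valid elementary move on the shortened curves. This is a finite case analysis over the elementary move types (face-flip, spike, unspike, edge-slide, boundary-move, boundary-edge-slide): the innermost assumption guarantees that no intervening move acts on any vertex or edge of the removed configuration, so every such move operates on a segment disjoint from the removed subsequence and trivially commutes with the surgery. Carrying out this bookkeeping cleanly---in particular checking the boundary-edge-slide case, where the degree-$2$ face arising from the outer-edge convention must be tracked---is the technical heart of the argument, but it introduces no new conceptual ingredient beyond the innermost choice.
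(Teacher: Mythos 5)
Your strategy does not follow the paper's, and it has a genuine gap at its technical heart. The paper does not attempt a from-scratch cancellation argument: it deduces the claim from the known monotonicity theorem for discrete homotopies (Theorem~4 of~\cite{cmo-coh-18}, together with Proposition~13 of that paper to handle paths with moving endpoints on the boundary), after first passing to the radial graph $R(G)$ in order to eliminate edge-slides, which that theorem does not cover; a homotopy of height at most $k$ in $G$ is translated into one of height at most $2k$ in $R(G)$ via vibrations, and back. The whole point of that detour is that monotonization \emph{without increasing the height} is a substantial structural theorem, not something obtained by local surgery.

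The concrete gap in your argument is the innermost-pair step: you assert that any intervening move touching the removed configuration would itself form a shorter cancelling pair, so that the detour $x$-$z$-$y$ survives verbatim and all intermediate moves commute with its deletion. This is false. Between the positive and the negative flip of $f=\{x,y,z\}$ the homotopy may, for instance, positively flip a neighbouring face $\{x,z,w\}$ across the edge $xz$ of the detour, spike an edge at $z$, or perform an edge-slide that moves the curve onto or off of $z$; none of these cancels with either endpoint of your pair, yet after dropping $z$ from the intermediate curves they are no longer legal moves, and making them legal again requires re-routing that can lengthen the curves --- precisely the difficulty the cited theorem resolves. Moreover, the pairing of moves you rely on is not well defined: a spike need not be undone by an unspike of the same edge, since it can be absorbed by a later face-flip over one of its two copies. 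Finally, even if all cancelling pairs were eliminated, the terminal homotopy need not be monotone: a face-flip that inserts a vertex already occurring elsewhere on the curve creates a self-crossing with no spike involved, so ``no cancelling pairs'' yields neither condition (1) nor condition (2) of monotonicity. As written, the proposal essentially assumes the hard part of the statement; you would need either the full exchange argument of~\cite{cmo-coh-18} or the paper's radial-graph reduction to it.
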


  Assuming that the homotopy we started with is indeed monotone, we can now prove that our construction yields a valid \gridminor representation. This is intuitive but somewhat technical to prove. Denote by $G^*$ the graph dual to $G$, and view this dual graph as a \emph{cross-metric surface} (see, e.g.,~\cite{ce-tnpcs-10}), i.e., it provides a metric for the curves crossing it \emph{transversely} by counting the number of edges that each edge crosses. As is customary for cross-metric surfaces, we puncture the vertex dual to the outer face, so that the cross-metric surface is a topological disk $D$ whose boundary is made of the two subpaths $s(uv)$ and $t(uv)$. Discrete homotopies are defined on cross-metric surfaces by dualizing the moves described in Section~\ref{sec:definitions}, or see~\cite{cmo-coh-18} for more background on discrete homotopies on cross-metric surfaces.

  Our monotone homotopy induces a (dual) discrete monotone homotopy with respect to $G^*$. Now, the advantage of working with the cross-metric setting is that we can realize the slight perturbations of the curves $h_i$ to make them simple. Furthermore, we can interpolate continuously inside each homotopy move in the natural way to obtain a \emph{continuous homotopy}: a continuous map $h:[0,1] \times [0,1] \rightarrow D$ where $D$ is the disk bounded by the outer face, $h(0,\cdot)$ and $h(1,\cdot)$ are infinitesmal paths in the dual faces $u^*$ and $v^*$, and $h(\cdot,0) \in s(uv)$ and $h(\cdot,1) \in t(uv)$. By the monotonicity assumption, the curves $h(t,\cdot)$ are all simple and locally disjoint for small perturbations of $t$. This implies that the map $h$ is locally injective and therefore a local homeomorphism. The topological non-triviality hypothesis ensures that $h$ has topological degree one, and thus it is a homeomorphism into its image $D$. The preimages of the faces of $G^*$ in $D$ are therefore connected sets in $[0,1] \times [0,1]$. Now the proof follows by observing that the contact representation is exactly a discrete version of these connected sets. Indeed, the labels on the columns of the contact representation mirror the faces crossed by the curves $h(\cdot,t)$, and the construction associated to each move was made to ensure that the connectedness of the labels between two moves in $h$ is faithfully represented by the connectedness of the grid between the two columns.

\end{proof}

To conclude the proof, we establish the needed monotonicity property in the following lemma. The gist of it is that it was proved in~\cite{cmo-coh-18} that optimal homotopies can be assumed to be monotone, but for a slightly different notion of discrete homotopy which did not allow edge slides. In order to use the result of~\cite{cmo-coh-18}, we therefore first need to change the graph $G$ into a different one. Note that since allowing edge-slides change the value of the height by at most $1$, the reader content with this small leeway can apply directly the results of~\cite{cmo-coh-18} to obtain monotonicity.

\begin{lemma}\label{lem:monotonicity}
If there exists a discrete homotopy of $G$ of height at most $k$, there exists a monotone discrete homotopy of $G$ of height at most $k$.
\end{lemma}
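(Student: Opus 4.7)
The plan is to reduce to the main theorem of~\cite{cmo-coh-18}, which establishes that any optimal discrete homotopy can be made monotone, but only for a model of discrete homotopies whose allowed moves are face-flips, spikes, unspikes, boundary-moves, and boundary-edge-slides --- \emph{not} edge-slides. So the first subgoal is to rewrite the given homotopy of $G$ so that it uses no edge-slides while keeping its height at most $k$. As the footnote on edge-slides in Section~\ref{sec:definitions} already points out, substituting each edge-slide $z\text{-}x\text{-}t \to z\text{-}y\text{-}t$ by its equivalent pair of face-flips (through the intermediate curve $z\text{-}y\text{-}x\text{-}t$) yields a homotopy of height at most $k+1$; what must be done is to recover the extra unit of height, and this can only be achieved by first modifying the underlying triangulation.

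The approach is to construct an auxiliary triangulated graph $G^{\dagger}$ and a lift of the input homotopy to a homotopy on $G^{\dagger}$ of the same height $k$ that uses none of the forbidden moves. Intuitively, for each edge $e$ of $G$ that is slid during the homotopy, I refine the local patch of two triangles adjacent to $e$ by inserting auxiliary vertices, so that the two triangles of $G$ become several smaller triangles of $G^{\dagger}$; the edge-slide along $e$ is then realized in $G^{\dagger}$ by a single face-flip of one of the small new triangles, with the extra vertex of the intermediate face-flip being reused as an auxiliary vertex in a neighbouring small triangle so that no lifted curve ever exceeds length $k$. The lift of every other move (face-flip, spike, unspike, boundary-move, boundary-edge-slide) is chosen to coincide with the original move away from the refined patches. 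Once this lift is in place, \cite{cmo-coh-18} yields a monotone discrete homotopy on $G^{\dagger}$ of height $k$, which I project back to $G$ by contracting the refined patches: a face-flip inside a refined patch becomes an edge-slide or face-flip of $G$, and all other moves descend unchanged.

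The main obstacle is the explicit construction of the refinement and the accompanying bookkeeping. A naive stellation of $e$ (inserting a single vertex $m_e$ connected to the two apex vertices $z,t$) does not work: an edge-slide along $e$ still requires two face-flips in $G^{\dagger}$ with the intermediate curve one vertex longer than the others. A correct refinement must be subtler and must also handle the case in which several edges incident to a common triangle are slid in close succession, since the refinements around those edges interact. Verifying that monotonicity and topological non-triviality descend from $G^{\dagger}$ to $G$ --- which requires checking that the algebraic flipping number of each face of $G$ remains $1$ after contracting, and that the rotational orientation of each flip of $G^{\dagger}$ within a refined patch respects the convention on the corresponding face of $G$ --- is the other part of the argument that needs a detailed case analysis.
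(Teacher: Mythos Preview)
Your high-level strategy---refine the graph so that edge-slides disappear, invoke the monotonicity theorem of~\cite{cmo-coh-18}, then project back---is exactly what the paper does. But there is a genuine gap in your execution, and it is not merely a matter of missing bookkeeping.

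You insist that the auxiliary graph $G^{\dagger}$ be \emph{triangulated}. In any triangulation, a face-flip changes the length of the curve by exactly $\pm 1$. Consequently, any sequence of face-flips (and spikes/unspikes, which change length by $\pm 2$) that simulates a length-preserving edge-slide must pass through an intermediate curve whose length differs from the endpoints. If the curve on which you slide already has length $k$, some intermediate has length $k+1$, and \emph{no} local triangulated refinement of the two triangles incident to the slid edge can circumvent this parity obstruction. This is precisely why your naive stellation fails, and why the ``subtler'' refinement you allude to cannot exist as long as you stay in the category of triangulations. A second, independent difficulty with refining only the edges that happen to be slid is that the monotone homotopy returned by~\cite{cmo-coh-18} need not follow the same faces as the input homotopy, so a non-uniform refinement will not project back in any controlled way.

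The paper's fix is to abandon triangulations for the auxiliary graph and pass instead to the \emph{radial graph} $R(G)$, a bipartite quadrangulation obtained by placing a vertex in each inner face of $G$, joining it to the incident vertices, deleting the original inner edges, and subdividing the outer edges once. In $R(G)$ a face-flip at a quadrilateral replaces one $2$-edge side by the opposite $2$-edge side and therefore \emph{preserves} length; an edge-slide of $G$ becomes exactly such a quadrilateral flip. Every curve of length $\ell$ in $G$ ``vibrates'' to a curve of length $2\ell$ in $R(G)$, so a height-$k$ homotopy of $G$ lifts to a height-$2k$ homotopy of $R(G)$ using no edge-slides. Theorem~4 of~\cite{cmo-coh-18} (together with the reduction in their Proposition~13 for moving endpoints) then yields a monotone height-$2k$ homotopy of $R(G)$, and reading the dictionary between moves backwards projects it to a monotone height-$k$ homotopy of $G$. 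The construction is uniform over all of $G$, so the projection is clean regardless of which faces the monotone homotopy on $R(G)$ ends up using.
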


\begin{proof}

We denote by $R(G)$ the \emph{radial graph} of $G$, which is obtained by putting a vertex in the middle of each inner face, connecting each pair of vertices lying on adjacent pairs (vertex,face), removing the inner edges and subdividing each outer edge once. The resulting graph $R(G)$ is a bipartite quadrangulation, and discrete homotopies on $R(G)$ are defined with discrete moves as in $G$: now the faces of degree $4$ can be flipped, and that there are no edge-slides since these do not make sense in a quadrangulation (see e.g.,the introduction of~\cite{cmo-coh-18} for an inventory of the discrete homotopy moves in general non-triangulated graphs). The outer face is not allowed to be flipped, and for any subdivided outer edge $e$, we allow a \emph{double boundary edge-slide} to slide in one step over the two corresponding edges in $R(G)$.

We claim that any discrete homotopy of $G$ of height at most $k$ between two vertices $u$ and $v$ induces a discrete homotopy of $R(G)$ of height at most $2k$ between the vertices $u$ and $v$ of $R(G)$. Indeed, any edge $e$ in $G$ can be pushed on either side to yield a $2$-edge path going through either of the two adjacent faces of $e$ (these perturbations are called \emph{vibrations} in~\cite{fomin2006new}). Furthermore, the choice of which way to push does not matter since one can switch between both ways without occuring any increase in the height by doing a face-flip at the face of $R(G)$ corresponding to $e$. Now, simply observe that a spike, face-flip, edge-slide, boundary-move and boundary edge-slide in $G$ corresponds respectively to two consecutive edge spikes, one edge spike, one face-flip, two boundary-moves and one double boundary edge-slide in $R(G)$.

Conversely, any discrete homotopy of $R(G)$ of height at most $2k$ between two vertices $u$ and $v$ that are also vertices in $G$ induces a discrete homotopy of $G$ of height at most $k$ between $u$ and $v$. Indeed, the graph $R(G)$ is bipartite and thus any curve in $R(G)$ can be decomposed into pairs of adjacent edges connecting vertices of $G$. Each of these pairs can be pushed towards $G$ by doing the reverse of the vibrations described above, and the dictionary between moves of $G$ and $R(G)$ can be read in reverse to extract a discrete homotopy of $G$ of height at most $k$ between $u$ and $v$.

Now the proof of Lemma~\ref{lem:monotonicity} simply follows from Theorem~4 of~\cite{cmo-coh-18} showing that optimal homotopies can be assumed to be monotone. More precisely, while Theorem~4 handles homotopies between two cycles forming the boundary of an annulus, the reduction of Proposition~13 in that paper explains how to simply obtain a similar monotonicity property in the setting that we are working with in this paper, where curves are paths with moving endpoints on two boundaries. This reduction naturally relies on the double boundary-edge-slide move, corresponding to a face-flip with the new vertex on the outer face. \end{proof}

We now prove the reverse inequality.

\fi
\begin{lemma}
For any triangulated graph we have $\HH(G) \leq gmh(G)$.
\end{lemma}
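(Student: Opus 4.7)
The plan is to mirror the proof of Lemma~\ref{lem:gms_HHsm}, but without the simplicity assumption on polygons, reading off the curves $h_i$ by vertical slicing of a suitably normalized contact representation. First, I would fix a grid-representation $\Gamma$ of $G$ of height $\gmh(G)$ and interpret it as a contact-representation by orthogonal polygons (now not necessarily $x$-monotone). I would apply exactly the same sequence of local modifications as in the simple case (the analogues of Claims~\ref{cl:vertical_junction}--\ref{cl:singletons}) to ensure that: (i) every interior junction is horizontal, (ii) no two vertical sides share an $x$-coordinate unless both lie on the left or both on the right boundary, (iii) exactly one vertex $u$ touches the left boundary and exactly one different vertex $v$ touches the right boundary, and (iv) exactly three vertices $u,v,w$ touch the boundary, forming a triangular face which I declare to be the outer face $f$. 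None of these modifications uses $x$-monotonicity in an essential way; they only reshape polygons locally and preserve both height and all adjacencies.

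Next, for the contact representation with $x$-range $[-\half, W+\half]$, I would define $h_i$ for $i=0,\dots,W$ as the sequence of vertices whose polygons intersect the vertical line $\{x=i\}$, enumerated from top to bottom. Since $G$ is maximal planar and consecutive polygons on a vertical line share a horizontal segment (and hence an edge), each $h_i$ is a walk in $G$ from $s(uv)$ to $t(uv)$, with $h_0 = \langle u\rangle$ and $h_W = \langle v\rangle$. For the transition $h_i \to h_{i+1}$, I would again analyze the unique vertical side $e$ with $x$-coordinate $i+\half$ by the types of its endpoints. The cases (c)--(f) from the proof of Lemma~\ref{lem:gms_HHsm} (one or both ends interior horizontal junctions, or one end an exterior junction) go through verbatim and yield a face-flip, boundary-move, edge-slide, or boundary-edge-slide. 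Case (a) gives $h_i = h_{i+1}$. The only new case is (b): both ends of $e$ are corners whose adjacent horizontal sides go in the same direction. Here the polygon $P(x)$ on one side of $e$ wraps around $P(y)$, so $h_{i+1}$ is obtained from $h_i$ by replacing a single occurrence of $x$ by $x$-$y$-$x$ (or the reverse), which is precisely a spike (or unspike).

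Finally, I would verify topological non-triviality, namely that every inner face of $G$ has algebraic flipping number one. Each inner face $\{a,b,c\}$ of $G$ corresponds under the contact representation to exactly one interior horizontal junction $p$ where $P(a), P(b), P(c)$ meet; and the face is flipped (by a face-flip or an edge-slide) in exactly the transition $h_i \to h_{i+1}$ whose slicing line $\{x = i+\half\}$ has $e$ as one end at $p$. Because the sweep moves monotonically from left to right and the junction at $p$ is crossed from left to right, the face $\{a,b,c\}$ lies to the right of the flipping edge in $h_i$, so its flip is positive. Hence every inner face has algebraic flipping number exactly one.

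The main obstacle I expect is the bookkeeping in this last step: verifying that a face is swept \emph{exactly once} (and always positively) depends on the fact that after the normalization, interior vertical sides have pairwise distinct $x$-coordinates and every interior junction is horizontal, so each inner face of $G$ is in one-to-one correspondence with a single junction encountered during the sweep. Given that, the height bound is immediate, since the length of $h_i$ equals the number of polygons crossed by the vertical line $\{x=i\}$, which is at most the height of $\Gamma$.
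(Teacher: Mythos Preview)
Your overall strategy matches the paper's, but there is a genuine gap in the normalization step. You assert that ``none of these modifications uses $x$-monotonicity in an essential way,'' but this is false for Claim~\ref{cl:two_left} (and hence for Claims~\ref{cl:three_outside} and~\ref{cl:singletons}, which build on it). In the simple case, the argument that not both ends of the leftmost side $e_v$ can be corners relies on the fact that two reflex corners of the polygon to the left of $e_v$ on the same vertical line would violate $x$-monotonicity. Without $x$-monotonicity a polygon may legitimately wrap around, so this step fails. Consequently you cannot in general achieve your conditions (iii) and (iv): the paper only establishes that \emph{two or three} vertices touch the boundary, and explicitly notes that it cannot guarantee $u\neq v$.

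This is not a cosmetic issue: it is exactly why the definition of (non-simple) homotopy was relaxed to allow an outer-\emph{edge} (a doubled edge bounding a bigon) instead of an outer-face, and to allow the start and end vertices to coincide. Your extraction of the homotopy as written presupposes a triangular outer face with $u\neq v$, so it would not produce a valid homotopy in the two-vertex case. The fix is to weaken the normalization claims accordingly and then, when only two boundary vertices arise, declare the outer-edge and allow $u=v$; the case analysis for the moves (including your treatment of case~(b) as a spike/unspike) and the topological non-triviality argument then go through as you describe.
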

\iffull
\begin{proof}
  Here as well, the proof is very similar to that of Lemma~\ref{lem:gms_HHsm}. We will reuse the parts that can be used verbatim, but need to do some slight changes at many places. We start from a \gridminor representation which we interpret as a contact representation, and use the same notations as in Lemma~\ref{lem:gms_HHsm}: for any vertex $z$, $P(z)$ is an orthogonal polygon (but it may not be $x$-monotone).

  We will build the homotopy from this contact representation by intersecting it with the vertical lines, but must first modify it to satisfy additional properties.

Claim~\ref{cl:vertical_junction} carries verbatim and the proof becomes even easier: since there are no $x$-monotonicity constraints, any of the two ways works.
Claim~\ref{cl:aligned_sides} also carries verbatim, so all vertical sides
have distinct $x$-coordinates unless they are on the left or right boundary.  
Also observe that no polygon $P(x)$ can attach at the left boundary in two 
disjoint sides (else $x$ would be a cutvertex), so the leftmost side of a 
polygon $P(x)$ is now defined even though $P(x)$ need not be $x$-monotone.
Claim~\ref{cl:three_outside} used Claim~\ref{cl:two_left}, for which 
$x$-monotonicity was crucial, so we prove a weaker version of this claim in a different fashion.

\begin{claim}
We may assume that the union of the four boundaries consists of either two or three vertices, while preserving the previous assumption.
\end{claim}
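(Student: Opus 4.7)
The plan is to bound the number of distinct vertices touching the boundary of $\calR$ from both sides without any modification to $\Gamma$; then the conclusion "two or three" follows immediately.

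First I would argue the upper bound of three. The key observation is that every polygon $P(v)$ must be simply connected: if $P(v)$ had a hole, some other polygon $P(u)$ would be trapped inside it and could only share a boundary with $P(v)$, giving $u$ exactly one neighbor in $G$ and contradicting the minimum degree of three in a triangulated graph on at least four vertices (the case $|V(G)|=3$ is immediate since $G$ is then a triangle). With all polygons simply connected, the contact representation induces a proper planar embedding of $G$ whose outer face is exactly the complement of $\calR$, so the vertices touching the boundary are precisely those on this outer face. Since $G$ is triangulated, that outer face is bounded by a triangle, and at most three distinct vertices touch the boundary.

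Next I would rule out the case of a single vertex on the boundary. If only one vertex $v$ touched the boundary, then $P(v)$ would contain the entire perimeter of $\calR$. The simple-connectedness of $P(v)$ would then force $P(v) = \calR$ (otherwise the nonempty complement $\calR\setminus P(v)$ lies strictly inside and creates a hole), giving $|V(G)|=1$, which contradicts that $G$ is triangulated with at least three vertices. Hence at least two distinct vertices touch the boundary.

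Combining the two bounds shows that the number of distinct vertices on the boundary is either two or three. Since $\Gamma$ was not modified, the properties established in the previous claims (horizontality of interior junctions and distinctness of the $x$-coordinates of vertical sides) are automatically preserved. The main subtlety I anticipate is the identification of the outer face of the embedding induced by $\Gamma$ with the complement of $\calR$: in the non-simple setting polygons can have complicated shapes, and this identification rests on the simple-connectedness argument together with Whitney's uniqueness theorem for planar embeddings of $3$-connected graphs.
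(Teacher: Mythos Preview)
Your argument has a genuine gap in the simple-connectedness step, and this gap is fatal because the conclusion you draw from it is actually false.

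You claim that if $P(v)$ had a hole, some polygon $P(u)$ trapped inside could only touch $P(v)$ and hence $u$ would have a single neighbour. But a hole may contain several polygons, and those can touch one another as well as $P(v)$. Concretely, take $G=K_4$ on vertices $u,a,b,c$ and label a $5\times 5$ grid so that the outer frame is entirely $u$ and the inner $3\times 3$ block is tiled by $a,b,c$ with all three pairwise contacts present. This is a valid \gridminor representation of the triangulated graph $K_4$ in which $P(u)$ is a ring with a hole, and \emph{only} $u$ touches the boundary of $\calR$. So polygons need not be simply connected, and the one-vertex boundary case really does occur.

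Consequently the claim cannot be proved without modifying $\Gamma$; the phrase ``we may assume'' is doing real work here. The paper's proof handles the one-vertex case constructively: it picks the vertex $v\neq u$ whose leftmost side $e_v$ is farthest left, argues that both ends of $e_v$ are corners (using the absence of interior vertical junctions), and then deletes everything to the left of $e_v$. After this cut $P(u)$ stays connected through the top, right, and bottom boundaries, the adjacency $(u,v)$ survives on the horizontal sides at the corners of $e_v$, and now at least two vertices touch the boundary.

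Your upper-bound argument ($\leq 3$ boundary vertices) also leans on simple-connectedness to identify the outer face. The conclusion is correct, but you should instead argue it directly: each polygon touches $\partial\calR$ in at most one arc (otherwise the corresponding vertex would be a cutvertex), so the boundary polygons form a simple cycle in $G$ that bounds an empty region and hence a face; triangulation then forces length~$3$.
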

\begin{proof}
 First observe that there cannot be more than three different vertices occupying the left, top, right, and bottom boundaries of $\calR$, since $G$ is triangulated. 

 We first assume that a single vertex $u$ occupies the entire left, top right and bottom boundaries of $\calR$. Let $v\neq u$ be a vertex that minimizes the $x$-coordinate of its leftmost side $e_v$.  
Side $e_v$ lies in the interior by $v\neq u$, and cannot contain a junction (not even at its ends), else $v$ would touch a boundary or some vertex $w\neq u,v$ would be farther left than $v$.  Therefore both ends are corners, and convex for $P(v)$ since $e_v$ is leftmost.  Removing everything to the left of the line $\ell$ through $e_v$ then gives a contact-representation of the same graph, since $P(u)$ remains connected (via the top,
right and bottom boundary) and the adjacency $(u,v)$ is realized at the horizontal sides.
Therefore we can assume that at least two vertices occupy the four boundaries.




 If necessary, we reapply the previous claims to ensure the absence of vertical junctions and of two inner vertical sides not on the same boundary having the same $x$-coordinates.
\end{proof}

\begin{claim}
We may assume that exactly one vertex touches the left boundary and exactly one vertex different touches the right boundary, while preserving the previous assumptions.
\end{claim}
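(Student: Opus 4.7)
The plan is to follow the template of Claim~\ref{cl:singletons} from the simple case: identify a vertex $v$ touching the left boundary and a distinct vertex $w$ touching the right boundary, then insert a new leftmost column entirely labeled $v$ and a new rightmost column entirely labeled $w$. The insertion preserves the contact representation because the new columns are cell-adjacent to the old boundary columns (which already contain cells of $v$ and $w$ respectively), so $P(v)$ and $P(w)$ remain connected and no pairwise contact counts change; after insertion, exactly $v$ touches the new left boundary and exactly $w$ touches the new right boundary. Finally I would reapply the first two claims of this subsection to restore horizontal interior junctions and distinct $x$-coordinates for inner vertical sides.

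The chief departure from the simple case is showing that distinct $v$ and $w$ can always be found. In the simple case, $x$-monotonicity forced at least two vertices to touch each of the left and right boundaries (Claim~\ref{cl:two_left}); without $x$-monotonicity this no longer holds. Let $L$ and $R$ denote the sets of vertices touching the left and right boundaries. If there exist $v\in L$ and $w\in R$ with $v\neq w$ we are done, so the only genuine obstruction is the case $L=R=\{u\}$, where a single vertex $u$ is the unique vertex touching each of the left and right boundaries.

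In this bad case, by the previous claim there exists another boundary vertex $x$, and it must touch the top or bottom boundary; without loss of generality assume $x$ touches the top. Let $a$ be the smallest column index with cell $(1,a)$ labeled $x$ in the top row. I would then relabel the cells $(1,1),(1,2),\dots,(1,a-1)$ from $u$ to $x$, extending $P(x)$ leftward so that it touches the top-left corner and hence the left boundary. The most delicate step is verifying that this local relabeling preserves the representation: $P(x)$ stays connected since the new cells form a contiguous strip adjacent to the old $P(x)$; $P(u)$ stays connected because column $1$ at rows $2,\dots,H$ is still $u$ and joins the rest of the $u$-covered boundary through the bottom boundary; and the pairwise contact counts are preserved because the old vertical $u$--$x$ contact on the left side of $P(x)$'s top cells is absorbed, while an equally long horizontal $u$--$x$ contact appears along the bottom of the newly relabeled cells (the right vertical $u$--$x$ contact, any $x$--$w$ contacts at the bottom of the old $P(x)$, and the $u$--$w$ contacts are all unaffected). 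After this relabeling $x\in L$ while $u\in R$, so we can choose $v=x$ and $w=u$ and proceed as in the first paragraph; the three-boundary-vertex case is handled analogously by picking $x$ from the top or bottom vertices. I expect that enumerating the handful of sub-configurations of interior polygons immediately below the relabeled strip (to confirm the contact-count bookkeeping in each) will be the main technical annoyance of the proof.
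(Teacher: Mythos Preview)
Your proposal diverges from the paper in a significant way, and the divergent part has a genuine gap.

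The paper does \emph{not} prove that the left and right boundary vertices are distinct; the claim's wording is misleading, and the paper notes immediately afterward that ``in contrast to Claim~\ref{cl:singletons}, we do not prove that the two vertices can be assumed to be different.'' Its proof simply picks some vertex already on the left boundary (after first normalising so that the two left corners carry distinct labels) and prepends a full-height column of that vertex, then repeats on the right. The possibility that the two chosen vertices coincide is absorbed downstream by the relaxed definition of non-simple homotopy, which permits an outer-edge in place of an outer-face when only two vertices occur on the boundary.

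Your attempt to force distinctness in the bad case $L=R=\{u\}$ by relabeling the top-row cells $(1,1),\dots,(1,a-1)$ from $u$ to $x$ can disconnect $P(u)$. You check connectivity only for the boundary part of $P(u)$, but without $x$-monotonicity $P(u)$ may have an interior finger attached to the main body solely through one of the cells you relabel. For instance, the $4\times 11$ representation of $K_3$
\[
\begin{array}{ccccccccccc}
u&u&u&u&u&u&x&x&x&u&u\\
u&z&z&u&u&z&z&z&z&z&u\\
u&z&z&z&z&z&z&z&z&z&u\\
u&u&u&u&u&u&u&u&u&u&u
\end{array}
\]
satisfies all previous assumptions (no interior vertical junctions; all inner vertical sides at pairwise distinct $x$-coordinates; exactly two boundary vertices), has $L=R=\{u\}$ and $a=7$, and your relabeling leaves $\{(2,4),(2,5)\}$ as an isolated component of $P(u)$. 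The contact-count bookkeeping you flag as the main annoyance is not the obstacle; connectivity of $P(u)$ is.
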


\begin{proof}
  Let us assume that there is more than one vertex touching the left boundary. If all the outer vertices touch the left boundary and one of them, say, $w$ occupies both the top left and bottom left corner, then it also occupies the entire top, right and bottom boundaries. Without loss of generality, assume that $u\neq w$ has a minimal $y$-coordinate on the leftmost column. Then we can replace all the pixels below that vertex by $u$. This does not break connectivity of $v$ and also does not remove any adjacencies: indeed, since there are no interior vertical junctions, the part of $P(w)$ on the leftmost column was adjacent to at most one polygon which was not $P(u)$, and it is still adjacent to it after the replacement.

  Thus we can assume that the two vertices adjacent to the top left and bottom left corners are distinct. If there is a third vertex adjacent to the left boundary, choose it. Otherwise, choose arbitrarily one of the vertices on the left boundary. Denoting by $u$ the chosen vertex, we append a new column to the left of $\calR$ and we extend $u$ to the left so that it fills entirely that column. This does not break any boundary-adjacency since the vertices that were on the top left and bottom left corners are still adjacent to the boundary.

  If necessary, repeat on the right boundary, and reapply the previous claims to remove interior junctions and interior vertical sides not both on the same boundary and with the same $x$-coordinates.

\end{proof}

Note that in contrast to Claim~\ref{cl:singletons}, we do not prove that the two vertices can be assumed to be different.

We can now build the homotopy exactly as in the proof of Lemma~\ref{lem:gms_HHsm}. If there are three vertices on the four boundaries of the contact representations, we fix these three vertices to be the outer-face. If there are two vertices, they form the outer-edge. Then the only change is case $(b)$, which is now allowed:

\begin{itemize}
\item[(b)] Both ends of $e$ are corners, and the adjacent horizontal sides go in the same direction. Say that they both go right, the other case being symmetric. Denoting by $P(x)$ the locally convex polygon and by $P(y)$ the other one, this corresponds to a spike: the change from $h_i$ to $h_{i+1}$ is to replace $y$ by $y-x-y$.
\end{itemize}

Finally, in the contact representation, the faces correspond to junctions, and when the induced homotopy is a face-flip or an edge-slide, the faces are swept positively. Since there are no loops nor multiple edges (except perhaps the outer-edge), each junction appears once, and thus each face is swept once. Therefore the homotopy is topologically non-trivial, which concludes the proof.

 \end{proof}
\else
The proofs are in spirit very similar to those in the previous subsection
(we may now have spikes or unspikes as moves, but these simply correspond to Fig.~\ref{fig:Case2}).  However, numerous details need attention,  in particular it is
not at all obvious why the polygons created from a homotopy would be connected if they
are not $x$-monotone, and some of the steps in the proof of Lemma~\ref{lem:gms_HHsm} do not seem to hold in the non-simple setting anymore (this is why we relaxed the conditions on the outer-face and the distinctness of the start and the end of the homotopy). The full version gives the (somewhat lengthy) details.
\fi

Since \gridminor height is trivially minor-closed, testing whether a graph has \gridminor height at most $k$ can be decided in time $O(f(k) |G|^3)$ by testing the (unknown!) forbidden minors, which are in finite number by Robertson-Seymour theory. Because minor testing can be expressed
in second-order logic, and the graphs of bounded \gridminor height have bounded pathwidth,
it follows from Courcelle's theorem~\cite{Courcelle} that these minors can be tested in linear time.
 Therefore, the two previous lemmas give us the following corollary.

\begin{corollary}
\label{lem:HH_FPT}
We can decide whether a triangulated planar graph has homotopy height $\HH(G)$ at most $k$ in time $O(f(k) {\mathit poly}(|G|))$ for some computable function $f(k)$. In particular, the problem of computing the homotopy height is FPT when parameterized by the output.
\end{corollary}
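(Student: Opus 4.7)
The plan is to chain the two preceding lemmas with Robertson--Seymour graph minor theory. The two lemmas together give $\HH(G) = \gmh(G)$ for every triangulated planar graph, so the decision problem ``is $\HH(G) \leq k$?'' is equivalent to ``is $\gmh(G) \leq k$?''. It therefore suffices to decide membership in the class $\mathcal{C}_k = \{G : \gmh(G) \leq k\}$ in the claimed running time.

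Next I would verify that $\mathcal{C}_k$ is closed under taking minors. Starting from a grid-representation of $G$ of height at most $k$, the three minor operations can be realized directly on the labelling: edge deletion requires no change (the labelling of the subgraph is still valid), vertex deletion can be accommodated by handing the pixels off to any adjacent label without increasing height, and contraction of an edge $(u,v)$ corresponds to merging the labels on $P(u) \cup P(v)$ into a single label whose pixel set remains connected in the grid. Hence any minor $G'$ of $G$ satisfies $\gmh(G') \leq \gmh(G)$, so $\mathcal{C}_k$ is minor-closed. By Robertson--Seymour, $\mathcal{C}_k$ has a finite set of forbidden minors $\mathcal{F}_k$, and membership in $\mathcal{C}_k$ can be decided by testing each $H \in \mathcal{F}_k$ for minor containment in $G$. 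Using Robertson--Seymour's $O(|G|^3)$ minor-testing algorithm yields time $O(f(k) |G|^3)$ for a function $f$ depending on the sizes of the graphs in $\mathcal{F}_k$. Alternatively, since $\pw(G) \leq \gmh(G)$, one can first compute the pathwidth in FPT time; if it exceeds $k$, reject, and otherwise run the Courcelle's-theorem dynamic program for the MSO formula encoding ``contains $H$ as a minor'' for each $H \in \mathcal{F}_k$, obtaining linear time.

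The principal subtlety, rather than a technical obstacle, is that the resulting algorithm is \emph{non-uniform}: Robertson--Seymour guarantees the existence of $\mathcal{F}_k$ but does not provide it explicitly, so $f(k)$ is merely a computable function via the abstract existence of $\mathcal{F}_k$ (and an effective description of these forbidden minors remains, as noted in the introduction, an open problem of independent interest). Modulo this caveat, assembling the three ingredients --- the equality $\HH = \gmh$, the minor-closedness of $\gmh$, and a standard FPT minor-test --- gives the claimed $O(f(k)\,\mathrm{poly}(|G|))$ bound and shows that computing $\HH(G)$ is FPT in the output.
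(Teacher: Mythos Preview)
Your proposal is correct and follows essentially the same route as the paper: reduce $\HH$ to $\gmh$ via the two preceding lemmas, observe that bounded $\gmh$ is minor-closed, invoke Robertson--Seymour for a finite obstruction set, and test those minors either by the cubic Robertson--Seymour algorithm or, after bounding pathwidth, by Courcelle's theorem. You even flag the same non-uniformity caveat the paper raises in its introduction; the only wobble is the phrase ``$f(k)$ is merely a computable function via the abstract existence of $\mathcal{F}_k$,'' which slightly undercuts the caveat you immediately add --- the abstract existence of $\mathcal{F}_k$ is precisely what does \emph{not} make $f$ effectively computable --- but this tension is already present in the paper's own statement.
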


\section{Strictness examples}

We have now given all the inequalities needed for
Equations~\ref{equ:result} and \ref{equ:result2}.  In this
section, we argue that many of these inequalities are strict
by exhibiting suitable planar triangulations.

\iffull
\subsection{Pathwidth vs.~\Gridminor height} 

Recall that $\pw(G)\leq \gmh(G)$ since a grid of height $k$ has pathwidth $k$.
We now show that this is strict.
\fi

\begin{lemma}
There exists a planar triangulated graph $G$ with $pw(G)=3$ and
$\gmh(G) \in \Omega(n)$.
\end{lemma}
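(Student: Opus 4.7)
The plan is to exhibit an explicit family of triangulated planar graphs $G_n$ on $n$ vertices for which the pathwidth can be tightly controlled at $3$ while the outer-planarity grows linearly; combined with the inequality $2\,op(G)-1\le \gmh(G)$ already established in Equation~\ref{equ:result2}, this immediately forces $\gmh(G_n)\in\Omega(n)$. My first candidate is the $k$-layer \emph{nested triangles} graph: draw $k$ triangles $(a_i,b_i,c_i)$ for $i=1,\dots,k$ concentrically with $(a_1,b_1,c_1)$ as the outer face, join layer $i$ to layer $i+1$ by the three ``pillar'' edges $a_ia_{i+1}$, $b_ib_{i+1}$, $c_ic_{i+1}$, and triangulate each annular region. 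The innermost triangle $(a_k,b_k,c_k)$ is a face, so the result is a planar triangulation on $n=3k$ vertices.

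The proof then splits into two natural parts. For the outer-planarity bound, I would argue by direct induction on the layers: the first peel removes exactly $\{a_1,b_1,c_1\}$ and exposes $(a_2,b_2,c_2)$ as the new outer triangle, and this argument repeats verbatim for each successive layer, yielding $op(G_n)=k=n/3$. Applying Equation~\ref{equ:result2} then gives $\gmh(G_n)\ge 2k-1\in\Omega(n)$, which is what we want.

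For the pathwidth bound, I would construct an explicit path decomposition that sweeps layer-by-layer, maintaining in each bag three vertices currently ``straddling'' the annulus being processed plus one extra vertex being introduced. To make this work with bags of size exactly $4$, I would choose the diagonals of the three annular quadrilaterals carefully (for example, all diagonals emanating cyclically from the same corner of each annulus, so that once a pillar-endpoint's diagonal partner has been introduced it can be dropped immediately). Concretely, between layer $i$ and layer $i+1$ the bags would go $\{a_i,b_i,c_i\} \to \{a_i,b_i,c_i,a_{i+1}\}\to\{a_i,b_i,a_{i+1},b_{i+1}\}\to\{b_i,a_{i+1},b_{i+1},c_{i+1}\}\to\{a_{i+1},b_{i+1},c_{i+1}\}$, verifying that every pillar edge, every diagonal, and every layer triangle is contained in some bag. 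If the naive nested-triangles graph stubbornly needs width $4$ due to the six inter-layer edges, I would modify the construction slightly (removing one pillar per annulus and re-triangulating the resulting pentagon with a single added edge) in a way that preserves both the planar triangulation property and the layer-by-layer nesting, while bringing the pathwidth down to exactly $3$.

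The main obstacle is the pathwidth analysis: the nested topology is exactly what produces the linear outer-planarity, but it also produces six edges per annulus which press against the width-$3$ constraint, and the decomposition must be choreographed (together with the choice of diagonals) so that no bag ever simultaneously needs four vertices of the outer layer and one of the inner. Matching the stated bound $pw(G)=3$ therefore hinges on this delicate coordination between the triangulation choice and the bag schedule; the outer-planarity lower bound, by contrast, is essentially automatic once the nested structure is fixed.
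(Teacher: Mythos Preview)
Your approach matches the paper's: it too uses the nested-triangles graph and derives $\gmh\in\Omega(n)$ from the outer-planarity via Equation~\ref{equ:result2}. There is, however, a real gap in your outer-planarity step. You compute the peeling process starting from the natural outer triangle $(a_1,b_1,c_1)$ and conclude $op(G_n)=k$, but recall from Section~\ref{sec:definitions} that $op(G)$ is defined as the \emph{minimum} of $op(G,f)$ over \emph{all} choices of outer face $f$. If one instead declares a face in a middle annulus to be the outer face, the peel finishes in roughly $k/2$ rounds, so your claimed value $op(G_n)=n/3$ is too large. The paper's proof addresses exactly this: it observes that for \emph{any} choice of outer face at least $n/6$ of the triangles remain nested, whence $op(G)\ge n/6$, which is still $\Omega(n)$ and suffices.

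A smaller issue: your explicit bag schedule drops $c_i$ before $c_{i+1}$ is introduced, so the pillar edge $c_ic_{i+1}$ is never covered; also, cyclically symmetric diagonals such as $a_ib_{i+1}$ place two vertices four apart in the natural linear order and hence cannot sit in any width-$3$ sliding window. The triangulation that actually makes width~$3$ work is the asymmetric one with diagonals $b_ia_{i+1}$, $c_ib_{i+1}$, $c_ia_{i+1}$, together with sliding bags $\{a_i,b_i,c_i,a_{i+1}\}$, $\{b_i,c_i,a_{i+1},b_{i+1}\}$, $\{c_i,a_{i+1},b_{i+1},c_{i+1}\}$. The paper simply cites the original nested-triangles references for $pw(G)=3$ rather than spelling this out, so your backup plan of tweaking the triangulation is unnecessary once the diagonals are chosen correctly.
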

\begin{proof}
Graph $G$ is the ``nested triangles graph'' from \cite{DLT84,FPP88}, consisting
of $n/3$ triangles that are stacked inside each other and connected 
in such a way that the result is triangulated and has pathwidth 3.
See Fig.~\ref{fig:badGraphs}(a).  For any choice of outer-face there are
at least $n/6$ triangles that remain stacked inside each other.
Therefore $op(G)\geq n/6$ and $\gmh(G)\geq n/3-1$.
\end{proof}

\iffull
\subsection{\Gridminor height, simple \gridminor height and outer-planarity}

Directly from the definition we have $\gmh(G)\leq \sgmh(G)$.
We now show that this can be strict.
\fi

\begin{lemma}
\label{lem:gmh_sgmh}
There exists a planar triangulated graph that has \gridminor height at most 4,
but simple \gridminor-height $\Omega(n)$.
\end{lemma}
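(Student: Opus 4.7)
The plan is to construct an explicit triangulated planar graph $G_n$ on $\Theta(n)$ vertices, exhibit a non-simple \gridminor representation of height $4$, and then argue that every simple \gridminor representation requires $\Omega(n)$ rows. The separation will hinge on the fact that non-simple representations allow a polygon $P(v)$ to have a non-$x$-monotone shape (for instance, a $\mathrm{U}$-shape wrapping around a core of other polygons), realizing many adjacencies through a small constant number of rows, while simple representations force each $P(v)$ to be $x$-monotone and thus extremely restricted.

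For the construction, I would take $G_n$ to be a triangulated planar graph built around two ``pole'' vertices $s,t$ together with a collection of $\Theta(n)$ short parallel branches between them, triangulated by adding one or two additional central vertices whose polygons, in an optimal representation, naturally wrap around the branches. The branches are arranged in nested planar order between $s$ and $t$, so that the planar embedding forces any curve separating $s$ from $t$ to interact with every branch. The central vertex plays the role of an apex whose polygon can cover all branches at once only when permitted to be non-$x$-monotone.

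For the upper bound $\gmh(G_n) \leq 4$, I would exhibit a contact-representation with $4$ rows directly: the pole polygons $P(s), P(t)$ occupy narrow columns at the left and right; the branches fit into the interior rows; and the central vertex receives a $\mathrm{U}$-shaped (non-$x$-monotone) polygon that threads above and below the branches, picking up all required adjacencies without increasing the height. The explicit drawing would show every edge of $G_n$ realized as a shared unit segment between two polygon boundaries, fitting inside a $W \times 4$ grid with $W = \Theta(n)$.

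For the lower bound $\sgmh(G_n) \in \Omega(n)$, I would invoke the equality $\sgmh(G) = \sHH(G)$ established in Lemmas~\ref{lem:HH_gm} and~\ref{lem:gms_HHsm}, reducing the problem to showing that every simple discrete homotopy of $G_n$ contains some curve $h_i$ of length $\Omega(n)$. The key combinatorial/topological step: at some intermediate moment of the sweep, the simple path $h_i$ must topologically separate $s$ from $t$ in the planar disk bounded by the outer face. By construction, any simple curve in $G_n$ achieving such a separation must pass through $\Omega(n)$ distinct vertices, because the nested-branch structure forces every such separator to visit a vertex of each branch (a non-simple curve could instead spike into each branch in turn, which is exactly what the simple restriction forbids).

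The main obstacle is the lower bound: formally establishing that \emph{every} simple homotopy has a curve touching $\Omega(n)$ vertices. The difficulty is not merely exhibiting one bad sweep but ruling out all clever sweeping orders. I would handle this by identifying in $G_n$ a collection of $\Omega(n)$ pairwise ``crossing'' features (e.g., branches whose relative planar order forces them to appear simultaneously on any simple $s$--$t$ separator), so that at some time $i$ the curve $h_i$ is topologically committed to visiting all of them. This is where the distinction between simple and non-simple matters most: non-simple curves can swap which branches they currently touch by spiking in and out, while simple curves cannot.
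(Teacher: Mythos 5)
Your high-level strategy (an apex-like vertex enabling a non-$x$-monotone polygon for the height-4 representation, plus a reduction of the lower bound to $\sHH$ via $\sgmh=\sHH$) is the same as the paper's, which uses the specific graph of~\cite{Bie11} and obtains $\gmh(G)\le 4$ by exhibiting a supergraph $G'$ with a height-4 straight-line drawing and invoking minor-closedness of $\gmh$. But your lower bound has a genuine gap, in two respects. First, the claim that some curve $h_i$ must ``topologically separate $s$ from $t$'' and therefore visit a vertex of every branch is not sound as stated: a path in $G$ that separates $s$ from $t$ in the disk may simply pass through the apex vertex (or through $s$ or $t$ themselves), meeting every $s$--$t$ branch at that single shared vertex, and such a separator has constant length. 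Simplicity of the curves does not by itself forbid this. Second, and more fundamentally, a construction consisting of $\Theta(n)$ \emph{short} parallel branches between two poles most likely does not have large simple \gridminor height at all: one can sweep it branch by branch with curves of the form boundary--$s$--$v_i$--$t$--boundary, each of constant length, so your intended hard instance is probably an easy one.

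The property that actually drives the lower bound is metric, not merely topological: you need a vertex $a$ such that \emph{every} pair of internally disjoint paths from $a$ to the outer face has total length $\Omega(n)$. Concretely, in the paper's graph there is a near-universal vertex $x$ with $G\setminus x$ connected but of linear diameter (a long spine), so that $d_{G\setminus x}(a,\partial)\in\Omega(n)$ for a suitable $a$. Then one takes the first curve $h_i$ containing $a$, splits it at $a$ into $\pi_1,\pi_2$, which are internally disjoint by simplicity (your condition~\ref{item:simple}); at most one of them contains $x$, and the other is a path from $a$ to the outer face avoiding $x$, hence of length $\Omega(n)$. Your ``collection of pairwise crossing features'' would have to be replaced by this distance argument, and your construction would have to be changed so that deleting the apex leaves a graph of linear diameter (a long triangulated strip rather than many short branches) while still being a minor of a height-4-drawable graph.
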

\begin{proof}
Consider graph $G$ in Fig.~\ref{fig:badGraphs}(b),
which is taken from \cite{Bie11}.  It is a minor of the graph 
$G'$ in Fig.~\ref{fig:badGraphs}(c), which has a straight-line
drawing of height 4.
Therefore $\sgmh(G')\leq \SLh(G')\leq 4$, which implies 
$\gmh(G)\leq 4$ since $G$ is a minor of $G'$.

\begin{figure}[t]
\hspace*{\fill}
\begin{minipage}{0.3\linewidth}
\centering
\includegraphics[width=0.99\linewidth,page=3]{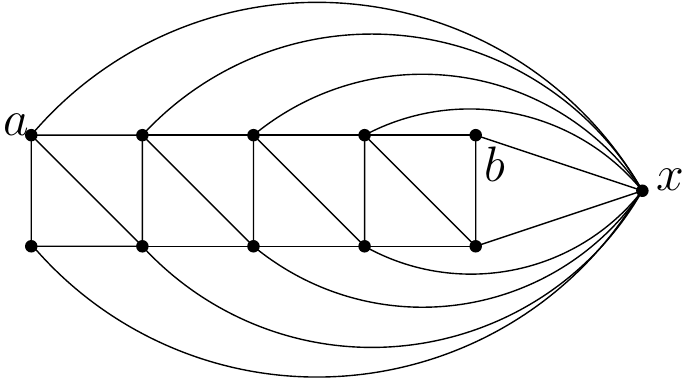}

(a)
\end{minipage}
\hspace*{\fill}
\begin{minipage}{0.3\linewidth}
\centering
\includegraphics[width=0.99\linewidth,page=1]{contractHeight.pdf}

(b)
\end{minipage}
\hspace*{\fill}
\begin{minipage}{0.3\linewidth}
\centering
\includegraphics[width=0.99\linewidth,page=2]{contractHeight.pdf}

(c)
\end{minipage}
\hspace*{\fill}
\caption{(a) Nested triangles.  (b) The graph from \cite{Bie11}.
(c) A graph of which it is a minor.}
\label{fig:contractHeight}
\label{fig:badGraphs}
\end{figure}

We claim that $\sgmh(G)\in \Omega(n)$, and prove this by 
arguing that $\sHH(G)\in \Omega(n)$; the two parameters are the same.
Crucial to our argument is that for many vertex-pairs
any path connecting them \emph{without} using $x$ has length 
$\Omega(n)$; we will find such a path from the curves in a homotopy.

So consider a simple discrete homotopy of height $k$, and let $f$
be the face it uses as the outer-face (it need not be
the outer-face used in Fig.~\ref{fig:badGraphs}(b)).  Graph $G\setminus x$
is connected, but $d_{G\setminus x}(a,b)=
(n-3)/2$.  Define $d_a$ to be the minimum distance in $G\setminus x$ 
from $a$ to some vertex on face $f$, and similarly define $d_b$.
Since $f$ is a triangle, we can combine two such shortest paths to
obtain a path from $a$ to $b$ in $G\setminus x$ of length at most
$d_a+d_b+1$, therefore (up to renaming) $d_a\geq (n-5)/4$.

In particular, for $n\geq 7$ vertex $a$ is not on $f$. 
Let $h_i$ be a curve of the homotopy that contains $a$ and note
that it begins and ends on $f$.
Split $h_i$ into two paths $\pi_1$ and $\pi_2$ at vertex $a$.
These paths are vertex-disjoint except for $a$ since
the homotopy is simple. 
At most one of these paths 
contains $x$. 
Say $\pi_1$ does not contain $x$ and hence connects $f$
to $a$ without visiting $x$.  Therefore $|\pi_1|\geq d_a \geq (n-5)/4$,
and the height of the homotopy is $\Omega(n)$.
\end{proof}

In particular, Lemma~\ref{lem:gmh_sgmh} provides a different
(and in our opinion more accessible) proof that the graph
in Fig.~\ref{fig:badGraphs}(b) requires $\Omega(n)$ height in
any straight-line drawing \cite{Bie11}.

\iffull
\subsection{\Gridminor height and  graph-drawing height} 

Recall that $\sgmh(G)\leq \VRh(G)=\SLh(G)$ since a visibility
representation can easily be turned into a simple \gridminor representation.  
We now show that this is strict.  We need a definition.  

\else
\medskip

We now want to show that the inequality $\sgmh(G)\leq \SLh(G)$
can be strict, and for this, need a definition.
\fi
A graph $G$ is called a \emph{series-parallel
graph (with terminals $s$ and $t$)} if it either is an edge $(s,t)$, or
if it was obtained via a combination in series or in parallel.  Here,
a \emph{combination in series} 
takes two such graphs $G_i$ with terminals $s_i,t_i$
for $i=1,2$, and identifies $t_1$ with $s_2$.  A \emph{combination in parallel}
also takes two such graphs and identifies $s_1$ with $s_2$ and $t_1$ with
$t_2$.  It is well-known that such graphs are planar.

\begin{lemma}
\label{lem:SP_gm}
Any series-parallel graph has a simple \gridminor
representation of height $O(\log n)$.
\end{lemma}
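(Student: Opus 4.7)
My plan is to rely on the chain of inequalities established earlier in this paper, namely $\sgmh(G) \leq \VRh(G) = \SLh(G)$, together with the classical fact from the graph drawing literature that every $n$-vertex series-parallel graph admits a planar straight-line drawing of height $O(\log n)$ (see, e.g., the work of Frati and collaborators on area- and height-efficient drawings of series-parallel graphs). Combining the two, the lemma follows immediately.

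For a self-contained proof, I would proceed by induction on a balanced SP-tree of $G$. First, one observes that every series-parallel graph admits an SP-tree of depth $O(\log n)$, obtainable by standard rebalancing using associativity and commutativity of series and parallel compositions. Then, for each node of the tree with associated subgraph $H$ and terminals $s_H,t_H$, one inductively constructs a simple \gridminor representation of $H$ in which the terminals lie on the boundary of the enclosing rectangle (either on the left/right columns or on the top/bottom rows). For the base case (a single edge) use a $1\times 2$ grid. For a series composition $H=H_1\cdot H_2$ with shared vertex $m$, glue the two sub-reps horizontally along their common $m$-column: the resulting height is $\max(h_1,h_2)$. For a parallel composition $H=H_1\| H_2$, the naive vertical stack gives height $h_1+h_2$, which is too large; instead, one must arrange the two sub-reps more cleverly, using $90^\circ$ rotations of one (to interchange V-type and H-type orientation) and interleaving them so that $s$ and $t$ occupy common rows or columns of both sub-reps.

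The main obstacle is the parallel combine step: one must arrange the sub-reps so that the combined height grows only additively with the depth of the SP-tree (rather than multiplicatively), all while preserving the $x$-monotonicity (simplicity) of every polygon produced. A careful bookkeeping argument on the balanced SP-tree, tracking both the height and width of each sub-rep and choosing combine orientations/rotations at each node to keep the smaller dimension bounded, then shows that the total height at the root is $O(\log n)$. This balancing argument parallels the one used in the known logarithmic-height straight-line drawings of SP graphs.
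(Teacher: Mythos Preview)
Your first approach is fatally flawed: it is simply \emph{false} that every series-parallel graph admits a straight-line drawing of height $O(\log n)$. Indeed, the very next result in the paper (Theorem~\ref{thm:gms_h}) invokes Frati's lower bound that there exist series-parallel graphs requiring height $\Omega(2^{\sqrt{\log n}})$ in any planar straight-line drawing. The entire point of Lemma~\ref{lem:SP_gm} is to exhibit a gap between $\sgmh$ and $\SLh$, so you cannot prove it by going through $\SLh$. (Also, the inequality $\sgmh(G)\leq \SLh(G)$ was only established for triangulated graphs, which series-parallel graphs are not.)

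Your second, ``self-contained'' approach is too vague to be a proof, and the one concrete idea you offer---$90^\circ$ rotations to handle parallel composition---does not work as stated: rotating swaps $x$-monotonicity for $y$-monotonicity, so the result is no longer a \emph{simple} grid-major representation in the required sense. The paper's actual construction avoids rotations entirely. It maintains the invariant that the terminals $s,t$ occupy the top-right and bottom-right corners, and for either a series or a parallel composition it places the \emph{smaller} component $G_2$ (with $m(G_2)\leq m/2$) to the right of the larger one, reserving one or two extra rows above/below $\Gamma_2$ for the terminals (and, in the series case, an extra column for an ``S-shaped'' bend of the shared vertex). Because only the smaller side needs to fit in fewer rows, the height satisfies the recurrence $H(m)\leq \max\{H(m_1),\,H(m/2)+2\}$, which solves to $2\lceil\log m\rceil+2$. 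This is the missing idea: you do not need a balanced SP-tree or rotations, only the observation that the smaller subgraph can always be accommodated with two fewer rows.
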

\begin{proof}
Roughly speaking, we ``bend'' some of the bars in the visibility 
representations of series-parallel graphs from \cite{Bie11} to guarantee
logarithmic height.  Formally we proceed by induction on $m$,
and prove that if $G$ has $m$ edges, then it 
has a simple \gridminor representation of
height $2 \lceil \log m \rceil + 2$ where the
top-right corner is labelled $s$ and the bottom-right corner is labelled $t$.
Furthermore, any column that contains $s$ and/or $t$ also has its 
topmost/bottommost grid point labelled with $s$/$t$. 
In the base case $G$ is an edge $(s,t)$ and we can simply label a $1\times 2$
grid with $s$ and $t$.

Assume first that $G$ was obtained by parallel combinations of $G_1$ and $G_2$.
Consider Fig.~\ref{fig:gmSP}.
After renaming we may assume $m(G_2)\leq m(G_1)$, so $m(G_2)\leq m/2$.
Recursively obtain a \gridminor representation $\Gamma_1$ of 
$G_1$, and pad it with duplicate rows (if needed) so that it 
has height $2\lceil \log m \rceil +2$.
Recursively obtain a \gridminor representation $\Gamma_2$ of 
$G_2$  of height at most $2\lceil \log(m(G_2)\rceil +2 
\leq 2\lceil \log m \rceil$.  Place $\Gamma_2$
to the right of $\Gamma_1$, leaving the top and bottom row unused. 
Label the points above $\Gamma_2$ with $s$ and the points below $\Gamma_2$
with $t$ and verify all conditions.  

Now assume that $G$ was obtained by a series combination of two
graphs $G_1,G_2$ where $G_1$ had terminals $s,x$ and $G_2$ had
terminals $x,t$.  We assume $m(G_2)\leq m(G_1)$, the other case is
symmetric.
Recursively obtain \gridminor representations
$\Gamma_1$ and $\Gamma_2$ of $G_1$ and $G_2$ of
height $2\lceil \log m \rceil+2$ and $2\lceil \log m \rceil$ as before.
Place $\Gamma_2$ to the right of $\Gamma_1$, leaving the top two rows
unused, and leaving one column between the representations unused.
All grid-points in this column, as well as in the row above $\Gamma_2$, are
labelled $x$.  
(In particular, the grid-points labelled $x$ form an ``$S$-shape''
as if we had bent a bar in the middle.)  The second row above $\Gamma_2$
is labelled $s$ so that again the top-right corner has $s$.
One easily verifies that this is a simple \gridminor representation of $G$
with height $2\lceil \log m \rceil +2 \in O(\log n)$.
\end{proof}

\begin{figure}[ht]
\hspace*{\fill}%
\iffull%
\includegraphics[width=0.4\linewidth,page=1]{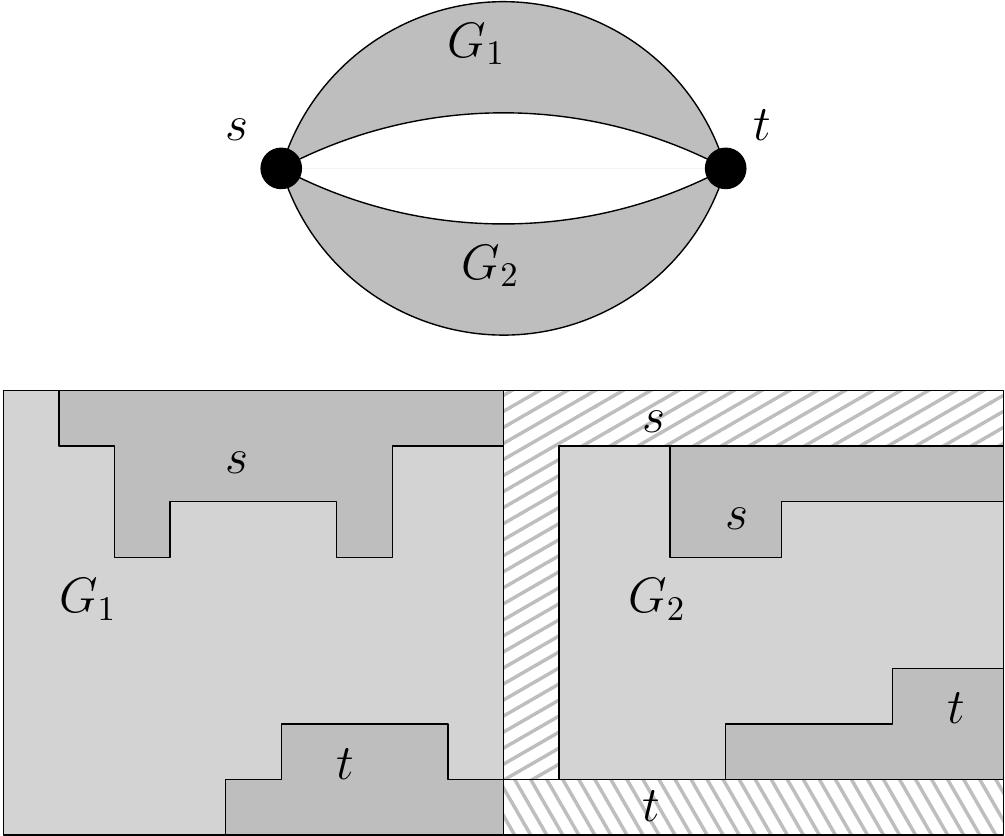}%
\else%
\includegraphics[width=0.4\linewidth,page=1,trim=0 0 0 110,clip]{gridMinorSP.pdf}%
\fi%
\hspace*{\fill}%
\iffull%
\includegraphics[width=0.4\linewidth,page=2]{gridMinorSP.pdf}%
\else%
\includegraphics[width=0.4\linewidth,page=2,trim=0 0 0 110,clip]{gridMinorSP.pdf}%
\fi%
\hspace*{\fill}%
\caption{\Gridminor representations of series-parallel graphs.}
\label{fig:gmSP}
\end{figure}

\begin{theorem}
\label{thm:gms_h}
There exists a planar triangulated graph $G$ for which
$\sgmh(G)\in O(\log n)$ but $\SLh(G) \in \Omega(2^{\sqrt{\log n}})$.
\end{theorem}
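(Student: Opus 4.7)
The plan is to combine Lemma~\ref{lem:SP_gm} with a known super-polylogarithmic lower bound on the height of planar straight-line drawings of series-parallel graphs. First, I would invoke from the literature the existence of a planar series-parallel graph $G_0$ on $n$ vertices for which every planar straight-line drawing has height $\Omega(2^{\sqrt{\log n}})$; such lower bounds are known from work on height/area of straight-line drawings of series-parallel graphs. By Lemma~\ref{lem:SP_gm}, $G_0$ admits a simple \gridminor representation $\Gamma$ of height $O(\log n)$, equivalently a contact representation by $x$-monotone polygons inside a rectangle of height $O(\log n)$.

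Second, since the theorem requires a \emph{triangulated} planar graph, I would augment $\Gamma$ into a simple \gridminor representation of some triangulated planar supergraph $G$ of $G_0$, while preserving logarithmic height. Concretely, I would trace through the inductive construction of Lemma~\ref{lem:SP_gm}: at each series or parallel combination, only a bounded number of new non-triangular faces arise (essentially the ``gaps'' between the two sub-representations being placed side by side). Each such face can be triangulated by inserting $O(1)$ new $x$-monotone polygons (with fresh vertex labels) using $O(1)$ extra rows and columns, while maintaining the invariant that $s$ and $t$ occupy the top-right and bottom-right corners. After the top-level combination, three further polygons are wrapped around the representation to triangulate the outer face as in the note at the end of the introduction. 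Since the recursion has depth $O(\log n)$ and each level contributes only $O(1)$ extra height, the final representation has height $O(\log n)$.

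Third, since $G \supseteq G_0$, any straight-line drawing of $G$ restricts to a straight-line drawing of $G_0$ of the same height, so $\SLh(G) \geq \SLh(G_0) \in \Omega(2^{\sqrt{\log n}})$. Combined with $\sgmh(G) \in O(\log n)$ from the previous step, this gives the claimed separation.

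The main obstacle lies in the second step: we must ensure that each local triangulation preserves $x$-monotonicity of \emph{all} polygons (not just the new ones) and adds only an additive constant to the height per recursion level. The key point is that Lemma~\ref{lem:SP_gm}'s construction already leaves predictable buffer rows/columns (used to place the $s,t$ labels and the ``bent'' $x$-polygon), and the new faces created at each step have bounded boundary complexity, so a fixed triangulation gadget can be inserted uniformly without perturbing the already-built sub-representations. Verifying that this gadget exists and fits in $O(1)$ additional rows/columns is the technical core of the argument.
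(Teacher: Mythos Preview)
Your high-level plan---invoke Frati's lower bound for series-parallel graphs, apply Lemma~\ref{lem:SP_gm}, pass to a triangulated supergraph, and use the subgraph inequality $\SLh(G)\ge\SLh(G_0)$---is exactly the paper's. The divergence, and the gap, is entirely in your second step.

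Two issues with your recursive triangulation. First, the recursion in Lemma~\ref{lem:SP_gm} does \emph{not} have depth $O(\log n)$: at each combination the smaller piece $G_2$ has at most $m/2$ edges, but the larger piece $G_1$ can have $m-1$ edges, so the recursion tree can have depth $\Theta(m)$ along the $G_1$-branch. (The $O(\log m)$ height bound comes from the fact that only the $G_2$-side ever forces new rows, not from bounded recursion depth.) If you add even one row per combination step, the height can blow up to $\Theta(n)$. Second, once you triangulate the sub-representations by inserting fresh polygons, you lose control over what appears on their left and right boundary columns; the interface between $\Gamma_1$ and $\Gamma_2$ can then create many junctions, not $O(1)$, so the ``bounded number of new non-triangular faces'' claim is unjustified without a much stronger boundary invariant than the one Lemma~\ref{lem:SP_gm} maintains.

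The paper avoids all of this by triangulating once, at the end, with essentially no extra height. The point is that in \emph{any} contact representation by orthogonal polygons tiling a rectangle, at most four polygons can meet at a point, so the contact graph of $\Gamma_N$ already has only triangle and quadrilateral inner faces. A quadrilateral face corresponds to a $2\times 2$ block of the grid with four distinct labels; duplicating one of its columns and changing a single label splits it into two triangles, increasing width but not height. Then wrapping the whole picture with one new row/column on each side, labelled by two fresh vertices $s'$ and $t'$, handles the outer face. This adds $O(1)$ rows total and requires no invariant beyond what Lemma~\ref{lem:SP_gm} already delivers.
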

\begin{proof}
\iffull\else(Sketch) \fi
We know from Frati \cite{Frati10} that
for any $N$, there exists a series-parallel graph $G_N$ with $n\geq N$
vertices for which any planar straight-line drawing has height
$\Omega(2^{\sqrt{\log n}})$.  
\iffull
Let $\Gamma_N$ be a
simple \gridminor representation of $G'$ of height $O(\log n(G'))$.
Let $\Gamma_N'$ be the simple \gridminor representation obtained
by adding new grid-lines on all four sides of $\Gamma_N$, and by labelling
the top and right with a new vertex $s'$ and the remaining gridpoints
with another new vertex $t'$.  If $\Gamma_N'$ induces a face of degree 4,
then one interior face of the grid has four distinct labels; we can
duplicate the left column and change one label so that an extra edge is
added to the represented graph.  The final result is a grid-representation
of a planar graph $G$ that is triangulated except its outer-face is a duplicate
edge $(s',t')$.  After deleting one copy, we get the desired graph: $G$
has $n+2$ vertices, a simple \gridminor representation $\Gamma$
of height $O(\log n)$,  and contains $G_N$ as a subgraph, so any
straight-line drawing of $G$ has height
$\Omega(2^{\sqrt{\log n)}})$.
\else
Also $\sgmh(G_N)\in O(\log n)$ by Lemma~\ref{lem:SP_gm}.  A suitable supergraph
of $G_N$ (see the full version) is triangulated and satisfies all
properties.
\fi
\end{proof}

\begin{lemma}
\label{lem:const-op-log-gmh}
There exists a planar triangulated graph $G$ with $op(G)=2$
but $\gmh(G)\in \Omega(\log n)$.
\end{lemma}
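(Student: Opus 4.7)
The plan is to exhibit a triangulated planar graph $G$ with only three outer-face vertices that contains a complete binary tree as a subgraph. Since the pathwidth of a complete binary tree on $m$ vertices is $\Theta(\log m)$, and pathwidth is a subgraph-monotone lower bound on $\gmh$, this will immediately yield $\gmh(G) = \Omega(\log n)$; meanwhile the three outer vertices dominate every tree vertex, so one peeling step leaves only the tree, giving $op(G) = 2$.

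First I would take a complete binary tree $T$ on $\Theta(n)$ vertices and fix a planar embedding; its unique face is bounded by a closed walk $v_1, \ldots, v_L$ that traverses each edge twice and visits every vertex. Next I would partition this cyclic walk into three contiguous arcs $A, B, C$ that share exactly one vertex at each of the three junctions. Then I would introduce three new vertices $a, b, c$ in the exterior together with the outer triangle $(a,b,c)$, and connect $a$ to every tree vertex appearing in arc $A$ (symmetrically for $b, c$). By construction the outer face is $(a,b,c)$ and every inner face is a triangle: either a face $(a, v_i, v_{i+1})$ for a tree edge whose endpoints lie inside a single arc, or a face $(a, x, b)$ where $x$ is one of the shared junction vertices. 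Crucially, no edges between tree vertices are added, so $G[V(T)] = T$.

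Checking the two claims is then short. With $\{a,b,c\}$ as outer face, every tree vertex is adjacent to one of $a, b, c$, so a single peeling step leaves exactly $T$; as a plane tree, it has every vertex on its unique face, so $op(G) \le 2$, with equality once $m$ is large enough. For the grid-minor height bound, I would invoke $\pw(G) \le \gmh(G)$ from Section~\ref{sec:definitions} together with subgraph monotonicity of pathwidth, giving $\gmh(G) \ge \pw(G) \ge \pw(T) = \Omega(\log n)$.

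The main obstacle I expect is purely combinatorial: ensuring that triangulation does not introduce chords inside $T$, since any such chord could enclose a tree vertex in an inner face and spoil $op = 2$. The arcs-with-shared-junction construction side-steps this, because every interior face of $G$ automatically picks up one of $\{a,b,c\}$ as apex and an arc of length one or two in the boundary walk of $T$, so it is already a triangle without needing any diagonal within $T$. Everything else reduces to bookkeeping, together with a citation to the classical fact that the pathwidth of a complete binary tree grows linearly in its depth.
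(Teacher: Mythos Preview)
Your high-level plan is exactly the paper's: embed a complete binary tree $T$ in a triangulated graph with a thin outer layer, then use $\gmh(G)\ge\pw(G)\ge\pw(T)=\Omega(\log n)$ together with a direct check that $op(G)=2$. The problem is your specific triangulation.

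The three-arc construction cannot yield a simple triangulated graph once the tree has more than three leaves. In the boundary walk of a plane tree every edge is traversed twice, and for a leaf edge $(u,\ell)$ the two traversals are \emph{consecutive} (down to $\ell$, then immediately back). Thus each leaf forces an arc boundary to sit at that leaf; four leaves already need four cut points, but three arcs provide only three. Whenever both traversals of some tree edge $(u,v)$ land in the same arc $A$, your apex $a$ would have to bound triangular faces on \emph{both} sides of $(u,v)$, which in a simple planar graph is impossible without a second copy of $(a,u)$ or $(a,v)$. So either the graph is not simple or some inner face is not a triangle; your face inventory ``$(a,v_i,v_{i+1})$ for a tree edge'' silently counts each tree edge once, whereas it bounds two inner faces.

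The paper's fix is precisely to add the chords you were trying to avoid: first extend $T$ to a \emph{maximal outerplanar} graph $H$, so that the outer boundary becomes a simple Hamiltonian cycle (each vertex appears once), and then add a single universal vertex. One apex now suffices and every face is a triangle without multi-edges. Your worry that chords could ``enclose a tree vertex'' is unfounded: $H$ is still outerplanar, so after peeling the outer triangle every remaining vertex lies on the new outer face and one more peel finishes, giving $op(G)=2$. The pathwidth bound survives because $T$ remains a subgraph of $G$.
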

\begin{proof}
Take any tree $T$ that has pathwidth $\Omega(\log n)$, for example
a complete binary tree.  This is an outer-planar graph; add edges
to the graph while maintaining outer-planarity until the graph is
maximal outer-planar, hence 2-connected and all faces except the
outer-face are triangles.  Insert a new vertex in
the outer-face and make it adjacent to all other vertices; the
result (see Fig.~\ref{fig:const-op-log-gmh}) is a triangulated planar graph $G$ with outer-planarity 2
and $\gmh(G)\geq \pw(G)\geq \pw(T) \in \Omega(\log n)$.
\end{proof}

\begin{figure}[t]
\centering\includegraphics[width=0.8\textwidth]{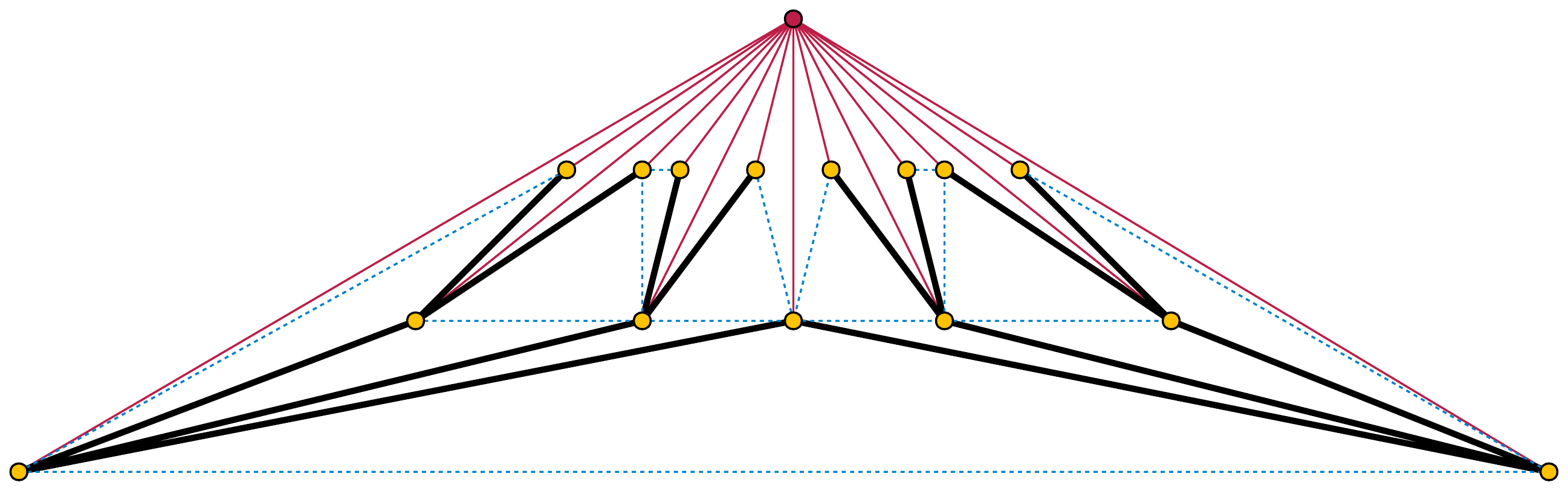}
\caption{A graph with $op(G)=2$ but $\gmh(G)\in \Omega(\log n)$ (Lemma~\ref{lem:const-op-log-gmh}):
a complete binary tree (thick black edges), augmented to become maximal outer-planar (dashed blue edges), with a new vertex added in the outer face (thin red edges).}
\label{fig:const-op-log-gmh}
\end{figure}

\iffull
\section{Algorithms for simple \gridminor height}  
\label{sec:alg-sgmh}
In this section we develop a fixed-parameter tractable algorithm for simple \gridminor height.
Our algorithm uses Courcelle's theorem to recognize the contact-representations of simple \gridminor representations. To do this, we prove that (for fixed values of the height) the following properties of these contact-representations all hold:
\begin{itemize}
\item The contact-representation can be assumed to have only a bounded number of distinct shapes of boundary between pairs of adjacent orthogonal polygons (Lemma~\ref{lem:few-shapes}).
\item The realizability of a single orthogonal polygon, with specified shapes for each of the boundaries with its adjacent polygons, can be expressed in logical terms (Claim~\ref{clm:logical-expression}).
\item A contact-representation with specified boundary shapes exists if and only if each of its polygons is realizable, independently of the others (Lemma~\ref{lem:one-for-all-and-all-for-one}).
\end{itemize}

\subsection{Boundary shapes}

We will assume throughout this section that a contact-representation of height $h$ has vertices with integer $y$-coordinates ranging from $0$ to $h$. However, we allow the $x$-coordinates to be arbitrary real numbers rather than requiring them to be integers. This relaxation has no effect on the existence of contact-representations, but is convenient for us in allowing parts of the representation to be transformed by arbitrary monotone transformations of their $x$-coordinates (keeping the $y$-coordinates unchanged).

We define the \emph{shape} of the boundary $B$ of two polygons in a simple orthogonal contact-representation
to be a polygonal chain $C$ with the following properties:
\begin{itemize}
\item The line segments of $C$ correspond one-to-one with the line segments of $B$, with the same orientations. (We will call these \emph{segments} for short, to distinguish them from the edges of the underlying maximal planar graph.)
\item All $y$-coordinates (heights) of vertices in $C$ equal the coordinates of the corresponding vertices in $B$.
\item The $x$-coordinates of vertices in $C$ are non-negative integers.
\item The $x$-coordinate of the leftmost vertex (or vertices) in $C$ equals zero.
\item When two vertices in $C$ are connected by a horizontal segment in $C$, their $x$-coordinates differ by $\pm 1$.
\end{itemize}

These properties define a unique shape for each polygon-polygon boundary, invariant under $x$-monotone transformations of the boundary.
Our goal in this section is to show that the number of distinct shapes can be bounded by a function of the height~$h$ of the contact-representation.

\begin{lemma}
\label{lem:above-below}
In a simple contact-representation, if two polygons $X$ and $Y$ share a
non-vertical boundary, then one of them is consistently above or below
the other one at all $x$-coordinates shared by both polygons.
\end{lemma}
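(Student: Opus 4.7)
The plan is to use the $x$-monotonicity of both polygons (which we get from simplicity of the contact-representation) together with a continuity argument to rule out any swap of the above/below relationship between $X$ and $Y$.

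First I would fix notation. For any orthogonal polygon $P$ in the representation, its projection onto the $x$-axis is an interval, and for every $\xi$ in this interval the cross-section $P \cap \{x = \xi\}$ is a non-empty vertical interval $[B_P(\xi), T_P(\xi)]$ by $x$-monotonicity. Moreover $B_P$ and $T_P$ are continuous (piecewise-linear) functions of $\xi$. Let $I$ denote the intersection of the $x$-projections of $X$ and $Y$; as the intersection of two intervals, $I$ is itself an interval. Define the continuous functions $g(\xi) = B_X(\xi) - T_Y(\xi)$ and $h(\xi) = B_Y(\xi) - T_X(\xi)$ on $I$. The statement ``$X$ is weakly above $Y$ at $\xi$'' is exactly $g(\xi) \geq 0$, and symmetrically for $h$.

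The two key facts I would use are: (i) interior-disjointness of the polygons forces $g(\xi) \geq 0$ or $h(\xi) \geq 0$ at every $\xi \in I$; (ii) we never have $g(\xi) > 0$ and $h(\xi) > 0$ simultaneously, since $X$ and $Y$ cannot both be strictly above each other. The shared non-vertical boundary projects to a sub-interval $J \subseteq I$; without loss of generality $g \equiv 0$ on $J$, so $h \leq 0$ there (because $X$ sitting directly on top of $Y$ gives $T_X \geq T_Y \geq B_Y$).

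Now I would argue by contradiction. Suppose some $\xi_0 \in I$ has $h(\xi_0) > 0$; then necessarily $g(\xi_0) < 0$. Choose a path in $I$ from any $\xi_1 \in J$ to $\xi_0$. On this path $g$ is continuous and transitions from $\geq 0$ to $< 0$, while $h$ transitions from $\leq 0$ to $> 0$. If the sign-change of $g$ occurred strictly before that of $h$, then immediately after $g$ becomes negative we would still have $h < 0$, so both $g < 0$ and $h < 0$, violating (i). The symmetric case (sign change of $h$ first) violates (ii). Hence the two sign changes must occur at the same point $\xi^*$, forcing $g(\xi^*) = h(\xi^*) = 0$, i.e., $B_X(\xi^*) = T_Y(\xi^*)$ and $T_X(\xi^*) = B_Y(\xi^*)$; combined with $B_X(\xi^*) \leq T_X(\xi^*)$ and $B_Y(\xi^*) \leq T_Y(\xi^*)$, this pins the vertical cross-sections of both polygons to a single common point at $\xi^*$, contradicting the fact that both polygons have positive area throughout their $x$-projections.

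The main obstacle I expect is the edge case where $g$ and $h$ change sign simultaneously; everything else is a clean intermediate-value argument, so I would budget the bulk of the write-up to ruling out that degenerate crossing cleanly. An alternative phrasing that sidesteps the case analysis is to note that the pair $(B_Y - B_X, T_Y - T_X)$ varies continuously on $I$, and the two regions ``$X$ weakly above $Y$'' and ``$Y$ weakly above $X$'' meet only on the degenerate locus where both cross-sections collapse to the same single point; since polygons in a contact-representation have positive area, $I$ cannot cross this locus.
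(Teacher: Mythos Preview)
Your overall plan---assume the above/below relation flips and derive a contradiction via an intermediate-value argument---is exactly the paper's strategy. But your execution has a real technical flaw: the polygons in a simple contact-representation are \emph{orthogonal}, so the cross-section bounds $B_P$ and $T_P$ are step functions, not continuous piecewise-linear functions. At every vertical side of a polygon they jump. Consequently $g$ and $h$ can change sign by jumping rather than by passing through $0$, and the heart of your case split (``sign-change of $g$ strictly before that of $h$'', and the simultaneous-change case forcing $g(\xi^*)=h(\xi^*)=0$) no longer goes through as written. Your property~(i) is also delicate at those jump $x$-coordinates: the $2$D interiors of $X$ and $Y$ can be disjoint while their $1$D cross-sections at a single vertical line coincide on a nondegenerate interval.

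The paper sidesteps all of this by moving from boundary functions to \emph{interior} curves. Pick interior points $p_1,p_2\in X$ and $q_1,q_2\in Y$ witnessing the two opposite above/below relations; by $x$-monotonicity of each polygon there is an $x$-monotone curve in $X$ from $p_1$ to $p_2$ and one in $Y$ from $q_1$ to $q_2$. These two curves lie in disjoint open regions, yet one starts above the other and ends below, so they must cross---contradiction. This is a two-line argument and needs none of the sign-change bookkeeping; if you want to salvage your version, the cleanest fix is precisely this: replace $T_P,B_P$ by a single continuous $x$-monotone selection through the interior of each polygon.
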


\begin{proof}
Suppose otherwise, that at  $x$-coordinate $x_1$ polygon $Y$ is below polygon
$X$, and at  $x$-coordinate $x_2$ polygon $Y$ is above $X$. Choose points
$p_i$ in $X$ and $q_i$ in $Y$ (for $i\in\{1,2\}$) with points $p_i$ and $q_i$ having $x$-coordinate $x_i$ respectively.
By the assumption that the contact-representation is simple, 
there exists an $x$-monotone curve $P$ within $X$ from $p_1$ to $p_2$, and another $x$-monotone curve $Q$ within $Y$ from $q_1$ to $Q_2$.
These curves lie within disjoint polygons, so they cannot cross each other.
However at $x_1$, $P$ is below $Q$, and at $x_2$, $Q$ is below $P$.
by the intermediate value theorem there must be an $x$-coordinate between $x_1$ and $x_2$ where they cross. This contradiction shows that inconsistent above-below relations are impossible.
\end{proof}

Define the \emph{extended shape} of a polygon-polygon boundary to be its shape as described above, augmented with a single bit of information that (according to Lemma~\ref{lem:above-below}) describes which of the two adjacent polygons is above and which is below. (For boundaries that have no horizontal segments, we instead use the same bit of information to specify which polygon is to the left of the boundary and which is to the right.)

Given two adjacent polygons $X$ and $Y$, with $X$ above $Y$,
consider the sequence of integer heights of points along the boundary
between $X$ and $Y$, with consecutive duplicates removed, not including
the two points where the boundary begins and ends. Define the sequence
of extrema of boundary $XY$ to be the subsequence of heights that are
either local minima or local maxima of this sequence.

\begin{lemma}
In any sequence of extrema, local minima and local maxima
alternate with each other.
\end{lemma}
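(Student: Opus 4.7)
The plan is to argue directly from the definition of the extrema sequence as a property of the underlying deduplicated height sequence $(a_\ell)$, which has no two consecutive equal entries. The key observation is that if an interior index $k$ is \emph{not} a local extremum, then by definition $a_{k-1}, a_k, a_{k+1}$ is strictly monotone (strictly increasing or strictly decreasing). Consequently, if a run of consecutive indices $i+1, i+2, \ldots, j-1$ contains no extrema, the entire subsequence $a_i, a_{i+1}, \ldots, a_j$ must be strictly monotone in a single direction, determined by the sign of $a_{i+1}-a_i$.

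First, I would fix two consecutive entries of the extrema sequence, say $a_i$ and $a_j$ with $i < j$, and apply the observation above to the run of non-extrema strictly between them. Then I would carry out the short case analysis. Suppose $a_i$ is a local maximum. Then $a_{i+1} < a_i$, and since no $a_k$ with $i < k < j$ is an extremum, the only direction compatible with $a_{i+1} < a_i$ is strictly decreasing, giving $a_i > a_{i+1} > \cdots > a_{j-1}$. If $a_{j-1} < a_j$, then $a_{j-1}$ would itself be a local minimum (its predecessor is larger and its successor is larger), contradicting that $j-1$ is not an extremum; hence $a_{j-1} > a_j$. Combined with $a_j$ being an extremum and therefore satisfying $a_j < a_{j+1}$, we conclude that $a_j$ is a local minimum. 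The case in which $a_i$ is a local minimum is completely symmetric, yielding a local maximum at $a_j$. This forces strict alternation.

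The only subtlety is the treatment of boundary endpoints: the definition explicitly omits the two endpoints from consideration, so the statement only concerns the interior subsequence. Any initial or final "half-run" of the deduplicated height sequence before the first extremum or after the last extremum is simply irrelevant to the alternation claim and need not be analyzed. I do not expect any real obstacle here; the lemma is a purely combinatorial fact about sequences with no consecutive duplicates, and Lemma~\ref{lem:above-below} is not needed in its proof (it is used only to make the notion of ``above'' vs.\ ``below'' well defined, so that the extrema sequence itself makes sense).
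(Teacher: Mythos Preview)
Your proposal is correct and follows essentially the same approach as the paper. The paper phrases it in terms of sign changes of the difference sequence (a local maximum is a $+\to-$ sign change, a local minimum a $-\to+$ change, and these must alternate), whereas you spell out the equivalent monotone-run argument directly on the deduplicated height sequence; the underlying idea is identical. One small expository point: in your line ``combined with $a_j$ being an extremum and therefore satisfying $a_j<a_{j+1}$'' the ``therefore'' is slightly premature---the cleaner way is to say that $a_{j-1}>a_j$ rules out $a_j$ being a local maximum, hence it is a local minimum (and then $a_j<a_{j+1}$ follows).
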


\begin{proof}
The extrema are the points where the sequence of differences of
heights between consecutive elements of the height sequence changes
sign. At a local maximum the sequence of differences changes sign from
positive to negative and at a local minimum it changes from negative
to positive. When it changes sign in one direction it cannot change in
the same direction until it has changed back in the other direction.
\end{proof}

We define the \emph{bend complexity} of a contact-representation to be the
sum, over all pairs of adjacent polygons, of the number of bends in the
boundary between the two polygons. Again, we do not count the endpoints of
the boundary as bends.

\begin{lemma}
\label{lem:depocket}
In a simple contact representation of minimum bend complexity
for its height,
it is not possible to have four consecutive extrema in which the outer
two extrema are the maximum and minimum of the four.
\end{lemma}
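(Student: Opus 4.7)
The plan is to argue by contradiction. Assume a simple contact-representation $\Gamma$ of minimum bend complexity has four consecutive extrema $(x_i, h_i)_{i=1,\ldots,4}$, $x_1 < x_2 < x_3 < x_4$, along the boundary between two adjacent polygons $X$ (above) and $Y$ (below), with $h_1$ and $h_4$ being the maximum and minimum of the four heights. By the preceding alternation lemma, the extrema strictly alternate max/min, so we must have the pattern max-min-max-min with $h_1 > h_3 > h_2 > h_4$ (the reverse case min-max-min-max, with $h_4 > h_2 > h_3 > h_1$, is handled symmetrically by vertical reflection and swapping the roles of $X$ and $Y$).

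Since the boundary descends from $h_3$ at $(x_3, h_3)$ eventually reaching $h_4 < h_2$ at $(x_4, h_4)$, there is a first column $x^\ast \in (x_3, x_4)$ where the boundary returns to height $h_2$. I would replace the portion of the boundary from $(x_2, h_2)$ to $(x^\ast, h_2)$ with a single horizontal segment at some height $h^\ast \in [h_2, h_3]$. The natural choice $h^\ast = h_2$ absorbs the intervening bump entirely into $X$; however, if some third polygon obstructs $Y$ from below at some column $x \in [x_2, x^\ast]$ at height exceeding $h_2$, this would empty $Y$. I would therefore set $h^\ast$ adaptively to $\max\{h_2, \max_{x \in [x_2, x^\ast]}(L_Y(x) + 1)\}$, where $L_Y(x)$ is the height of $Y$'s lower boundary at column $x$. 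Since $Y$ was non-empty in every such column before the modification, $L_Y(x) < y_{\mathrm{old}}(x) \leq h_3$, and thus $h^\ast \leq h_3$.

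Then I would verify the modification yields a valid simple contact-representation of the same graph with strictly fewer bends. Both $X$ and $Y$ remain $x$-monotone because their vertical extents at each affected column stay intervals under the shift of one endpoint. $X$ remains connected as the region it gains attaches to it along the old boundary. The adaptive choice of $h^\ast$ ensures $Y$ remains non-empty and connected at every column. No adjacency with any third polygon is affected, since the modification is contained strictly inside $X\cup Y$. For the bend count: when $h^\ast < h_3$, both middle extrema $(x_2, h_2)$ and $(x_3, h_3)$ disappear; when $h^\ast = h_3$, the local minimum at $h_2$ vanishes and the local maximum at $h_3$ is absorbed into a plateau that collapses to a single extremum under the consecutive-duplicates convention. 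Either way the extrema count strictly decreases, and so does the bend complexity, contradicting minimality.

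The main obstacle is justifying the non-emptiness of $Y$ along the affected columns, which is the reason for the adaptive height $h^\ast$. Care is also needed in the boundary case $h^\ast = h_3$: one must verify that the new plateau indeed collapses to a single local maximum in the extrema sequence and that the total bend count still strictly decreases there. The remaining verifications are routine by construction.
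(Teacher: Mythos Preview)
Your approach differs from the paper's and has a genuine gap. The paper does not flatten the boundary vertically; it applies a monotone transformation to the $x$-coordinates of everything above this portion of the $XY$-boundary, leaving everything below fixed. Because such a transformation changes no heights, neither $X$ nor $Y$ can be squashed to zero thickness anywhere, and the shallower pocket simply slides over and merges with the deeper one, after which the border can be redrawn with strictly fewer bends.

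Your vertical flattening, by contrast, changes heights and can annihilate a polygon. You correctly worry that lowering the boundary to $h_2$ may empty $Y$ where $L_Y(x)\ge h_2$, and you raise $h^\ast$ adaptively to $\max_x L_Y(x)+1$ to prevent this. But the symmetric failure for $X$ is unaddressed: at columns near $x_2$ (and near $x^\ast$) where $b_{\mathrm{old}}(x)<h^\ast$, you are \emph{raising} the boundary, which shrinks $X$ from below. Nothing forces $X_{\mathrm{high}}(x)>h^\ast$ there; $X$ may be only one unit tall at $x_2$, with $X_{\mathrm{high}}(x_2)=h_2+1$, while $L_Y(x_3)=h_3-1$ forces $h^\ast\ge h_3$. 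Then $X$ becomes empty at $x_2$. In general no constant $h^\ast\in[h_2,h_3]$ need satisfy both $h^\ast>L_Y(x)$ and $h^\ast<X_{\mathrm{high}}(x)$ for all relevant $x$.

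There is also a mismatch in your endpoint description when $h^\ast>h_2$: a horizontal at height $h^\ast$ does not meet $(x_2,h_2)$ or $(x^\ast,h_2)$. If you add verticals to reach those endpoints, the local minimum at $h_2$ survives and no extremum is removed; if instead you connect at the $h^\ast$-crossings of the old boundary, the modification extends outside $[x_2,x^\ast]$ and the $X$-emptiness issue widens. Either way the claimed strict decrease in extrema is not established in the $h^\ast>h_2$ case.
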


\begin{figure}[t]
\centering\includegraphics[width=0.8\textwidth]{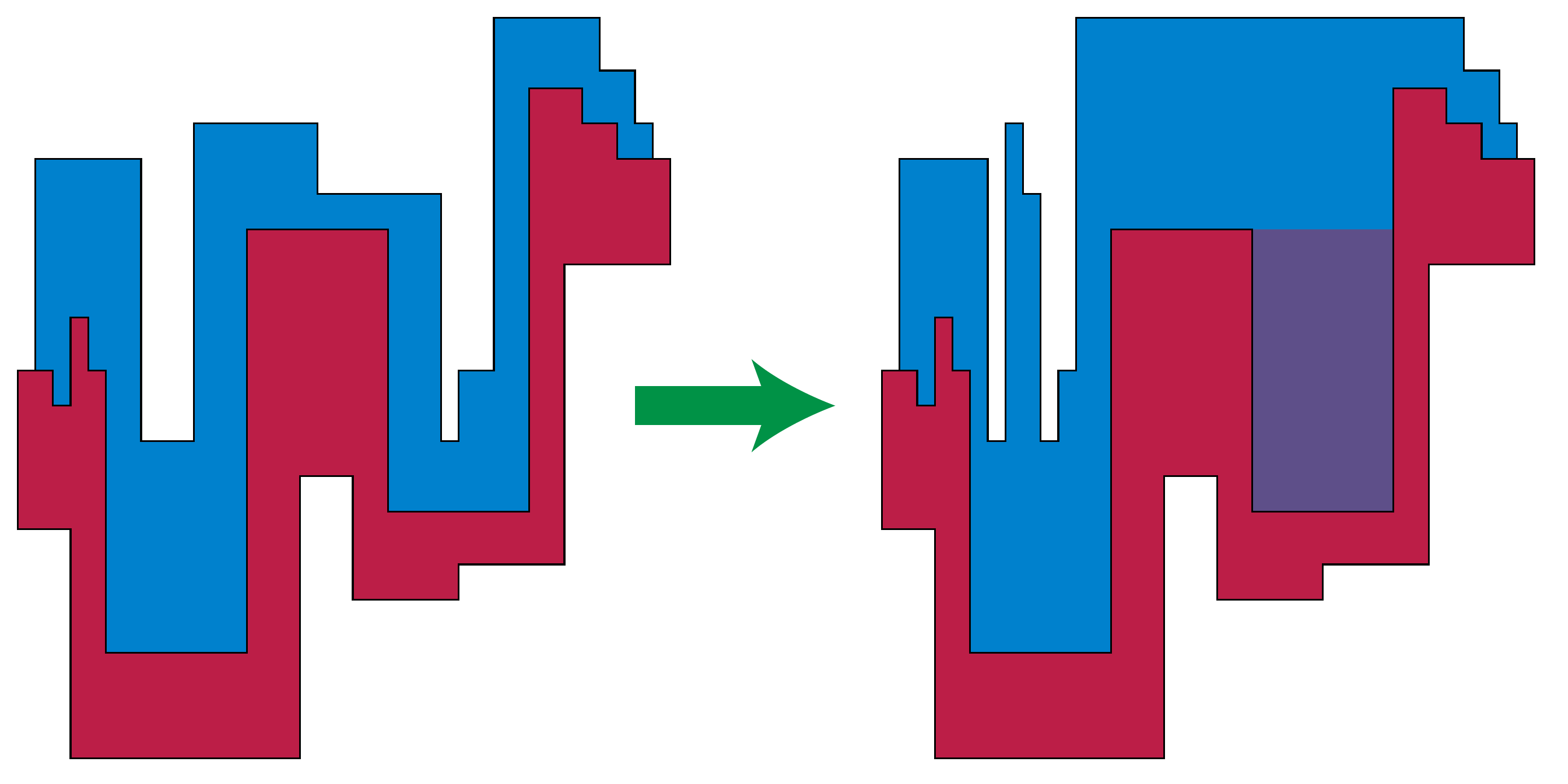}
\caption{Simplifying a boundary that violates Lemma~\ref{lem:depocket}}
\label{fig:depocket}
\end{figure}

\begin{proof}
Proof: Whenever this happens, we could apply a monotonic
transformation to the parts of the contact representation above this
portion of the XY-boundary, leaving the parts below the boundary
untransformed, emptying the shallower of the two pockets in the upper
region and allowing the border to be simplified; see Figure~\ref{fig:depocket}.
\end{proof}

\begin{lemma}
\label{lem:unimodal}
Let $B$ be the boundary between polygons $X$ and $Y$ in a simple contact-representation
of minimum bend complexity for its height.
In the sequence of extrema of heights of the points in $B$, each pair of a global maximum and global minimum of height must be adjacent to each other; hence there can be at most three global extrema, two maxima and one minimum or two minima and one maximum. The local maxima must strictly increase from the start to the first global maximum and strictly decrease after the last global maximum. Symmetrically, the
local minima must strictly decrease from the start to the first global minimum and
decrease after the last global minimum.
\end{lemma}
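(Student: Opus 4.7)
The plan is to exploit Lemma~\ref{lem:depocket} via its contrapositive. Label the local maxima encountered along $B$ as $M_1,M_2,\dots$ and the local minima as $m_1,m_2,\dots$ in the order in which they appear. Consider four consecutive extrema beginning with a maximum, i.e., a 4-tuple $(M_i,m_i,M_{i+1},m_{i+1})$: avoiding the depocketable configuration of Lemma~\ref{lem:depocket} means we must have $M_i<M_{i+1}$ or $m_i<m_{i+1}$ (strictly). Symmetrically, a 4-tuple beginning with a minimum, $(m_i,M_{i+1},m_{i+1},M_{i+2})$, must satisfy $m_i>m_{i+1}$ or $M_{i+1}>M_{i+2}$ (strictly).

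First I would establish unimodality of the local-maximum subsequence by chaining these two constraints. Suppose $M_i\geq M_{i+1}$; then the first constraint applied to $(M_i,m_i,M_{i+1},m_{i+1})$ forces $m_i<m_{i+1}$, and plugging this into the second constraint applied to $(m_i,M_{i+1},m_{i+1},M_{i+2})$ forces $M_{i+1}>M_{i+2}$. Inductively, once the sequence of $M_j$ fails to strictly increase at some step, it strictly decreases at every subsequent step. Hence the $M_j$ form a unimodal sequence: strictly increasing up to a peak and strictly decreasing thereafter, with at most two consecutive tied values at the global maximum. Swapping the roles of maxima and minima (and reversing the reading direction) gives the analogous statement that the $m_j$ are strictly decreasing down to their global minimum and strictly increasing afterwards, again with at most two tied values at the bottom.

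Next I would locate the global extrema of one kind relative to the other. If $M_p$ is a global maximum with $M_p>M_{p+1}$, the 4-tuple $(M_p,m_p,M_{p+1},m_{p+1})$ forces $m_p<m_{p+1}$, and the 4-tuple $(m_{p-1},M_p,m_p,M_{p+1})$ forces $m_{p-1}>m_p$ (its other disjunct $M_p>M_{p+1}$ reduces, given strict monotonicity on both sides, to a local fact already known). By the unimodality just established, the global minimum among the $m_j$ must therefore be $m_{p-1}$ or $m_p$, both of which sit immediately adjacent to $M_p$ in the combined extremum sequence. In the tied case $M_p=M_{p+1}$, the 4-tuple spanning both peaks forces $m_p<m_{p-1}$ and $m_p<m_{p+1}$, so the unique global minimum is $m_p$, sandwiched between the two global maxima. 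The symmetric argument anchored at the min-peak yields the converse adjacency, and together these show that any global extremum of one type lies adjacent to a global extremum of the opposite type, capping the total number of global extrema at three in exactly the two patterns stated.

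The main obstacle is bookkeeping rather than depth: one must compose the overlapping 4-window constraints in the correct order and verify that the unimodality conclusions for the $M_j$ and the $m_j$ are mutually consistent, with their peaks and valleys aligned so that the adjacency claim is forced simultaneously from both sides. Edge cases where the extremum sequence is too short to contain a full 4-window make the relevant claims vacuous but should be acknowledged for completeness.
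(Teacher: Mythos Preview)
Your approach is essentially the same as the paper's---chain Lemma~\ref{lem:depocket} over sliding windows of four consecutive extrema---and your second paragraph correctly establishes unimodality of the $M_j$ (and, by symmetry, of the $m_j$).

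There is, however, a genuine gap in your third paragraph. You write that the 4-tuple $(m_{p-1},M_p,m_p,M_{p+1})$ ``forces $m_{p-1}>m_p$.'' But the constraint coming from Lemma~\ref{lem:depocket} for this window is the disjunction $m_{p-1}>m_p$ \emph{or} $M_p>M_{p+1}$, and you have already assumed $M_p>M_{p+1}$. A disjunction that is satisfied by one disjunct tells you nothing about the other, so you cannot conclude $m_{p-1}>m_p$ from this window. Your parenthetical remark seems to acknowledge that the second disjunct is ``already known,'' but that is precisely why it yields no new information.

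The fix is easy and is implicit in your own second paragraph: shift one step left and look at $(m_{p-2},M_{p-1},m_{p-1},M_p)$. Here the alternative disjunct is $M_{p-1}>M_p$, which \emph{fails} because $M_p$ is the peak, so you are forced into $m_{p-2}>m_{p-1}$. Iterating gives $m_1>m_2>\cdots>m_{p-1}$; combined with $m_p<m_{p+1}<\cdots$ from your correct first step, the global minimum must be $m_{p-1}$ or $m_p$, both adjacent to $M_p$. This is exactly what the paper does, only organized slightly differently: the paper anchors at the global maximum from the outset, so that the very first application of Lemma~\ref{lem:depocket} is automatic (the next local maximum is $\le$ the global one by definition), and then the single chain simultaneously yields the monotonicity of both maxima and minima on each side of the global maximum, from which the adjacency of the global extremum pair follows immediately.
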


\begin{figure}[t]
\centering\includegraphics[width=0.6\textwidth]{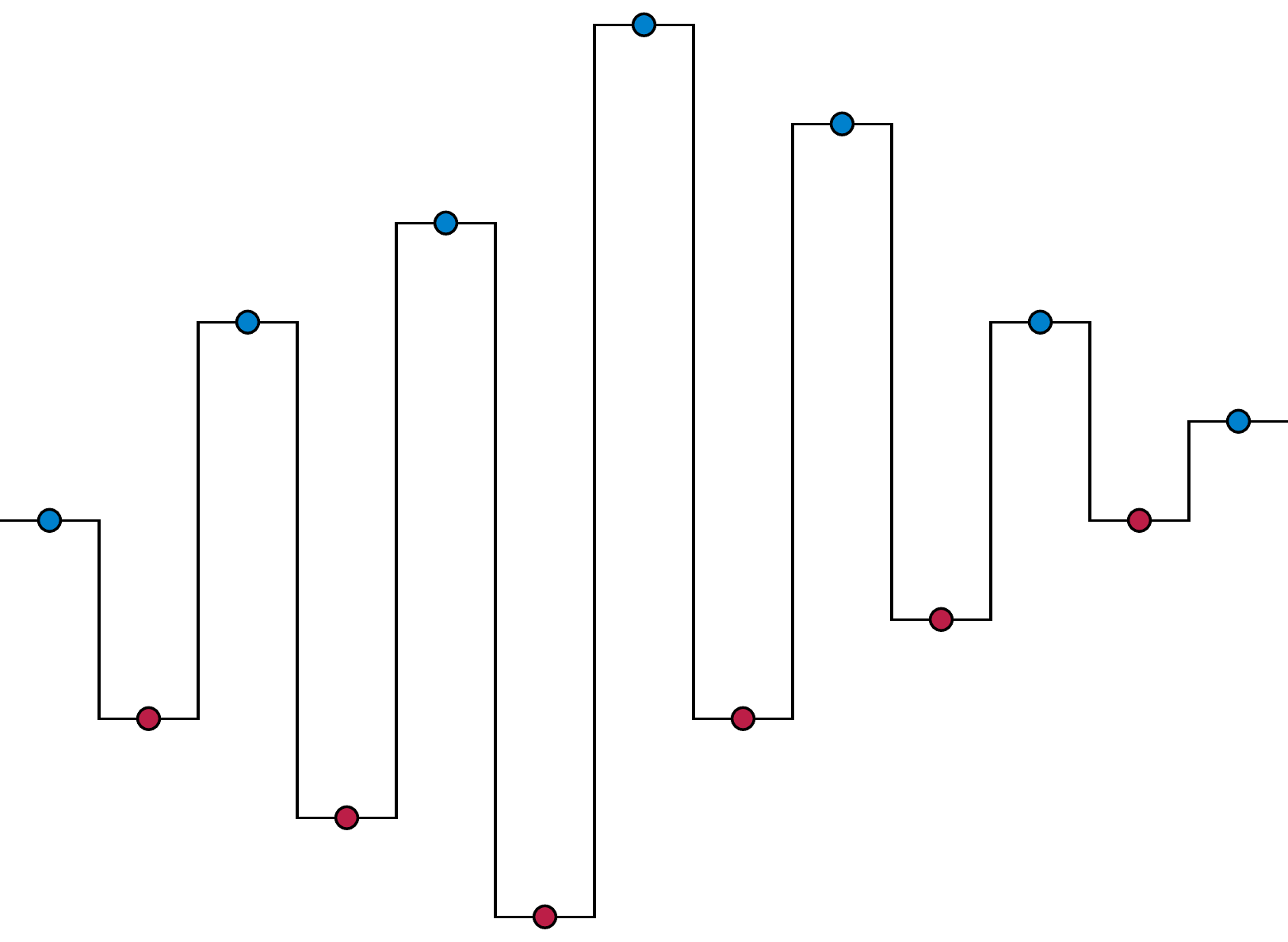}
\caption{Schematic view of the extrema of heights in the boundary of two polygons of a simple contact-representation of minimum bend complexity, according to Lemma~\ref{lem:unimodal}}
\label{fig:unimodal}
\end{figure}

\begin{proof}
Immediately following the global maximum, the next local
maximum must be non-increasing (by global maximality) This forces the
next two local minima to be non-decreasing (by Lemma~\ref{lem:depocket}), which forces
the next two local maxima to be non-increasing, etc. So by induction
the maxima must be non-increasing and the minima must be
non-decreasing from the global maximum to the end of the sequence. The
same argument applies symmetrically between the global maximum and the
start of the sequence.
\end{proof}

For a schematic view of a sequence obeying Lemma~\ref{lem:unimodal}, see Figure~\ref{fig:unimodal}.

\begin{lemma}
\label{lem:few-bends}
In a simple contact representation of minimum bend complexity
for its height~$h$,
the number of bends on the boundary between any two polygons is $O(h^2)$.
\end{lemma}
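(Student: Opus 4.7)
The plan is to reduce the bend count to an integer-sequence counting problem and apply Lemma~\ref{lem:unimodal}. Since both polygons are $x$-monotone and simple, their shared boundary $B$ is a single $x$-monotone orthogonal polygonal chain in which horizontal and vertical segments alternate. The number of bends is then (up to a small additive constant) equal to the total number of segments, which in turn is (up to a factor of $2$) equal to the number of vertical segments of $B$. So it suffices to bound the number of vertical segments by $O(h^2)$.

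First I would count the extrema. The heights of vertices of $B$ lie in the integer set $\{0,1,\ldots,h\}$, so any strictly monotone integer sequence of heights has length at most $h+1$. By Lemma~\ref{lem:unimodal}, the local maxima of the height sequence along $B$ first strictly increase to a global maximum and then strictly decrease; hence there are at most $2(h+1)$ local maxima. The same argument, applied symmetrically, yields at most $2(h+1)$ local minima. So the total number of extrema is $O(h)$, and they divide the height sequence into $O(h)$ maximal monotone runs.

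Next I would bound the length of each monotone run. Between two consecutive extrema the (duplicate-removed) height sequence is strictly monotone — if it were not, a new extremum would appear in between. Again, a strictly monotone integer sequence in $\{0,1,\ldots,h\}$ has length at most $h+1$, and each consecutive pair in this sequence corresponds to exactly one vertical segment of $B$. Thus each run contributes at most $h$ vertical segments. Multiplying the number of runs ($O(h)$) by the length of each run ($O(h)$) gives $O(h^2)$ vertical segments in total, and therefore $O(h^2)$ bends.

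The main thing to be careful about is the bookkeeping between the combinatorial height sequence used in the statement of Lemma~\ref{lem:unimodal} (which is obtained from the vertex heights of $B$ after removing consecutive duplicates and ignoring the two endpoints) and the actual segments of the orthogonal chain $B$. Once one observes that every consecutive distinct pair in the duplicate-removed height sequence corresponds to a single vertical segment, and that horizontal segments alternate with vertical ones in $B$, the translation from "length of monotone runs" to "number of bends" is direct and the $O(h^2)$ bound follows.
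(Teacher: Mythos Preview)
Your proposal is correct and takes essentially the same approach as the paper: use Lemma~\ref{lem:unimodal} to bound the number of local extrema by $O(h)$, observe that between consecutive extrema the duplicate-removed height sequence is strictly monotone in $\{0,\dots,h\}$ and hence contributes $O(h)$ bends, and multiply. One small remark: your assertion that the shared boundary is a single $x$-monotone chain is neither proved in the paper nor actually needed for your argument---horizontal and vertical segments alternate along any orthogonal polygonal chain, and that alternation is all your bend-to-vertical-segment bookkeeping uses.
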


\begin{proof}
By Lemma~\ref{lem:unimodal} it can have $O(h)$ local extrema of height, and
between any two local extrema there can only be $O(h)$ bends.
\end{proof}

\begin{lemma}
\label{lem:few-shapes}
In a simple contact representation of minimum bend complexity
for its height~$h$, the number of distinct extended shapes of boundaries between
any two polygons is $2^{O(h^2\log h)}$.
\end{lemma}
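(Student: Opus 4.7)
The plan is a direct counting argument built on Lemma~\ref{lem:few-bends}. Since we are in a representation of minimum bend complexity for its height, each polygon-polygon boundary is a polygonal chain with $O(h^2)$ bends, and hence $O(h^2)$ segments. The strategy is to show that each shape admits a compact combinatorial encoding whose total size is $2^{O(h^2\log h)}$.

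First I would note that the horizontal and vertical segments of a boundary must alternate (consecutive collinear segments would be merged into one). So a shape is uniquely determined by: (i) the integer $y$-coordinate of the starting endpoint, which by definition of a shape lies in $\{0,1,\dots,h\}$; (ii) the $x$-coordinate of the starting endpoint, which equals $0$ by the convention in the definition of a shape (or is determined by the other data once we recall that the leftmost vertex has $x=0$, so at most a modest additive cost of $O(h^2)$ choices for which index is the leftmost); (iii) whether the first segment is horizontal or vertical; (iv) the length $N = O(h^2)$ of the chain; and (v) for each segment, the label specifying what it does. A horizontal segment has only two possible labels (left or right by $1$); a vertical segment is determined by the $y$-coordinate it terminates at, giving at most $h+1$ choices.

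Multiplying these together, the number of shapes of a given length $N\leq c h^2$ is at most
\[
(h+1)\cdot 2\cdot (2(h+1))^{N}\;\leq\;(2(h+1))^{c h^2+O(1)}\;=\;2^{O(h^2\log h)}.
\]
Summing the geometric series over all $N$ from $1$ to $O(h^2)$ is absorbed into the same bound. The extended shape differs from the shape only by one additional bit (which polygon is above/below, or left/right for purely vertical boundaries), contributing at most a factor of $2$, which is likewise absorbed.

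I do not expect a real obstacle here: the only subtlety is to be careful that the constraints in the definition of a shape (integer $x$-coordinates, leftmost $x=0$, horizontal segments of unit length) really do force each segment to have only $O(h)$ possibilities, so that the exponential is of the form $h^{O(h^2)} = 2^{O(h^2\log h)}$ rather than something worse. Once the alternation of horizontal and vertical segments is observed, the count is routine.
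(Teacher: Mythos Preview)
Your argument is correct and matches the paper's own proof essentially line for line: invoke Lemma~\ref{lem:few-bends} to bound the number of segments by $O(h^2)$, observe that each segment has $O(h)$ choices for its height/direction, multiply to get $h^{O(h^2)}=2^{O(h^2\log h)}$, and absorb the extra bit for the extended shape. The paper's proof is only two sentences and omits the bookkeeping about alternation, the starting point, and the sum over lengths that you spell out, but the underlying counting is identical.
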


\begin{proof}
By Lemma~\ref{lem:few-bends} the number of line segments along the boundary is
$O(h^2)$, and each has $O(h)$ possibilities for its height and orientation.
The extra bit of information needed to specify the extended shape from the shape does not affect the result of this calculation.
\end{proof}

\subsection{Single-polygon realizability}

Consider any assignment $A$ of extended boundary shapes to the edges of our given maximal planar graph $G$. We wish to characterize the assignments that are realizable as simple contact-representations; in this section, we do so only for local realizability of a single polygon in the representation. If $A$ is realizable, then each extended shape in $A$, associated to an edge $e$, may be interpreted as describing the shape of the boundary between the two polygons corresponding to the two endpoints of $e$.

The boundary of each polygon in a simple contact-representation has a unique decomposition into four polygonal chains of axis-parallel line segments, which we call the \emph{left}, \emph{right}, \emph{top}, and \emph{bottom} boundary chains:
\begin{itemize}
\item The top boundary chain consists of all horizontal boundary segments that the polygon is below, together with all vertical segments between two of these horizontal segments.
\item The bottom boundary chain similarly consists of all horizontal boundary segments that the polygon is above, together with all vertical segments between two of these horizontal segments.
\item The left boundary chain consists of all vertical segments that the polygon is to the right of and that
are between a horizontal segment of the top boundary and a horizontal segment of the bottom boundary.
\item The right boundary chain consists of all vertical segments that the polygon is to the right of and that
are between a horizontal segment of the top boundary and a horizontal segment of the bottom boundary.
\end{itemize}
Conversely, if a collection of boundary shapes can be realized as a polygon,
and can be partitioned in this way into exactly four chains, then the resulting polygon must be $x$-monotone, suitable for taking part in a simple contact-representation.

Let $v$ be any vertex in graph $G$. We define the \emph{boundary cycle} of $v$ to
be a cycle whose vertices are neighbors of $v$ in $G$,
and whose edges belong to face triangles of $G$ that are incident to $v$.
That is, the boundary cycle connects the neighboring vertices of $v$ in their cyclic order around $v$. It is not necessarily an induced cycle, because $G$ can also include edges between non-consecutive neighbors of $v$ (forming, with $v$, non-facial separating triangles in $G$).
The definition of a boundary cycle is somewhat counterintuitive, because the shapes of the boundary of $v$'s polygon in a contact-representation are not associated with the cycle edges; they are instead associated with the star of edges incident to $v$. However, the boundary cycle conveys important information about the cyclic ordering of $v$'s neighbors that is missing from the star incident to $v$.
When we express the realizability of a polygon in graph logic, we will need this ordering information, in order to
express conditions involving contiguous subsequences of neighbors by representing these subsequences as paths in the boundary cycle.

We say that $v$ has a \emph{monotone boundary} in an extended shape assignment $A$
when the boundary cycle of $v$ can be partitioned into four paths as above:
\begin{itemize}
\item a path that contains all neighboring vertices $u$ of $v$
such that the extended shape associated with edge $uv$ includes horizontal segments that are above the region for $v$,
\item a path that contains all neighboring vertices $u$ of $v$
such that the extended shape associated with edge $uv$ includes horizontal segments that are below the region for $v$,
\item a path that contains all neighboring vertices $u$ of $v$
such that the extended shape associated with edge $uv$ includes vertical segments to the left of the region for $v$ that are between one above-$v$ horizontal segment and one below-$v$ horizontal segment, and
\item a path that contains all neighboring vertices $u$ of $v$
such that the extended shape associated with edge $uv$ includes vertical segments to the right of the region for $v$ that are between one above-$v$ horizontal segment and one below-$v$ horizontal segment.
\end{itemize}
Note that the extended shape of an edge incident to $v$ may include multiple line segments.
Therefore, even though the four boundary chains of a polygon in a contact-representation are interior-disjoint, the corresponding paths in the boundary cycle of a vertex with monotone boundary might share vertices (but not edges) with each other. Some of the boundary paths may be degenerate (consisting of a single vertex) but it is not possible for three to be degenerate and for the fourth to be a cycle containing all vertices rather than a path, as (by Lemma~\ref{lem:above-below}) the top and bottom boundary paths must be vertex-disjoint. Additionally, for $v$ to have a monotone boundary, we require that the four boundary paths appear in an order consistent with the embedding of $G$: for a vertex corresponding to an interior face of the embedding, the clockwise ordering of these paths should be left, above, right, below, and for the vertex corresponding to the outer face this ordering should be reversed.

From an assignment of extended shapes to edges of $G$, and a partition of the boundary cycle of a vertex $v$ into boundary paths, we may construct polygonal chains, the top, bottom, left or right \emph{boundary shapes} of $v$, by concatenating the extended shapes associated with edges $uv$ for vertices $u$ in the boundary path, keeping only the appropriate subsets of the extended shapes associated with the endpoints of the path.

Given any point $p$ in any extended shape, we can recover from the definition of an extended shape the height $\operatorname{height}(p)$ at which $p$ should be realized.
For a point $p$ in the upper or lower boundary chain of a polygon (or in the upper or lower boundary shape of a vertex of $G$ whose shapes are not yet known to be realizable as a polygon)
we define $\operatorname{left}(p)$ to be the largest height among all points strictly to the left of $p$ in the same part of the boundary.
Symmetrically, we define $\operatorname{right}(p)$ to be the largest height among all points strictly to the right of $p$ in the same part of the boundary.

\begin{lemma}
\label{lem:single-face-realizability}
Let $A$ be an assignment of extended shapes to the edges of a maximal planar graph $G$, and let $v$ be a vertex of $G$. Then there exists an $x$-monotone polygon corresponding to $A$, whose boundary realizes the extended shapes assigned by $A$ to the edges in $G$ incident to $v$,
if and only if the following conditions are met:
\begin{itemize}
\item Vertex $v$ has a monotone boundary.
\item For every point $p$ in an extended shape of the upper boundary of $v$,
there must exist a corresponding point $q$ in the extended shape of the lower boundary of $v$,
such that
\[
\operatorname{height}(p) > \operatorname{height}(q),
\]
\[
\operatorname{left}(p) > \operatorname{left}(q),
\]
and
\[
\operatorname{right}(p) > \operatorname{right}(q),
\]
\end{itemize}
\end{lemma}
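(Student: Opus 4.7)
The plan is to prove the two directions of the equivalence separately, with necessity being the easier direction and sufficiency requiring a combinatorial matching argument.

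For necessity, I would begin with a realizing $x$-monotone polygon $P$ and read off the required properties from its geometry. The canonical decomposition of $\partial P$ into top, bottom, left, and right chains (as recalled just before the lemma) induces the partition of the boundary cycle of $v$ into four paths whose cyclic order matches the embedding, so $v$ has a monotone boundary. For the second condition, given any point $p$ on the upper boundary shape of $v$, let $q$ be the point of the lower boundary of $P$ lying at the same $x$-coordinate as $p$ (which exists since $P$ is non-empty at that $x$-coordinate and $x$-monotone). The inequality $\operatorname{height}(p)>\operatorname{height}(q)$ is immediate from the fact that $P$ is non-degenerate there. To obtain $\operatorname{left}(p)>\operatorname{left}(q)$, I would note that the highest point $p'$ on the upper boundary strictly to the left of $p$ lies at some $x$-coordinate $x'$; at $x'$ the polygon contains a lower-boundary point $q'$ with $\operatorname{height}(q')<\operatorname{height}(p')$, and $q'$ lies strictly to the left of $q$, so it contributes to $\operatorname{left}(q)$ and the strict inequality propagates to the maxima. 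The right inequality follows by a symmetric argument.

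For sufficiency, I would construct an $x$-monotone polygon from the assignment. The monotone boundary condition produces four polygonal chains (top, bottom, left, right) already oriented consistently with the embedding around $v$, and the left and right chains are purely vertical by the definition of monotone boundary, so they can be placed at the extreme $x$-coordinates. The key step is to assign real-valued $x$-coordinates to the bends of the top and bottom chains so that the upper boundary lies strictly above the lower boundary at every $x$-coordinate. To do this, I would traverse the bend sequence of the upper boundary from left to right and greedily pair each upper bend $p$ with the next available lower bend $q$ satisfying all three strict inequalities; condition~2 provides such a $q$, and the $\operatorname{left}$ and $\operatorname{right}$ inequalities ensure that the matching respects the left-to-right order on both chains. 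Having obtained this order-preserving matching, I would fix the $x$-coordinate of each matched pair, interpolating unmatched bends monotonically between consecutive matched pairs. The height inequality guarantees pointwise separation at matched $x$-coordinates, while the $\operatorname{left}$ and $\operatorname{right}$ inequalities propagate this separation to every intermediate $x$-coordinate (since between matched pairs, the bend of maximum height on one side still dominates the opposite chain). The resulting region is therefore a simple $x$-monotone polygon realizing each extended boundary shape.

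The main obstacle is showing that the greedy matching can always be carried out without getting stuck and that the interpolation between matched pairs keeps the two chains separated at every intermediate $x$-coordinate; this is essentially a Hall-type statement, and it is precisely here that the three simultaneous inequalities of condition~2 are indispensable: the $\operatorname{height}$ inequality alone does not prevent a matching that forces the polygon to shrink to zero width between two consecutive matched pairs, whereas the $\operatorname{left}$ and $\operatorname{right}$ conditions certify that the upper boundary stays strictly above the lower boundary on both sides of every pair. Verifying this inductively along the scan, together with checking that the horizontal unit-length segments required by the definition of extended shape can all be realized by shifting unmatched bends slightly, will complete the construction.
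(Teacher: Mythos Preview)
Your necessity argument has a directional error in the $\operatorname{left}$ (and $\operatorname{right}$) inequality. You start from the point $p'$ that \emph{achieves} $\operatorname{left}(p)$, go down to a lower-boundary point $q'$, and observe $\operatorname{height}(q')<\operatorname{left}(p)$. But $q'$ merely ``contributes to'' $\operatorname{left}(q)$, meaning $\operatorname{height}(q')\le\operatorname{left}(q)$; combining $\operatorname{left}(p)>\operatorname{height}(q')$ with $\operatorname{height}(q')\le\operatorname{left}(q)$ yields nothing. The correct argument (as in the paper) goes the other way: start from the lower-boundary point $q''$ that achieves $\operatorname{left}(q)$, go \emph{up} to the upper-boundary point $p''$ at the same $x$-coordinate, and note that $p''$ is strictly left of $p$ and strictly higher than $q''$, so $\operatorname{left}(p)\ge\operatorname{height}(p'')>\operatorname{height}(q'')=\operatorname{left}(q)$.

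For sufficiency, your greedy left-to-right matching of bends is genuinely different from the paper's construction, and the part you flag as the ``main obstacle'' is in fact where your sketch breaks down: you give no reason why the next available $q$ should respect the left-to-right order already established, nor why interpolated unmatched bends stay separated. The paper avoids the matching-style argument entirely. It first fixes an arbitrary realization of the left, right, and lower boundaries, then assigns to each horizontal \emph{segment} $u_i$ of the upper boundary (not to each bend) a lower segment $\ell_i$ by choosing, among all $q$ satisfying the three inequalities for an interior point $p_i\in u_i$, one of \emph{minimum height} (after a tiny perturbation making lower-segment heights distinct). The point of the min-height rule is that it makes monotonicity of $i\mapsto\ell_i$ a one-line case check: if $u_i$ is lower than $u_{i+1}$ then $\operatorname{left}(p_{i+1})\ge\operatorname{left}(p_i)$ while $\operatorname{right}(p_{i+1})=\operatorname{right}(p_i)$, so the feasible set for $q_{i+1}$ is shifted rightward and the minimum-height choice cannot move left. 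Once the segment-to-segment map is monotone, one simply drops each upper vertex into the $x$-range of its assigned $\ell_i$; separation of the two chains then follows directly from the height inequality at each $p_i$, without any Hall-type argument.
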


\begin{proof}
If there exists an $x$-monotone polygon $P$ realizing the boundary shapes of $v$ in shape assignment $A$, then (as above) this polygon's boundary can be partitioned into above, below, left, and right chains, and the corresponding paths in $G$ show that $v$ has a monotone boundary.
For any point $p$ in any extended shape of the upper boundary of $v$, let $p'$ be any point of the segment of $P$ corresponding to the segment containing $p$ in the extended shape.
Let $q'$ be the point of the lower boundary of $P$ directly below $p'$, and let $q$ be any point of the corresponding segment of one of the extended shapes of the lower boundary of $v$.
Then $p$ and $p'$ have the same heights, as do $q$ and $q'$, so $q$ must be below $p$.
At the point of the lower boundary realizing $\operatorname{height}(\operatorname{left}(q))$,
the point on the upper boundary with the same $x$-coordinate must be even higher,
and to the left of $p'$, so the second of the three inequalities above must be valid.
The third inequality must also be valid by symmetric reasoning. Therefore, the existence of $P$ implies that the conditions of the lemma are all met.

Conversely, suppose that assignment $A$ satisfies all the conditions of the lemma for the boundary of $v$. We must show that in this case there exists an $x$-monotone polygon $P$ realizingthe boundary shapes of $v$ in shape assignment $A$.
To do so, we first fix an arbitrary realization for the left, right, and lower boundaries of $P$, by concatenating together the shapes of the boundary segments. By monotonicity, this concatenation cannot be self-crossing. It remains to realize the upper boundary, consistently with this realization of the other parts of the boundary.

To do so, we first perturb the horizontal segments of the lower boundary upwards or downwards by real numbers less than $1/2$, so that no two segments have the same height as each other. Because this perturbation is less than the unit amount by which top and bottom segments of the boundary must clear each other, it has no effect on realizability.
Next, we choose for each horizontal segment $u_i$ of the upper boundary an interior point $p_i$,
assign $p_i$ a point $q_i$ on a horizontal segment of the lower boundary that meets the conditions of the lemma,
and let $\ell_i$ be the segment of the lower boundary containing $q_i$.
By the conditions of the lemma, at least one point $q_i$ exists,
and we choose $q_i$ to be any of the points that match the conditions of the lemma and have the minimum possible height. Because of the perturbation, $\ell_i$ is uniquely defined in this way from $u_i$.

We claim that, for each two consecutive segments $u_i$ and $u_{i+1}$ of the upper boundary,
with $u_i$ to the left of $u_{i+1}$, the corresponding lower boundary segments $\ell_i$ and $\ell_{i+1}$ have the same left-right relation to each other. For, if $u_i$ is lower than $u_{i+1}$, we have that $\operatorname{left}(p_i)\le\operatorname{left}(p_{i+1})$ (because additional points on $u_{i+1}$ itself can contribute to the maximization in the definition of $\operatorname{left}(p_{i+1})$)
but $\operatorname{right}(p_i)=\operatorname{right}(p_{i+1})$ (the additional points on $u_i$ can never have maximum height among all points to the right of $p_i$). Therefore, the left constraint on $p_{i+1}$ is relaxed, allowing $q_{i+1}$ to be farther to the right, while the right constraint is unchanged. A symmetric argument applies to the case when $u_i$ is higher than $u_{i+1}$. In this way, the assignment of lower segments $u_i$ to upper segments $\ell_i$ is $x$-monotone.

To complete the realization, we assign each vertex point $p$ in each segment of the upper boundary an $x$-coordinate. To do so, we choose a vertex $p'$ that should have the same $x$-coordinate as $p$. If $p$ is incident to two horizontal segments of the upper boundary, we let $p'=p$; otherwise we let $p'$ be the lower of the two endpoints of the vertical segment incident to $p$.
Let $u_i$ be a horizontal segment incident to $p'$, and let $\ell_i$ be the corresponding lower segment. The mapping from each $p$ to $\ell_i$, defined in this way, associates each vertex of the upper boundary to a segment of the lower boundary, in an $x$-monotone way. We assign each point $p$ an $x$-coordinate interior to the range of $x$-coordinates spanned by $\ell_i$. It is possible to do this in such a way that upper boundary vertices that should have the same $x$-coordinate (because they are connected by vertical segments) share the same $x$-coordinate,  other pairs of upper boundary vertices have distinct $x$-coordinates, and the $x$-coordinates vary monotonically within the set of all upper boundary vertices assigned to the same~$\ell_i$.

In this way we find a monotonic $x$-coordinate assignment that realizes the upper boundary of the polygon for vertex $v$. By monotonicity, the upper boundary cannot cross itself. And by the condition that each upper boundary vertex corresponds to a lower boundary segment of lower height, the realization of the upper boundary remains at each point above the lower boundary, so the upper and lower boundaries cannot cross each other. Therefore, we have constructed a realization of the whole region for $v$ as an $x$-monotone polygon, as desired.
\end{proof}

\subsection{Logical expression}

To recognize the graphs that have simple contact-representations, we will apply Courcelle's theorem~\cite{Courcelle}, according to which every graph property that can described in monadic second-order logic (more specifically a form of this logic called MSO$_2$) can be recognized for graphs of bounded treewidth in linear time. Here, MSO$_2$ is a form of logic in which the variables may represent vertices, edges, sets of vertices, or sets of edges of a graph. The logic allows both universal quantification ($\forall$) and existential quantification ($\exists$) over these variables.
There are three predicates on pairs of variables: equality of variables of the same type ($=$), set membership ($\in$), and incidence between an edge and a vertex (which we represent by the non-standard binary operator $\multimap$). In addition, all of the usual connectives of Boolean logic are available. We will use variables $v_i$ for vertices, $V_i$ for sets of vertices, $e_i$ for edges, and $E_i$ for sets of edges. If $G$ is a graph and $F$ is a formula of this type, then the notation $G\models F$ (``$G$ models $F$'') means that the formula becomes true when quantification and the predicates are given their usual meanings for the vertices and edges of $G$. Because the equality sign has a meaning as a predicate within this logic, we use $\equiv$ to indicate that two formulas are syntactically equal or to assign a name to a formula. When we use such a name within another formula, it means that the definition of that name should be expanded at that point of the formula, eventually producing (possibly after multiple expansions) a formula that uses only the notation described above.

In this logic, for instance, graph $G$ is connected if and only if it cannot be partitioned into two nonempty vertex sets with no edges between them. That is, we can define a formula
\[
\begin{split}
\operatorname{connected}\equiv
\lnot\exists V_1 \Bigl( &
\exists v_2(v_2\in V_1) \wedge \\
& \exists v_3\bigl(\lnot(v_3\in V_1)\bigr)\wedge \\
& \lnot\exists v_4,v_5,e_6\bigl(
v_4\in V_1 \wedge
\lnot(v_5\in V_1) \wedge
e_6\multimap v_4 \wedge
e_6\multimap v_5)\bigr) \Bigr) \\
\end{split}
\]
such that $G\models\operatorname{connected}$
if and only if $G$ is a connected graph. Using similar logic, we can define a formula
$\operatorname{connected\_subgraph}(E_i)$ that is true when the set of edges $E_i$ defines a connected subgraph of the given graph $G$; we need merely add another clause to the third line of the definition of $\operatorname{connected}$ requiring $e_6$ to belong to $E_i$.
Using this formula, we can define another more complicated formula $\operatorname{path}(E_i)$ that is true when $E_i$ is the edge set of a path: a path is a connected subgraph that has two vertices of degree one and all other vertices of degree two, and the degree conditions are straightforward (if a bit tedious) to formulate in MSO$_2$. Similarly a cycle is a connected subgraph in which all vertices have degree exactly two.

A \emph{peripheral cycle} in a graph $G$ is a simple cycle $C$ with the property that, for every two edges $e_1$ and $e_2$ not belonging to $C$, there exists a path in $G$ containing both $e_1$ and $e_2$ whose degree-two vertices are all disjoint from $C$. In a maximal planar graph (or more generally in a 3-connected planar graph) the faces of the unique planar embedding of the graph are exactly the peripheral cycles~\cite{Tut63}. Because the property of being a peripheral cycle admits a simple logical description in MSO$_2$, we can determine within MSO$_2$ which triangles of our given maximal planar graph are faces. Based on that determination we can also construct a formula that is true of an edge set $E_i$ and vertex $v_j$ when $E_i$ is the boundary cycle of $v_j$, and false otherwise.

In order to formulate the characterization of single-polygon realizability from Lemma~\ref{lem:single-face-realizability} in logical terms, we need some way to express an assignment of extended shapes to the edges of $G$. For fixed $h$ there are by Lemma~\ref{lem:few-shapes} only $O(1)$ distinct extended shapes possible, allowing us to express any such assignment within a logical formula with $O(1)$ edge set variables. Possibly the simplest way of doing so is to use one edge set variable for each distinct extended shape, having as its members the edges to which that shape has been assigned. One can quantify over a shape assignment by applying the same quantifier to each of these edge set variables, and then within the quantification using a conjunction with a subformula that requires each edge to belong to exactly one of these sets. With this logical expression of shape assignments, it is again straightforward (but extremely tedious) to formulate a logical expression $\operatorname{height\_triple}(v,A,P,h_1,h_2,h_3)$
that is true when $v$ is a vertex in $G$,
$A$ is an extended shape assignment for $G$,
$P$ is either the upper or lower boundary of $v$ with respect to $A$,
and there exists a point $p$ within the shape of one of the edges of $P$ for which
$\operatorname{left}(p)=h_1$, $\operatorname{height}(p)=h_2$, and $\operatorname{right}(p)=h_3$. Here, the numerical heights are not themselves logical variables; rather, there is one such formula for each of the $O(h^3)$ different possible combinations of heights.

Using this subformula, it is straightforward to express the conditions of Lemma~\ref{lem:single-face-realizability} logically, by asking for the existence of a cycle and four paths that form the boundary cycle for $v$ and the decomposition of this cycle into boundary paths,
and requiring that for these paths and for each realizable triple of heights on the upper boundary there exists a compatible triple of heights on the lower boundary. That is, for each expression $\operatorname{height\_triple}(v,A,P,h_1,h_2,h_3)$ we write a logical implication from the expression applied to the upper boundary to a disjunction of expressions for compatible triples applied to the lower boundary, and we take the conjunction of all such implications.

We summarize the discussion of this section by the following:

\begin{claim}
\label{clm:logical-expression}
For any fixed height $h$ there exists a logical formula
$\operatorname{realizable}(v,A)$
that is true whenever $A$ is an extended shape assignment (described as a system of logical variables) for which there exists an $x$-monotone polygon whose boundary realizes the extended shapes assigned by $A$ to the edges in $G$ incident to $v$.
\end{claim}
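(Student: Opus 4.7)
The plan is to take Lemma~\ref{lem:single-face-realizability} as the combinatorial characterization of single-polygon realizability and translate each of its two conditions into an MSO$_2$ formula. For a fixed height $h$, Lemma~\ref{lem:few-shapes} bounds the number of distinct extended shapes by a constant $N = 2^{O(h^2\log h)}$, so I would first fix an enumeration $S_1,\dots,S_N$ of them. An extended shape assignment $A$ is then encoded as a tuple of edge-set variables $(E_1,\dots,E_N)$, where $E_i$ holds the edges assigned shape $S_i$; a side-condition asserts that these sets partition $E(G)$. Existential and universal quantification over shape assignments then becomes a block of $N$ quantifiers, which is a constant block for fixed $h$.

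Next I would express the first bullet of Lemma~\ref{lem:single-face-realizability}, that $v$ has a monotone boundary. Using the MSO$_2$-definable predicate $\operatorname{boundary\_cycle}(E,v)$ already sketched in the excerpt (obtained from the characterization of faces as peripheral cycles), I would existentially quantify an edge set $E$ together with four edge subsets $E_{\mathrm{top}}, E_{\mathrm{bot}}, E_{\mathrm{left}}, E_{\mathrm{right}} \subseteq E$ forming a cyclic decomposition into paths (possibly degenerate to a single vertex), and require: (i) each neighbor $u$ of $v$ belongs to the path whose role matches the shape $S_i$ assigned to edge $uv$; (ii) the cyclic order of the four paths around $v$ matches the orientation of the embedding. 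The most delicate point here is step (ii), since the orientation is reversed at the outer face; I would handle this with an outer case split on whether $v$ is incident to the outer face, which is MSO$_2$-detectable by singling out the unique peripheral cycle declared as outer.

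For the second bullet, I need the predicates $\operatorname{height\_triple}(v,A,P,h_1,h_2,h_3)$ mentioned in the excerpt, where $P\in\{\mathrm{top},\mathrm{bot}\}$ is one of the two quantified boundary paths. Because each extended shape is a fixed finite object, the set of triples $(\operatorname{left}(p),\operatorname{height}(p),\operatorname{right}(p))$ realized along $P$ can be read off combinatorially from the sequence of shape labels along $P$. Since there are only $O(h^3)$ possible triples and $N = O(1)$ shapes, I would define $\operatorname{height\_triple}(v,A,P,h_1,h_2,h_3)$ as a finite disjunction over: the edge $e$ of $P$ containing the witnessing point $p$; the shape $S_i$ assigned to $e$; which $S_i$-segment of $e$ contains $p$; and the maximum-height shapes appearing strictly to the left and right of $e$ within $P$. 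The last two quantities are themselves MSO$_2$-expressible using path-interval predicates on the edge set $E_{\mathrm{top}}$ or $E_{\mathrm{bot}}$ (``there exists an edge $e'$ on the same side, strictly on one side of $e$, bearing a shape of maximum height~$h$, with no later edge of greater height'').

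Finally I would assemble
\[
\operatorname{realizable}(v,A) \;\equiv\; \operatorname{monotone\_boundary}(v,A) \wedge \bigwedge_{(h_1,h_2,h_3)} \Phi_{h_1,h_2,h_3}(v,A),
\]
where each $\Phi_{h_1,h_2,h_3}$ asserts that $\operatorname{height\_triple}(v,A,\mathrm{top},h_1,h_2,h_3)$ implies the disjunction of $\operatorname{height\_triple}(v,A,\mathrm{bot},h_1',h_2',h_3')$ over all triples with $h_1'<h_1$, $h_2'<h_2$, $h_3'<h_3$. Correctness is immediate from Lemma~\ref{lem:single-face-realizability}. The main obstacle is purely bookkeeping: the formula size grows rapidly with $h$, but for every fixed $h$ the size is a (large) constant, which is all Courcelle's theorem requires. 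A secondary subtlety is ensuring that the four-path decomposition in the monotone-boundary formula respects the combinatorial embedding rather than some arbitrary cyclic splitting of the boundary cycle; this is what forces the outer-face case split described above.
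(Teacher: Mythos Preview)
Your proposal is correct and follows essentially the same approach as the paper: encode the extended shape assignment by one edge-set variable per shape, express the monotone-boundary condition via the boundary cycle and a four-path decomposition, build the $\operatorname{height\_triple}$ predicates as a finite disjunction, and assemble $\operatorname{realizable}(v,A)$ as the conjunction of the monotone-boundary condition with the implications from upper-boundary triples to compatible lower-boundary triples. Your explicit handling of the outer-face orientation case is a detail the paper leaves implicit in its definition of monotone boundary, but the two arguments are otherwise the same.
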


\subsection{Global realizability}

As we now prove, local realizability of each polygon in a contact-representation implies global realizability of the entire representation. This local-global principle is analogous to similar local-global principles for realizability of upward planar graph drawings~\cite{BDLM94}, rectilinear planar graph drawings~\cite{Tam87}, level planar graph drawings, and flat-foldable graph drawings~\cite{ADD+18}.

\begin{lemma}
\label{lem:one-for-all-and-all-for-one}
Let $G$ be a maximal planar graph, and suppose that for an extended shape assignment to $G$ of height $h$
each polygon is individually realizable (per Lemma~\ref{lem:single-face-realizability}). Then there exists a simple contact-representation of height $h$ for $G$.
\end{lemma}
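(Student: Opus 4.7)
The plan is to turn the locally consistent extended-shape assignment into a global contact-representation by determining $x$-coordinates everywhere. The assignment already fixes all $y$-coordinates, the combinatorial sequence of segments along every polygon-polygon boundary, and the above/below and left/right relations between adjacent polygons, so the only remaining freedom lies in the $x$-coordinates of the corners and junctions of the boundary network. Lemma~\ref{lem:single-face-realizability} provides, for each vertex $v$, an $x$-monotone polygon realizing the prescribed shapes on $\partial P(v)$, which I would interpret as a local witness that the subsystem of $x$-coordinate constraints restricted to $\partial P(v)$ is satisfiable.

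My next step is to observe that these local constraint systems agree on their shared variables: whenever $uv$ is an edge of $G$, the extended shape associated to $uv$ is seen identically from $\partial P(u)$ and from $\partial P(v)$, so the ordering and equality constraints it imposes are the same in both local systems. Hence there is a single well-defined global constraint system, and the task reduces to gluing the local witnesses into a global solution. I would do this by induction on a canonical ordering $v_1,\dots,v_n$ of the triangulated planar graph $G$ (in the spirit of de Fraysseix--Pach--Pollack): at each step, the neighbors of $v_i$ among $v_1,\dots,v_{i-1}$ form a contiguous path on the outer face of the already-processed subgraph, so the portion of $\partial P(v_i)$ that must be matched to previously placed polygons is a single contiguous arc.

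When inserting $P(v_i)$, I would take a local realization of it from Lemma~\ref{lem:single-face-realizability}, compose it with an $x$-monotone transformation that aligns its shared arc with the $x$-coordinates already fixed in the partial representation, and place the remainder of $P(v_i)$ freely to its right within the height strip of height $h$. The main obstacle will be verifying that such an $x$-monotone transformation exists even though the shared arc may weave up and down through many heights; this I would handle by appealing to the $x$-monotonicity of the local realization together with the fact that both the shared arc's shape (fixed by the extended-shape assignment) and the $x$-coordinates it must match (inherited from previously inserted, also $x$-monotone, polygons) induce compatible left-to-right orderings on their common corners and junctions, so the alignment reduces to a one-dimensional monotone interpolation. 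Once the induction completes, all $x$-coordinates are determined, the resulting polygons are $x$-monotone and interior-disjoint, and they tile a bounding rectangle of height $h$, giving the desired simple contact-representation.
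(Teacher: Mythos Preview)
Your high–level plan (fix the $y$–data from the shape assignment, realize polygons one at a time, and glue each new local realization to the partial picture by an $x$–monotone reparameterization) is exactly the mechanism the paper uses.  The divergence is in the \emph{order} in which you process the polygons, and that is where your argument breaks.

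The paper does \emph{not} use a canonical (FPP–type) ordering.  It orders the vertices by the height of the lowest boundary segment in their assigned shapes, i.e., it sweeps bottom–to–top in the representation.  The point of this geometric ordering is that, when $P(v)$ is reached, every polygon sharing $P(v)$'s \emph{lower} boundary chain has already been placed (Lemma~\ref{lem:above-below}), so the only constrained part of $\partial P(v)$ is its bottom chain---a single $x$–monotone curve.  Matching one $x$–monotone curve against a fixed sequence of $x$–coordinates with the same left–to–right order is automatic, and the \emph{top} chain is then determined by the same reparameterization but as yet unconstrained, so no conflict can arise.

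With a canonical ordering you only know that the already–placed neighbours of $v_i$ form a \emph{combinatorially} contiguous arc on the boundary cycle of $v_i$.  That arc need not sit on a single chain of $P(v_i)$: it can (and in general will) include pieces of both the upper and the lower boundary chains, plus a left or right side.  Now the alleged $x$–monotone transformation must simultaneously match fixed $x$–coordinates on the top chain and fixed $x$–coordinates on the bottom chain.  Those two families of target $x$–coordinates were chosen in earlier, independent insertion steps, and nothing forces their interleaving to agree with the interleaving of the corresponding corners in your local realization of $P(v_i)$.  Your sentence ``both \dots\ induce compatible left-to-right orderings'' is precisely the step that needs proof and, as stated, is not true in general; a single monotone map of the $x$–axis cannot reconcile two independently fixed monotone sequences unless they are already mutually consistent.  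The same issue undermines ``place the remainder of $P(v_i)$ freely to its right'': under a canonical ordering the unshared portion of $\partial P(v_i)$ need not open into empty space at all.

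In short, the gluing step works because the paper's ordering guarantees the constrained part of each polygon is exactly its bottom chain; your ordering loses that guarantee, and with it the argument.
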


\begin{proof}
We may order the vertices of $G$ (and with them, the polygons of the realization) by the heights of their lowest boundary segments in the shape assignment, breaking ties arbitrarily. We will construct the realization (again, allowing real numbers as $x$-coordinates) for each polygon in order from lowest to highest in this order.
By Lemma~\ref{lem:above-below}, the bottom boundary of each polygon $P$ will have already been determined from the realizations of its lower neighboring polygons, at the time we realize polygon $P$.
Given an arbitrary realization of polygon $P$ (assumed to exist by the preconditions of this lemma), we may find an $x$-monotone transformation from that realization to the fixed set of $x$-coordinates of its vertices in the realization of the polygons below $F$. Applying the same transformation to the upper boundary of $P$ produces a realization of $P$ compatible with the lower polygons. This transformed copy of $P$ can be added to the realization of all polygons up to $F$ in the height order. By induction on position within the height order, all interior polygons are simultaneously realizable. Once they are all realized, the exterior polygon is automatically realized as~well.
\end{proof}

\subsection{Recognition algorithm}

\begin{theorem}
There exists an algorithm for recognizing $n$-vertex graphs of simple \gridminor height $h$, and constructing a simple contact-representation for them, in time $O(f(h) n)$ for a computable function $f$.
\end{theorem}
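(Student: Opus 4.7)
The plan is to combine the single-polygon realizability criterion (Lemma~\ref{lem:single-face-realizability}), its logical encoding (Claim~\ref{clm:logical-expression}), and the local-global principle (Lemma~\ref{lem:one-for-all-and-all-for-one}) into a single MSO$_2$ sentence whose truth on $G$ is equivalent to $\sgmh(G)\leq h$, and then invoke Courcelle's theorem. First I would observe that the number of distinct extended boundary shapes usable in a minimum-bend-complexity contact-representation of height $h$ is bounded by the constant $N(h)=2^{O(h^2\log h)}$ from Lemma~\ref{lem:few-shapes}, so an entire extended shape assignment on the edges of $G$ can be encoded by $N(h)$ edge-set variables $E_1,\dots,E_{N(h)}$ with the side condition that every edge belongs to exactly one $E_j$. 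Call this combined condition $\operatorname{assignment}(E_1,\dots,E_{N(h)})$.

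Next I would form the sentence
\[
\Phi_h \;\equiv\; \exists E_1,\dots,E_{N(h)}\Bigl(\operatorname{assignment}(E_1,\dots,E_{N(h)}) \;\wedge\; \forall v\,\operatorname{realizable}(v,E_1,\dots,E_{N(h)})\Bigr),
\]
where $\operatorname{realizable}$ is the formula from Claim~\ref{clm:logical-expression}. By Lemma~\ref{lem:one-for-all-and-all-for-one}, $G\models\Phi_h$ if and only if $G$ admits a simple contact-representation of height $h$, i.e.\ $\sgmh(G)\leq h$. The key point is that $\Phi_h$ has fixed size depending only on $h$, since both the number of edge-set variables and the size of $\operatorname{realizable}$ depend only on $h$.

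To apply Courcelle's theorem we need bounded treewidth. Since $\pw(G)\leq \sgmh(G)$ from the ``simple results'' in Section~\ref{sec:definitions}, if $\sgmh(G)\leq h$ then the treewidth of $G$ is at most $h$; a standard first pass can either compute a tree decomposition of width $O(h)$ or reject if the pathwidth already exceeds $h$, both in time $f(h)n$ using Bodlaender's algorithm. Courcelle's theorem then decides $G\models\Phi_h$ in time $f'(h)n$. For the construction part I would use the standard self-reduction: for each edge $e$ and each candidate shape $s$, test whether $G$ still satisfies the restricted sentence that forces $e\in E_{j(s)}$; this fixes a shape assignment edge by edge in $O(f(h)n\cdot m)$, which can be tightened to $O(f(h)n)$ either by doing the fixing on the tree decomposition directly or by invoking the constructive (witness-producing) form of Courcelle's theorem. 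Once the extended shape assignment is fixed, the realization of each polygon produced in the proof of Lemma~\ref{lem:single-face-realizability} can be assembled in height order as in Lemma~\ref{lem:one-for-all-and-all-for-one}, yielding the contact-representation in linear time.

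The main obstacle I foresee is not correctness but bookkeeping: writing $\operatorname{realizable}$ carefully enough that the triples $(\operatorname{left}(p),\operatorname{height}(p),\operatorname{right}(p))$ and the partition of the boundary cycle into the four monotone pieces are all quantified over edges in the underlying graph rather than over geometric points. This is handled by the observation that for fixed $h$ there are only $O(h^3)$ height triples and $N(h)$ shapes, so the relevant ``points'' are in bijection with a constant-size enumeration and can be hard-coded as constants of the formula, leaving only the graph-theoretic ordering data (the boundary cycle and its decomposition into four paths) to be expressed by MSO$_2$ variables. With this, $\Phi_h$ is a bona fide MSO$_2$ sentence and the whole algorithm runs in time $O(f(h)n)$ as required.
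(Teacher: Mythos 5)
Your proposal is correct and follows essentially the same route as the paper: encode an extended-shape assignment by a constant number (depending on $h$) of edge-set variables, express per-vertex realizability via Claim~\ref{clm:logical-expression}, glue these with the local-global principle of Lemma~\ref{lem:one-for-all-and-all-for-one} into the sentence $\exists A\,\forall v\,\operatorname{realizable}(v,A)$, check bounded treewidth, and apply (the witness-producing form of) Courcelle's theorem. The only differences are presentational elaborations (the self-reduction alternative for extracting the assignment, the remark about hard-coding height triples as constants), which match what the paper already does implicitly.
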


\begin{proof}
We construct the formula $\operatorname{realizable}(v,A)$ for $h$, as described in Claim~\ref{clm:logical-expression},
and from it construct the formula
\[
\operatorname{representable}\equiv\exists A\Bigl(\forall v\bigl(\operatorname{realizable}(v,A)\bigr)\Bigr)
\]
(as before, expanding the mapping $A$ from edges to extended shapes into a large number of edge set variables and expanding the existential quantifier on $A$ into many existential quantifiers for each of these variables).
By Lemma~\ref{lem:one-for-all-and-all-for-one}, $G\models\operatorname{representable}$ if and only if $G$ has a simple \gridminor contact-representation of height~$h$.
Next, by standard methods we check that $G$ has treewidth at most $h$ (true of all graphs of simple \gridminor height $h$) and if so construct a tree-decomposition for $G$. If not, we reject $G$ as not having simple \gridminor height $h$. Finally, we use Courcelle's theorem to check from the formula for $\operatorname{representable}$ and from the tree-decomposition of $G$ whether $G\models\operatorname{representable}$. Standard variations of Courcelle's theorem for formulas whose outer quantifiers are existential, such as the formula for $G\models\operatorname{representable}$, can find values for the quantified variables for which the rest of the formula is true, and we use these to find the extended shapes of the boundaries in a realizable simple contact-representation of $G$. From these, it is straightforward to construct the contact-representation itself, along the lines of the proofs of Lemmas \ref{lem:single-face-realizability} and~\ref{lem:one-for-all-and-all-for-one}.
\end{proof}

Because the logical formula used to express contact-representation realizability has size exponential in $h^2\log h$,
and Courcelle's theorem has non-elementary dependence on formula length,
our simple \gridminor height algorithm is not practical. It remains open to find a practical fixed-parameter tractable algorithm for the same problem.

\fi

\section{Outlook }  

In this paper, we studied two parameters of planar triangulated
graphs, the homotopy height (well-known in computational geometry
but not previously used for graph drawing) and the \gridminor height
(related to contact-representations, but not explicitly expressed as
a graph parameter before).  We argue that these two seemingly unrelated
parameters are actually equal, and that they, as well as their variations
that require simplicity in some sense, can serve as lower bounds for
the height of straight-line drawings of planar graphs.  Their equality
also implies that testing whether homotopy height is at most $k$ 
is fixed-parameter tractable in $k$.   We leave many open problems:

\begin{itemize}
\item What is the complexity of computing these various graph parameters?
	In particular, while it is strongly believed that computing the
	minimum height of a planar drawing is NP-hard, we are not aware
	of any proof of this. 
        Similarly, it is not known whether computing the homotopy height, or equivalently the \gridminor height, is NP-hard or polynomial. The same goes for the simple variants.
	On the other hand, computing the pathwidth is NP-hard even for planar graphs \cite{Gustedt},
	while computing the outerplanarity is polynomial \cite{BienstockMonma}.
      \item The trivial minor-closedness of \gridminor height proves the existence of an FPT algorithm to compute it when parameterized by the output. However, this algorithm relies on finding the forbidden minors, which are unknown. Finding an explicit algorithm for this problem\iffull, for instance along the lines of Section~\ref{sec:alg-sgmh},\fi{} is still open. \iffull And because Section~\ref{sec:alg-sgmh} uses Courcelle's theorem, its algorithm for simple \gridminor height is not very efficient; finding a more efficient algorithm again remains open.\fi

\item We focused on straight-line drawings, but \emph{poly-line drawings}
of $G$ (i.e., straight-line drawings of some subdivision $G'$ of $G$)
are also of interest.  Letting $\PLh(G)$ be the smallest height of such
drawings, one sees that $\gmh(G)\leq \sgmh(G') \leq \SLh(G')\leq \PLh(G)$,
but is it true that $\sgmh(G)\leq \PLh(G)$?



\end{itemize}

\iffull
\bibliographystyle{plainurl}
\else
\newpage
\bibliographystyle{splncs}
\fi
\bibliography{biblio}

\end{document}